\documentclass[sigconf, nonacm]{acmart}

\usepackage{enumitem}

\usepackage{tikz}

\settopmatter{printfolios=true}

\AtBeginDocument{}

\newcommand\blfootnote[1]{%
  \begingroup
  \renewcommand\thefootnote{}\footnote{#1}%
  \addtocounter{footnote}{-1}%
  \endgroup
}

\title{The Impact of Partial Computations on the Red-Blue Pebble Game}

\author{P\'al Andr\'as Papp}
\orcid{0009-0005-6667-802X}
\email{pal.andras.papp@huawei.com}
\affiliation{
  \department{Computing Systems Lab}
  \institution{Huawei Zurich Research Center}
  \city{Zurich}
  \country{Switzerland}
}

\author{Aleksandros Sobczyk}
\email{aleksandros.sobczyk@h-partners.com}
\orcid{0000-0002-1602-8329}
\affiliation{
  \department{Computing Systems Lab}
  \institution{Huawei Zurich Research Center}
  \city{Zurich}
  \country{Switzerland}
}

\author{Albert-Jan N. Yzelman}
\email{albertjan.yzelman@huawei.com}
\orcid{0000-0001-8842-3689}
\affiliation{
 \department{Computing Systems Lab}
  \institution{Huawei Zurich Research Center}
  \city{Zurich}
  \country{Switzerland}
}

\settopmatter{printacmref=false}

\begin{document}

\begin{abstract}
We study an extension of the well-known red-blue pebble game (RBP) with partial computation steps, inspired by the recent work of Sobczyk~\cite{partial}. While the original RBP assumes that we need to have all the inputs of an operation in fast memory at the same time, in many concrete computations, the inputs can be aggregated one by one into the final output value. These partial computation steps can enable pebbling strategies with much smaller I/O cost, and in settings where such a step-by-step aggregation is possible, this extended red-blue pebble game offers a much more realistic cost model.

We establish the fundamental properties of this partial-computing red-blue pebble game (PRBP), and compare it to the original RBP. We begin with some simple examples where allowing partial computations can decrease the optimal I/O cost. It is also shown that the cost can decrease by up to a linear factor this way, but in general, it is NP-hard to decide whether partial computations allow for a smaller cost in a specific DAG. We then discuss how $S$-partitions, a crucial tool for deriving I/O lower bounds in RBP, can be adapted to the PRBP model. These new tools are then used to establish lower bounds on the I/O cost of some prominent computational tasks. Finally, we also adapt a hardness result from RBP, showing that the optimum cost is still NP-hard to approximate in PRBP to any reasonable factor.
\end{abstract}

\begin{CCSXML}
<ccs2012>
<concept>
<concept_id>10003752.10003753</concept_id>
<concept_desc>Theory of computation~Models of computation</concept_desc>
<concept_significance>300</concept_significance>
</concept>
<concept>
<concept_id>10010520.10010575.10010580</concept_id>
<concept_desc>Computer systems organization~Processors and memory architectures</concept_desc>
<concept_significance>300</concept_significance>
</concept>
</ccs2012>
\end{CCSXML}

\ccsdesc[300]{Theory of computation~Models of computation}
\ccsdesc[300]{Computer systems organization~Processors and memory architectures}

\keywords{Red-blue pebble game, partial computation, I/O cost, lower bounds}

\maketitle

\blfootnote{\copyright P\'al Andr\'as Papp, Aleksandros Sobczyk and Albert-Jan N. Yzelman, 2025. This is the author's full version of the work, posted here for personal use. Not for redistribution. The definitive version was published in the 37th ACM Symposium on Parallelism in Algorithms and Architectures (SPAA 2025), https://doi.org/10.1145/3694906.3743320.}

\section{Introduction}

The optimal execution of a complex computational task with limited resources is a fundamental question in computer science, with applications ranging from scientific simulations to machine learning. In many of these applications, the main performance bottleneck is in fact the vertical data movements between different layers of memory, known as I/O operations. One of the prominent tools to analyze this problem is the red-blue pebble game of Hong and Kung~\cite{RBpebbling1}, which captures the \emph{I/O complexity} of a computation in a two-level memory hierarchy: it studies the amount of vertical communication required when we want to execute the computation using a fast memory of limited size (e.g.\ cache), and a slow memory of much larger capacity (e.g.\ RAM). This model has been a vital tool to measure I/O complexity in different computations, and has been studied extensively in the past decades.

As in most pebble games, the computation in red-blue pebbling is modeled as a directed acyclic graph (DAG) $G=(V,E)$, with the nodes $V$ representing subtasks or operations, and a directed edge $(u,v) \in E$ representing that the output of operation $u$ is required as an input for operation $v$. We denote the number of nodes in $G$ by $n$, and the maximum input and output degree by $\Delta_{in}$ and $\Delta_{out}$, respectively. We assume that $G$ has no isolated nodes.

The red-blue pebble game, which we abbreviate by \emph{RBP}, assumes that we have a fast memory of some limited capacity $r$, and a slow memory of unlimited capacity. In order to compute a value, all of its inputs need to be in fast memory. During the game, a red pebble placed on a node $v$ represents the fact that that the output data of $v$ is currently kept in fast memory, and a blue pebble on $v$ represents that it is currently kept in slow memory. In the initial state, only the source nodes of $G$ have blue pebbles on them. A pebbling strategy can then apply the following transition rules:
\begin{enumerate}[label=\arabic*), leftmargin=1.5em, topsep=4pt, itemsep=2pt]
 \item \textsc{save}: place a blue pebble on any node $v$ that has a red pebble.
 \item \textsc{load}: place a red pebble on any node $v$ that has a blue pebble.
 \item \textsc{compute}: if all the inputs of a non-source node $v$ have a red pebble on them, then also place a red pebble on $v$.
 \item \textsc{delete}: remove a red pebble from any node $v$.
\end{enumerate}

A pebbling consists of a sequence of rule applications such that finally, all the sink nodes of $G$ (i.e.\ the outputs of the computation) have a blue pebble. Furthermore, we require that at any point during the pebbling, the number of red pebbles placed on $G$ is at most $r$, i.e.\ the fast memory capacity is not exceeded. The cost of a pebbling strategy is defined as the total number of save and load operations executed during the pebbling. Intuitively, the compute and delete steps are considered free; indeed, these are often significantly cheaper in practice.

One fundamental property of RBP is that it requires all inputs of an operation to be collected into fast memory at the same time. Specifically, if a node $v$ has $d$ incoming edges in our DAG, then any valid pebbling in RBP requires at least $(d\!+\!1)$ red pebbles simultaneously to compute this node: $d$ pebbles on the in-neighbors of $v$, and one more on $v$. Hence a valid pebbling only exists if $r \! \geq \! (\Delta_{in}\!+\!1)$.

In contrast to this, in many practical cases, the subtasks in our computation are associative and commutative operators, e.g.\ the addition of scalars or matrices, applied on a larger set of inputs. In this case, the separate inputs could be added to the aggregated value in any specific order, or if desired, we could even decide to first aggregate e.g.\ only half of the inputs of a node $v$, store the temporary result in (fast or slow) memory, then do some computations in another part of the DAG, and finally continue processing the remaining inputs of $v$.

The idea of incorporating such partial computation steps into RBP was recently raised by Sobczyk~\cite{partial}. Allowing partial computations not only enables us to analyze I/O costs when $r  < (\Delta_{in}\!+\!1)$, but it can also drastically reduce the number of required I/O steps for DAGs with $r \geq (\Delta_{in}\!+\!1)$. Thus, while this extension restricts the scope of RBP to computational DAGs with specific kind of operators, it provides a much more accurate model of the actual I/O cost of the computation in these domains.

In this paper, we extend the red-blue pebble game with such partial computations. We begin by discussing the fundamental properties of this partial-computing red-blue pebble game (\emph{PRBP}), and showing some example DAGs where allowing partial compute steps indeed decreases the number of required I/O operations. We then analyze how the optimum cost on a given DAG differs in RBP and PRBP in general. In particular, we show that partial computations can decrease the required I/O cost by up to a linear factor, but in a general DAG, it is NP-hard to decide whether they reduce the optimum cost at all.

We then focus on developing lower bounds on the I/O cost of specific computations in PRBP. We first show in a counterexample that lower bounds derived from $S$-partitions, which is a standard tool for RBP, do not directly carry over to PRBP. Instead, we discuss two different ways to modify the $S$-partition concept so that the resulting bounds also apply to PRBP. We then use these newly developed tools to derive I/O lower bounds on some crucial computational tasks, such as fast Fourier transform, matrix multiplication, or self-attention in neural networks. Since the optimal solutions to these problems have been studied extensively, and their I/O complexity matches known lower bounds from RBP, it is not surprising that the obtained PRBP lower bounds are identical to those in RBP, i.e.\ partial computations do not improve upon the state-of-the-art in these DAGs. Nonetheless, our formal proof of this fact is still a crucial step towards understanding the properties of these problems, and I/O complexity in general.

Finally, by adapting a proof construction from RBP to PRBP, we show that the best pebbling strategy in PRBP is still NP-hard to approximate to an $n^{1-\varepsilon}$ factor for any $\varepsilon>0$.

\section{Related Work}

The concept of pebble games on graphs has been widely used to capture various different aspects of computing, such as time-memory trade-offs, non-determinism or reversibility~\cite{cook1973observation, otherpebbling2, otherpebbling3}.

The red-blue pebble game was introduced by Hong and Kung to analyze the I/O complexity of a computational task~\cite{RBpebbling1}. Their seminal paper also presents $S$-partitions, a vital tool to derive corresponding I/O lower bounds on specific structured DAGs. These $S$-partitions were used to establish lower bounds for various computations both in~\cite{RBpebbling1} and in several following works \cite{elango2015characterizing, ranjan2012upper}. Other papers have used different methods to derive I/O lower bounds on concrete computations~\cite{jain2020spectral, RBpebbling5, scott2015matrix}. Recently, there is also a growing interest in using RBP to derive bounds on the I/O cost of crucial operations in modern neural networks~\cite{ml1, ml2}.

Besides lower bounds, another natural direction of work studies the computational complexity of finding the best pebbling strategy in RBP. This includes NP-hardness proofs for different variants of the model~\cite{RBpebbling2, RBpebbling3, RBpebbling8} and inapproximability results~\cite{mpp, RBpebbling3}, as well as approximation algorithms for special cases~\cite{RBpebbling5, RBpebbling7}.

There are also several papers that discuss the idea of generalizing the RBP model to multiple processors or to deeper memory hierarchies~\cite{mpp,RBpebbling7, savage1995extending,savage2008unified,elango2014characterizing}. However, these generalized settings are typically very complex, and thus the theoretical results in them are limited to concrete DAGs or special cases.

A recent preprint~\cite{partial} raises the idea of extending RBP with partial computations. Our work is inspired by this approach, but rather different both in terms of the model and the direction of the results. In particular, the model of~\cite{partial} assumes that all nodes have an initialization value that always needs to be loaded from slow memory, motivated by e.g.\ evaluating polynomials. This incurs an extra I/O step for each sink non-sink node, and hence results in a significantly different (often much higher) I/O cost for most DAGs than either the original RBP or our model. Due to this, most claims in our paper do not carry over to the setting of~\cite{partial}. In contrast to this, our PRBP model is a direct extension of RBP, where \emph{any} RBP pebbling strategy translates to a partial-computing PRBP solution of the same I/O cost.

In terms of results,~\cite{partial} mainly focuses on DAGs that are not even computable in RBP due to $r \! < \! (\Delta_{in} \! + \! 1)$, and shows NP-hardness and approximation methods for the edge case when the cache size is only $r\!=\!2$. In contrast, we focus on understanding the properties of PRBP in general, and its relation to RBP, for example, how the optimal I/O costs differ between the two models, and which general results from RBP carry over to PRBP.

Regarding the actual model definition, we also change the terminology from~\cite{partial} slightly to align it with the original RBP terminology. Specifically, we use ``light red'' and ``dark red'' pebbles to denote the two different states of values kept in cache, to signify that these all correspond to red pebbles in the original RBP. Furthermore, the incoming edges of a node that are already aggregated into the final value are not deleted, but rather marked in our case; this indicates that the structure of the computation remains unchanged, and also allows to extend the setting to re-computations.

\section{Pebbling With Partial Computations}

The partial-computing red-blue pebble game (PRBP) is defined as follows. Similarly to the original RBP, we have red and blue pebbles, which indicate that the output value of a node $v$ is stored in fast memory and slow memory, respectively. However, in PRBP, we now have two different kinds of red pebbles: a \emph{light red} pebble on node $v$ indicates that the current value of $v$ is also up-to-date in slow memory, whereas a \emph{dark red} pebble indicates that the newest value of $v$ is only present in fast memory, and thus, if ever needed in the future, it has to be first saved to slow memory before deletion. The total number of light and dark red pebbles on the DAG is again bounded at any point by the fast memory size $r$. The role of blue pebbles remains unchanged: they still represent values in slow memory, and their number is unlimited.

At any point during a PRBP pebbling, a node $v$ will have one of the following subsets of pebbles on it:
\begin{itemize}[leftmargin=1.5em, topsep=4pt, itemsep=1.5pt]
 \item no pebble, if its value is not stored anywhere,
 \item a blue pebble, if its value is only present in slow memory,
 \item a blue and a light red pebble, if its current value is present in both fast and slow memory,
 \item only a dark red pebble, if the value of $v$ has been updated since the last I/O operation on $v$, and hence the new value is only available in fast memory currently.
\end{itemize}

Since the computation of a node $v$ can now happen in multiple iterations, we also \emph{mark} the incoming edges of $v$ for the inputs that have already been aggregated into the current value of $v$. The final value of $v$ only becomes available when all of $v$'s incoming edges are marked; it is only after this point that the computation of $v$ is finished. We can then start using $v$ as an input for its successor operations, i.e.\ start marking the output edges of $v$.

The concrete transition rules for PRBP are as follows:
\begin{enumerate}[label=\arabic*), leftmargin=1.5em, topsep=4pt, itemsep=1.5pt]
 \item \textsc{save}: replace a dark red pebble on any node $v$ by a blue and a light red pebble.
 \item \textsc{load}: place a light red pebble on any node $v$ that has a blue pebble.
 \item \textsc{partial compute}: consider an unmarked edge $(u,v)$ such that: (i) all incoming edges of $u$ are marked, (ii) there is a (light or dark) red pebble on $u$, and (iii) there is either a (light or dark) red pebble on $v$, or no pebble on $v$ at all. Then replace all pebbles on $v$ by a dark red pebble, and mark the edge $(u,v)$.
 \item \textsc{delete}: remove a light red pebble from any node $v$, or a dark red pebble from a node $v$ that has all of its output edges marked.
\end{enumerate}

The save and load rules naturally implement the transitions between the states discussed above. The partial compute rule aggregates another input into the value of $v$; the conditions ensure that (i) $u$ is already fully computed, (ii) $u$ is in fast memory, and (iii) either the current value of $v$ is in fast memory, or we are just beginning to compute $v$. Finally, the delete rule ensures that values from fast memory can only be erased if they are either up-to-date in slow memory, or not needed anymore. This prohibits us from marking some incoming edges of $v$, deleting $v$'s red pebble, and then later getting a red pebble on $v$ simply by marking another edge; if we want to remove the red pebble from $v$ before finishing its computation, we need to save it to slow memory first.

Similarly to RBP, the initial state only has blue pebbles on the source nodes, and all edges are unmarked. In the terminal state, we now not only require that every sink has a blue pebble, but also that every edge of the DAG is marked; this indicates that all the computations have indeed been fully executed. At any point, the number of red pebbles on the DAG can be at most $r$. As in RBP, the cost of a pebbling is defined as the total number of save and load operations.

Note that in contrast to RBP, this game allows a valid pebbling for any DAG with as few as $r \! = \! 2$ red pebbles, by considering a topological order of $G$ and marking each incoming edge of the next node, with the inputs loaded and outputs saved when necessary.

In our paper, we consider the \emph{one-shot} version of RBP and PRBP, which means that during a pebbling, the compute rule can be applied only once for every node $v \! \in \! V$ in RBP, and only once for every edge $(u,v)\! \in \! E$ in PRBP. One-shot RBP is a prominent version in the literature~\cite{RBpebbling3, RBpebbling7, elango2014characterizing, elango2015characterizing}; we focus on this variant because allowing so-called re-computation steps in PRBP would raise further modeling questions. In Section~\ref{sec:models}, we briefly discuss other variants of RBP, and how the PRBP model extends to these settings.

For a given DAG $G$ and choice of $r$, we will use $\texttt{OPT}_{RBP}$ and $\texttt{OPT}_{PRBP}$ to denote the cost of the optimal pebbling strategy in RBP and PRBP, respectively. Regarding notation, we also commonly use $[d]$ to denote the set of integers $\{1, ..., d\}$.

\section{Fundamental properties of PRBP} \label{sec:io_relate}

One fundamental question about this extended model is how the partial compute steps affect the optimal I/O cost of a specific computation. Note that we only compare $\texttt{OPT}_{RBP}$ and $\texttt{OPT}_{PRBP}$ for cases when $(\Delta_{in}+1) \leq r$, i.e.\ a valid pebbling exists in both RBP and PRBP. 

First, one can observe that any pebbling strategy in RBP can easily be converted into a PRBP strategy of the same cost, by simply taking any compute step in RBP, and separating it into at most $\Delta_{in}$ consecutive partial compute steps in PRBP.

\begin{proposition}
For any DAG and any $r \geq (\Delta_{in}\!+\!1)$, we have \[\texttt{OPT}_{PRBP} \, \leq \, \texttt{OPT}_{RBP} \, . \]
\end{proposition}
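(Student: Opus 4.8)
The plan is to take an arbitrary optimal one-shot RBP pebbling $P$ of $G$ and simulate it step by step by a PRBP pebbling $P'$ of at most the same I/O cost; the proposition then follows, since $\texttt{OPT}_{PRBP}$ is bounded by the cost of $P'$. I would use the obvious correspondence between configurations: a node with no pebble stays empty; a node with only a blue pebble keeps only its blue pebble; a node with a red pebble whose value is also in slow memory (so it also carries a blue pebble) gets a light red pebble next to that blue pebble; and a node with a red pebble whose value is not in slow memory gets a single dark red pebble. Under this map the number of red pebbles (light plus dark) in $P'$ equals the number of red pebbles in $P$ at every synchronized moment, so the cache bound $r$ is automatically respected. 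I would also maintain the invariant that a node carries a pebble in $P'$ only if all of its incoming edges are marked --- equivalently, an edge $(u,w)$ is marked in $P'$ exactly once $w$ has been computed in $P$. This holds initially because only the source nodes carry (blue) pebbles, and they have no incoming edges. We may also assume that $P$ performs no redundant \textsc{save} or \textsc{load} (a \textsc{save} only on a node with a red but no blue pebble, a \textsc{load} only on a node with no red pebble), since dropping such steps does not increase the cost.

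Then I would translate the four RBP rules one step at a time. A \textsc{load} of $v$ becomes a PRBP \textsc{load} putting a light red pebble on the blue node $v$, at the same unit cost; a \textsc{save} of $v$ becomes a PRBP \textsc{save} turning the dark red pebble on $v$ into a blue and a light red pebble, also at the same unit cost; a \textsc{delete} of a red pebble from a node that also has a blue pebble becomes the (always legal) deletion of the light red pebble. The interesting rule is \textsc{compute}: in one-shot RBP, just before $v$ is computed it carries no pebble, while all of its in-neighbors carry red pebbles and, by the invariant, are fully computed. Hence I can issue one \textsc{partial compute} per incoming edge of $v$ in some order (at most $\Delta_{in}$ free steps): the first turns the empty node $v$ into a dark-red node, and each later one leaves the dark red pebble on $v$ unchanged, so afterwards $v$ carries a dark red pebble with all its incoming edges marked --- exactly the image of ``$v$ has a red but no blue pebble'' under the correspondence. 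During these steps the red-pebble count never exceeds its value in $P$ right after the compute, so it stays $\le r$.

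The step I expect to be the main obstacle is the remaining \textsc{delete} case: removing the \emph{only} pebble of $v$, which is a red pebble with no blue pebble. In PRBP a dark red pebble may be deleted only once all output edges of $v$ are marked, so I must argue that this situation is forced. The point is that in a valid one-shot RBP pebbling, after such a deletion $v$ can never again carry a pebble --- there is no blue pebble to reload and, by one-shotness, no recomputation --- so $v$ is not a sink and every successor of $v$ must already have been computed in $P$, as otherwise some descendant of $v$ could never be computed and $P$ would be invalid. By the invariant, all output edges of $v$ are therefore already marked in $P'$, making the PRBP \textsc{delete} legal. Finally I would check the terminal state: in a finite DAG every non-source node is a sink or an ancestor of a sink and hence must be computed in any valid one-shot RBP pebbling, so every edge of $G$ is eventually marked in $P'$; every sink inherits a blue pebble from the terminal state of $P$; and any remaining light red pebbles can be deleted for free. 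Thus $P'$ is a valid PRBP pebbling with cost at most that of $P$, and choosing $P$ optimal yields $\texttt{OPT}_{PRBP} \le \texttt{OPT}_{RBP}$.
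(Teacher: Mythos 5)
Your proposal is correct and takes essentially the same approach as the paper, which justifies this proposition with a one-sentence observation: any RBP strategy converts to a PRBP strategy of equal cost by splitting each compute step into at most $\Delta_{in}$ consecutive partial compute steps. Your write-up is a careful elaboration of that same simulation, and it rightly identifies and resolves the one genuinely delicate point the paper leaves implicit, namely that deleting a dark red pebble is legal because one-shot validity of the RBP pebbling forces all out-edges of the deleted node to be marked already.
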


Furthermore, note that both in RBP and PRBP, any valid pebbling strategy needs to load every source node at least once, and needs to save every sink node at least once. If the total number of source and sink nodes in $G$ is $t$, then we refer to this value $t$ as the \emph{trivial cost}, and we have $\texttt{OPT}_{RBP} \geq t$ and $\texttt{OPT}_{PRBP} \geq t$.

\subsection{Motivating examples}

There are numerous DAGs and choices of $r$ with $\texttt{OPT}_{PRBP}  < \texttt{OPT}_{RBP}$, i.e.\ where allowing partial computations indeed decreases the optimal I/O cost. We first discuss some simple examples.

\begin{proposition} \label{prop:example}
For the small example DAG in Figure~\ref{fig:decreased_opt} with $r=4$, we have $\texttt{OPT}_{RBP} = 3$ but $\texttt{OPT}_{PRBP} = 2$.
\end{proposition}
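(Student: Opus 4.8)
The DAG in Figure~\ref{fig:decreased_opt} has a single source $s$ and a single sink $t$, so its trivial cost is $m=2$; by the trivial-cost bound this already gives $\texttt{OPT}_{RBP}\ge 2$ and $\texttt{OPT}_{PRBP}\ge 2$. What remains is therefore: (i) an explicit PRBP strategy of cost $2$ (yielding $\texttt{OPT}_{PRBP}=2$); (ii) an explicit RBP strategy of cost $3$ (yielding $\texttt{OPT}_{RBP}\le 3$); and (iii) a proof that no RBP strategy of cost $2$ exists (yielding $\texttt{OPT}_{RBP}=3$). Parts (i) and (ii) are constructive; part (iii) is the real obstacle.

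For (i), I would exhibit the pebbling step by step. After the single \textsc{load} of $s$, the key trick is to process the in-neighbors of the bottleneck node $u$ (the one of in-degree $r-1=3$) one at a time: compute an in-neighbor with a \textsc{partial compute} along its edge from $s$, immediately aggregate it into $u$ with another \textsc{partial compute}, then \textsc{delete} it. This keeps only a constant number of red pebbles in play, well below $r=4$. Once all incoming edges of $u$ are marked, the rest of the DAG (the remaining successor $b$ of $s$, and the sink $t$) is swept the same way, finishing with a single \textsc{save} of $t$. One then checks the routine points: every transition meets the preconditions (in particular the three conditions of \textsc{partial compute}), the red-pebble count never exceeds $4$, and the terminal condition holds (all edges marked, a blue pebble on $t$). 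Since this pebbling uses exactly one \textsc{load} and one \textsc{save}, its cost is $2$; this is essentially the generic topological-sweep PRBP strategy specialized to a DAG with nothing to spill.

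For (ii), I would similarly spell out an RBP pebbling: \textsc{load} $s$, compute all of $a_1,a_2,a_3$ (keeping $s$ red), \textsc{delete} $s$, compute $u$, \textsc{delete} $a_1,a_2,a_3$, then re-\textsc{load} $s$, compute $b$, \textsc{delete} $s$, compute $t$, and \textsc{save} $t$. This is one-shot, never exceeds $4$ red pebbles, and uses two \textsc{load}s and one \textsc{save}, so $\texttt{OPT}_{RBP}\le 3$.

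For (iii), suppose toward a contradiction that some RBP pebbling has cost $2$. Then it performs exactly one \textsc{load} (of $s$) and one \textsc{save} (of $t$) and nothing else. In particular, $s$ is loaded only once, so every \textsc{compute} step using $s$ as an input must happen while $s$ still carries its unique red pebble; and since the game is one-shot and no intermediate value is ever saved, each non-source value, once computed, must retain its red pebble until all of its out-neighbors are computed (otherwise it can never be recovered and the pebbling is stuck). I would then inspect the state right after the last in-neighbor of $u$ receives its red pebble: at that instant $a_1,a_2,a_3$ are all red (their only out-neighbor $u$ is not yet computed), and in addition either $s$ is still red (if its successor $b$ is not yet computed) or $b$ is already red (if it is, since $b$'s only out-neighbor is the not-yet-computed sink $t$). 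Either way at least four red pebbles already sit on these nodes, none of which may be deleted, yet $u$ still has to be computed --- which would require a fifth red pebble. This contradicts $r=4$, so $\texttt{OPT}_{RBP}\ge 3$, and with (ii) we get $\texttt{OPT}_{RBP}=3$. The delicate step here is making the case analysis of that critical moment watertight (ruling out every order in which $a_1,a_2,a_3$, $b$, and $u$ could be computed), so I would phrase it as: the one-shot, no-intermediate-save constraint forces the three in-neighbors of $u$ and at least one further $s$-dependent value to be simultaneously red at the moment $u$ is computed.
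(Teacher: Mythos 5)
There is a genuine gap: your proof is written for a different DAG than the one in Figure~\ref{fig:decreased_opt}. You posit a single ``bottleneck node $u$ of in-degree $r-1=3$'' with three in-neighbors $a_1,a_2,a_3$ hanging directly off the source, plus one further successor $b$ and a sink $t$. The actual gadget has maximum in-degree $2$ (this is even asserted explicitly for Proposition~\ref{prop:reduced}, which reuses it): the source $u_0$ feeds $u_1$ and $u_2$; $u_1$ feeds $w_1,w_2,w_4$; $w_1,w_2$ feed $w_3$; $w_3$ and $u_1$ feed $w_4$; $w_4$ and $u_2$ feed $v_1,v_2$; and $v_1,v_2$ feed the sink $v_0$. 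Consequently your part (iii) --- whose entire contradiction rests on needing a fifth pebble to compute an in-degree-$3$ node while its three in-neighbors are simultaneously red --- does not apply to any node of this graph, and your parts (i) and (ii) exhibit pebblings of a graph that is not the one in the statement. Your part (i) is also wrong in spirit for the real DAG: the cost-$2$ PRBP strategy there is not a low-occupancy sweep; it genuinely needs all $4$ red pebbles at several moments (e.g.\ $u_1,u_2,w_1,w_3$ while aggregating into $w_3$), because $u_1$ and $u_2$ must be held for later successors.

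That said, the key mechanism in your part (iii) --- in a cost-$2$ one-shot pebbling nothing is ever saved or reloaded, so every computed value must keep its red pebble until all of its out-neighbors are computed --- is sound and can be adapted to the real gadget. The paper instead argues directly: computing $w_3$ in RBP pins red pebbles on $w_1,w_2,w_3$ simultaneously, leaving only one pebble for $\{u_0,u_1,u_2\}$, and a case analysis on where that pebble sits shows a later load of $u_1$, $u_2$, or $u_0$ is unavoidable (to compute $w_4$, or $v_1,v_2$, or $u_2$ itself). Your locked-pebble observation would give an arguably cleaner contradiction at the same critical moment ($u_1$ must still be red since $w_4$ is uncomputed, and one of $u_0$ or $u_2$ must also be red, giving five pebbles), but you would need to redo the whole argument on the correct graph for the proof to stand.
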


\begin{proof}
Consider the DAG in Figure~\ref{fig:decreased_opt}, with $u_0$, $v_0$ and the dashed edges also included. Note that both in RBP and PRBP, $u_0$ needs to be loaded in the beginning and $v_0$ needs to be saved in the end, so the I/O cost is at least $2$.

In PRBP, the DAG can be pebbled without any further I/O steps. The key to this is the partial computation of the node $w_3$. After loading $u_0$, we first compute $u_1$ and $u_2$ (placing dark red pebbles on them), and then delete the light red pebble from $u_0$. We then compute first along the edge $(u_1,w_1)$, and then afterwards along $(w_1,w_3)$, partially computing $w_3$. This puts a dark red pebble on $w_1$ and then $w_3$. The pebble from $w_1$ can be deleted immediately afterwards. Then similarly, we can compute along the edges $(u_1,w_2)$ and then $(w_2,w_3)$, placing a dark red on $w_2$ and updating the dark red on $w_3$, and then the pebble from $w_2$ can be deleted. After this, we can compute $w_4$ (mark both its incoming edges) and delete the reds from $u_1$ and $w_3$. Finally, we can compute $v_1$ and $v_2$ (marking their incoming edges in any order), delete the red pebbles from $u_1$ and $u_2$, compute $v_0$, and save the value of $v_0$.

In contrast to this, in RBP, we cannot compute the DAG without a further I/O step. In particular, computing $w_3$ now requires a red pebble on $w_1$, $w_2$ and $w_3$ at the same time. Due to this, when $w_3$ is computed, we can only have a red pebble on one of the nodes from $\{ u_0, u_1, u_2\}$. If this red pebble is on $u_0$ or $u_2$, then we need to later load $u_1$ from slow memory to compute $w_4$. Otherwise, assume that the red pebble is on $u_1$. Then if $u_2$ has a blue pebble, we need to load $u_2$ for computing $v_1$; if $u_2$ has no blue pebble, then we need to load $u_0$ again to compute $u_2$. In either case, we need another I/O step, hence $\texttt{OPT}_{RBP} \geq 3$. However, if $u_2$ is computed after $w_4$ by loading $u_0$ again, then we can indeed finish with only $3$ I/O steps.

For completeness, we list the concrete steps of the optimal pebbling strategies in both PRBP and RBP in Appendix~\ref{app:example}. 
\end{proof}

\begin{figure}
    \centering
    \begin{tikzpicture}

    \node[anchor=center] at (-11pt,-3pt) {\large $u_2$};
    \node[anchor=center] at (-11pt,43pt) {\large $u_1$};
    \node[anchor=center] at (150.5pt,-3pt) {\large $v_2$};
    \node[anchor=center] at (150.5pt,43pt) {\large $v_1$};
    \node[anchor=center] at (35pt,50.5pt) {\large $w_2$};
    \node[anchor=center] at (35pt,88pt) {\large $w_1$};
    \node[anchor=center] at (65pt,68pt) {\large $w_3$};
    \node[anchor=center] at (95pt,48pt) {\large $w_4$};

    \node[anchor=center] at (-44pt,12pt) {\large $u_0$};
    \node[anchor=center] at (184pt,12pt) {\large $v_0$};

    \begin{scope}[thick, arrows=-stealth]
    \draw (0pt,0pt) -- (136pt,0pt);
    \draw (0pt,0pt) -- (137.5pt,36.5pt);
    \draw (0pt,40pt) -- (86pt,40pt);
    \draw (90pt,40pt) -- (136pt,40pt);
    \draw (90pt,40pt) -- (137.5pt,3pt);
    \draw (0pt,40pt) -- (26pt,59pt);
    \draw (0pt,40pt) -- (27pt,78pt);
    \draw (30pt,60pt) -- (56pt,60pt);
    \draw (30pt,80pt) -- (57.25pt,63pt);
    \draw (60pt,60pt) -- (87.25pt,43pt);

    \draw[densely dashed] (-40pt,20pt) -- (-4pt,1pt);
    \draw[densely dashed] (-40pt,20pt) -- (-4pt,39pt);
    \draw[densely dashed] (140pt,0pt) -- (177pt,18pt);
    \draw[densely dashed] (140pt,40pt) -- (177pt,22pt);
    \end{scope}

    \draw[black, fill=white] (0pt,0pt) circle (1.0ex);
    \draw[black, fill=white] (0pt,40pt) circle (1.0ex);

    \draw[black, fill=white] (30pt,60pt) circle (1.0ex);
    \draw[black, fill=white] (30pt,80pt) circle (1.0ex);
    \draw[black, fill=white] (60pt,60pt) circle (1.0ex);
    \draw[black, fill=white] (90pt,40pt) circle (1.0ex);

    \draw[black, fill=white] (140pt,0pt) circle (1.0ex);
    \draw[black, fill=white] (140pt,40pt) circle (1.0ex);

    \draw[black, fill=white] (-40pt,20pt) circle (1.0ex);
    \draw[black, fill=white] (180pt,20pt) circle (1.0ex);

\end{tikzpicture}
    \caption{Example DAG for $\texttt{OPT}_{PRBP} < \texttt{OPT}_{RBP}$, with $r=4$. For Proposition~\ref{prop:example}, $u_0$, $v_0$ and the dashed edges are part of the DAG; in this case, we have $\texttt{OPT}_{PRBP} = 2$ but $\texttt{OPT}_{RBP} = 3$. For Proposition~\ref{prop:reduced}, we disregard $u_0$, $v_0$ and the dashed edges, and concatenate several copies of the remaining gadget.}
    \label{fig:decreased_opt}
\end{figure}
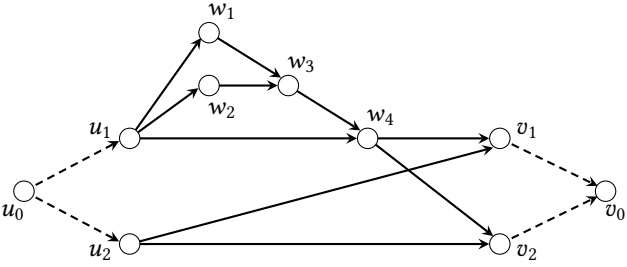

Towards a more practical example, consider a fundamental operation in linear algebra, namely, the multiplication $A \! \cdot \! x=y$ of an $m \times m$ matrix $A$ with an $m \times 1$ vector $x$. The corresponding computational DAG consists of $m^2\!+m$ source nodes, $m^2$ intermediate nodes with in-degree $2$, and $m$ sink nodes of in-degree $m$.

\begin{proposition} \label{prop:mat_vec}
If $\,m \geq 3$ and $\,m\!+\!3 \leq r \leq 2m$, then for the matrix-vector multiplication above, we have $\texttt{OPT}_{PRBP}  < \texttt{OPT}_{RBP}$.
\end{proposition}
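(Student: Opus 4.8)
The plan is to pin down $\texttt{OPT}_{PRBP}$ exactly — it equals the trivial cost $m^2+2m$, i.e.\ the number of source plus sink nodes — and then to show that $\texttt{OPT}_{RBP}$ is strictly larger. Label the sources as $a_{ij}$ (entries of $A$, for $i,j\in[m]$) and $x_j$ (entries of $x$), the in-degree-$2$ intermediate nodes as $p_{ij}$ (the product $a_{ij}x_j$), and the in-degree-$m$ sinks as $y_i$ (the sum $\sum_j p_{ij}$); here $\Delta_{in}=m$, and $r\ge m+3>m+1$, so a valid pebbling exists in both games.

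For the PRBP side, I would exhibit an explicit strategy that loads each source once, saves each sink once, and performs no other I/O. Load $x_1,\dots,x_m$ at the start and keep those $m$ (light) red pebbles for the whole run. Then process the rows one at a time; within row $i$, for $j=1,\dots,m$: load $a_{ij}$, partially compute $p_{ij}$ first along $(a_{ij},p_{ij})$ and then along $(x_j,p_{ij})$ (deleting the red pebble on $a_{ij}$ in between, since it is up to date in slow memory), then aggregate $p_{ij}$ into $y_i$ by a partial compute along $(p_{ij},y_i)$ and delete the red pebble on $p_{ij}$; after $j=m$, save $y_i$. A direct check of the rule conditions shows the number of simultaneous red pebbles never exceeds $m+3$ — the tight moment being when $x_1,\dots,x_m$, a freshly loaded $a_{ij}$, the just-started $p_{ij}$, and a partially aggregated $y_i$ are all in fast memory — so the strategy is valid for $r\ge m+3$. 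Its cost is $m$ loads of $x$, $m^2$ loads of the $a_{ij}$, and $m$ saves of the $y_i$, i.e.\ $m^2+2m$. Since every source must be loaded and every sink saved at least once, this is optimal, so $\texttt{OPT}_{PRBP}=m^2+2m$.

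For the RBP side I would argue that no RBP pebbling attains cost $m^2+2m$ when $m\ge 3$ and $r\le 2m$; together with the previous paragraph this gives the claim. Assume such a pebbling exists. Because the cost equals exactly the number of sources plus sinks, every source is loaded exactly once, every sink saved exactly once, and there are no other I/O steps; in particular no $p_{ij}$ ever receives a blue pebble, so in a one-shot pebbling each $p_{ij}$ must carry a red pebble continuously from the unique step that computes it until the step that computes its only successor $y_i$. Let $t^*$ be the first step at which some sink, say $y_{i^*}$, is computed. Just after $t^*$, the $m+1$ nodes $y_{i^*},p_{i^*1},\dots,p_{i^*m}$ all carry red pebbles, so since $r\le 2m$ at most $m-1$ of the $x_j$ can also carry one, hence some $x_{j^*}$ does not. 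But $x_{j^*}$ was needed to compute $p_{i^*j^*}$ before $t^*$, so it was loaded and then deleted before $t^*$; therefore the whole column $p_{1j^*},\dots,p_{mj^*}$ is computed before $t^*$ and still carries red pebbles at $t^*$. The distinct nodes $\{y_{i^*}\}\cup\{p_{i^*j}\}_{j\in[m]}\cup\{p_{ij^*}\}_{i\in[m]}$ number $2m$, which forces $r=2m$ and leaves no red pebble on any $x_j$ at $t^*$; but then the same deletion argument applies to every column, so all $m^2$ products are computed before $t^*$ and carry red pebbles there, forcing $m^2\le 2m$ and contradicting $m\ge 3$. Hence $\texttt{OPT}_{RBP}\ge m^2+2m+1>\texttt{OPT}_{PRBP}$.

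The pebble bookkeeping in the PRBP strategy is routine. The main obstacle is the RBP lower bound, and specifically closing the boundary case $r=2m$: the first counting argument is exactly tight there, and one has to bootstrap it — using that when $r=2m$ the cache at time $t^*$ is completely full of product and sink pebbles, hence holds no $x_j$ — into the stronger statement that every column is fully computed before $t^*$.
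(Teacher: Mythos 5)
Your proposal is correct, and it follows the same overall outline as the paper --- an explicit trivial-cost PRBP strategy using $m+3$ red pebbles, followed by an RBP lower bound --- but both halves differ in their details. For the PRBP upper bound, the paper's strategy is the transpose of yours: it keeps the $m$ partially aggregated outputs $y_1,\dots,y_m$ as dark red pebbles in cache and streams through the columns (loading one $x_i$ at a time and then the entries $A_{j,i}$), whereas you cache the input vector $x$ and stream through the rows, saving each $y_i$ as it completes; both use exactly $m+3$ pebbles and achieve the trivial cost $m^2+2m$. For the RBP side, the paper proves the stronger quantitative bound $\texttt{OPT}_{RBP} \ge m^2+3m-1$ by a direct local argument: between any two consecutively computed sinks $y_{j_1}$ and $y_{j_2}$, the $m+1$ red pebbles tied up when $y_{j_1}$ is computed leave at most $m-1$ pebbles for the $m$ pairs consisting of $x_i$ and the intermediate product node in row $j_2$, column $i$, so some pair is entirely unpebbled and forces a non-trivial load before $y_{j_2}$ can be computed. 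Your argument uses the same pebble-counting insight at the moment a sink is computed, but packages it as a contradiction against a pebbling of exactly trivial cost; this yields only one extra I/O rather than $m-1$, and it requires the extra bootstrapping step to close the boundary case $r=2m$, which the paper's formulation handles uniformly. What your route buys is a precise structural description of any hypothetical trivial-cost RBP pebbling (every product pebble pinned in cache until its sink is computed), which is a nice observation in its own right; what the paper's route buys is a shorter proof and the quantitative gap of $m-1$ non-trivial I/O steps.
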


\begin{proof}
In PRBP, we can devise a strategy that keeps the $m$ partially computed values of the output $y$ in fast memory, and only uses $3$ further red pebbles besides this. In particular, we can iterate through the entries $x_i$ of the input vector, first loading $x_i$ into fast memory, and then for each row of the matrix, we load the corresponding $A_{j,i}$, multiply it with $x_i$, and add this to $y_j$. Note that $A_{j,i}$ can be deleted after computing $A_{j,i}\!\cdot\! x_i$, and then $A_{j,i}\!\cdot\!x_i$ can be deleted after updating $y_j$, so this only requires two red pebbles besides the one on $x_i$. Afterwards, $x_i$ can also be deleted. This strategy comes with only the trivial I/O cost of $\texttt{OPT}_{PRBP} = m^2+2m$.

In RBP, we can show that at least one non-trivial I/O step always happens between the computation of any two consecutively computed sink nodes $y_{j_1}$ and $y_{j_2}$. Note that computing $y_{j_1}$ requires $(m+1)$ red pebbles simultaneously, and thus we can have at most $(m-1)$ red pebbles in the remaining nodes. As such, there is an index $i \in [m]$ such that neither the source node $x_i$ nor the intermediate node $A_{i, j_2}$ has a red pebble. If $A_{i, j_2}$ already has a blue pebble, then this means we have to load it before computing $y_{j_2}$. Otherwise, if $A_{i, j_2}$ has no pebble, then we need to load $x_i$ to compute it; note that $x_i$ has been loaded before to compute $A_{i, j_1}$, so this is also a non-trivial I/O. Altogether, this gives $\texttt{OPT}_{RBP} \geq m^2\!+\!3m-1$.
\end{proof}

\subsection{Frequently used gadgets}

We next analyze the behavior of some well-structured graphs in PRBP, illustrated in Figure~\ref{fig:gadgets}. These often appear as building blocks in larger computations, or as gadgets in proof constructions.

\subsubsection{Zipper gadget}

One popular gadget in RBP is the zipper gadget from~\cite{RBpebbling3, mpp}. This consists of two groups of $d$ source nodes, and a long chain where each node has incoming edges from the previous chain node, and one of the two source groups alternatingly.

The key property of this gadget in RBP is that if $r=d+2$, then for each next node in the chain, we need to move $d$ red pebbles from one group to the other. This results in $d$ load operations for each chain node. As we increase $r$, this I/O cost induced by each step gradually decreases, and disappears completely at $r=2d+2$.

In contrast, in PRBP with $r=d+2$, once we have red pebbles in one of the groups, we can compute the corresponding partial value for each node of the chain, and save these into slow memory. Then we can keep the red pebbles in the other group for the traversal of the whole chain, and begin at each chain node by loading the previously computed partial value. This only results in a cost of $2$ for each chain node, which is a lower total cost when $d \geq 3$.

\begin{proposition}
If $d \geq 3$ and $r=d+2$, then $\texttt{OPT}_{PRBP}  < \texttt{OPT}_{RBP}$ in the zipper gadget.
\end{proposition}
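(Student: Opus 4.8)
The plan is to pin down the optimum on both sides up to constants: exhibit a PRBP pebbling of cost $O(d+L)$, where $L$ is the length of the chain, and show that every RBP pebbling costs $\Omega(d\cdot L)$; since $d\geq 3$, these bounds separate once the chain is long enough (which is implicit in the gadget being a ``long chain''), giving $\texttt{OPT}_{PRBP}<\texttt{OPT}_{RBP}$. Write the two source groups as $A=\{a_1,\dots,a_d\}$ and $B=\{b_1,\dots,b_d\}$ and the chain as $c_1,\dots,c_L$, where $c_1$ has in-neighbours exactly $A$, and each $c_i$ with $i\geq 2$ has in-neighbours $c_{i-1}$ together with $A$ (if $i$ is odd) or $B$ (if $i$ is even); then $\Delta_{in}=d+1$ and $r=d+2=\Delta_{in}+1$, so a valid pebbling exists in both games.

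For the PRBP upper bound I use a two-phase strategy. In Phase~1, load all of $A$ (which fits, as $d<r$); for each odd $c_i$, partially compute the $d$ edges from $A$ into $c_i$ to get a dark red pebble on $c_i$, then save $c_i$ and delete its light red pebble; finally delete all of $A$. In Phase~2, load all of $B$ and traverse $c_1,\dots,c_L$ in order, maintaining the invariant that $c_{i-1}$ carries a red pebble while $c_i$ is processed: for an even $c_i$, partially compute the $d$ edges from $B$ into $c_i$ and then the edge $(c_{i-1},c_i)$; for an odd $c_i$ with $i\geq 3$, instead load the partial value of $c_i$ saved in Phase~1 and then partially compute $(c_{i-1},c_i)$; for $c_1$, simply load its saved value. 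In every case $c_i$ becomes fully computed and the now fully marked node $c_{i-1}$ can be deleted; at the end, save the sink $c_L$. A routine check shows that at every step there are at most $d+2=r$ red pebbles ($d$ on $B$, one on $c_{i-1}$, one on $c_i$), that every edge is partially computed exactly once (so the strategy is valid and one-shot), and that the total cost is $d$ loads plus $\lceil L/2\rceil$ saves in Phase~1 and $d+\lceil L/2\rceil$ loads plus one save in Phase~2, hence $\texttt{OPT}_{PRBP}\leq 2d+L+2$.

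For the RBP lower bound, observe that the chain nodes must be computed in the order $c_1,\dots,c_L$: since $c_{i+1}$ has $c_i$ as an in-neighbour, $c_i$ must carry a red pebble --- hence must already have been computed --- before $c_{i+1}$, and by one-shotness this compute step $t_i$ is unique. For $i\geq 2$, the in-neighbours of $c_i$ (namely $c_{i-1}$ and one whole source group, $d+1$ nodes) all carry red pebbles at $t_i$, and so does $c_i$; that is $d+2=r$ pebbles, so the cache is full and \emph{no} node of the other source group carries a red pebble just after $t_i$. Since source nodes only receive red pebbles via \textsc{load}, each of the $d$ nodes of the group feeding $c_{i+1}$ is loaded at least once strictly between $t_i$ and $t_{i+1}$. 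Likewise all $d$ nodes of $A$ are loaded before $t_1$, while $c_1$ has only $d$ in-neighbours and thus leaves one free cache slot at $t_1$, so at least $d-1$ nodes of $B$ are loaded between $t_1$ and $t_2$. Summing over these pairwise disjoint time windows and adding the mandatory save of the sink $c_L$ yields $\texttt{OPT}_{RBP}\geq d+(d-1)+d(L-2)+1=d\cdot L$.

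Combining the two estimates, $\texttt{OPT}_{PRBP}<\texttt{OPT}_{RBP}$ holds whenever $2d+L+2<d\cdot L$, i.e.\ $L(d-1)>2d+2$; for $d\geq 3$ this is satisfied whenever the chain has more than a small constant number of nodes, in particular by the ``long'' chain of the gadget. The only delicate part is the bookkeeping in the PRBP strategy --- verifying the light-red/dark-red/blue state transitions and the $\leq r$ capacity along the whole traversal, and that saving and reloading the partial values of the odd chain nodes respects one-shotness. On the RBP side the crucial point is that the ``other group carries no red pebble'' conclusion holds precisely at each compute step $t_i$ and relies on $r=d+2$ being tight; with a larger cache the forced reloads shrink, consistently with the observation in the text that the effect vanishes at $r=2d+2$.
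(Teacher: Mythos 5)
Your proposal is correct and follows essentially the same route as the paper: the same two-phase PRBP strategy (partially compute each odd chain node from one source group and save it, then traverse the chain with the other group resident, reloading the saved partial values), and the same RBP lower bound via the observation that at each chain-node compute step the cache is exactly full, forcing all $d$ nodes of the other source group to be reloaded in each disjoint time window. Your version is just a more explicit write-up, including the (harmless, and shared with the paper's informal per-node comparison) caveat that the chain must be longer than a small constant for the bounds to separate.
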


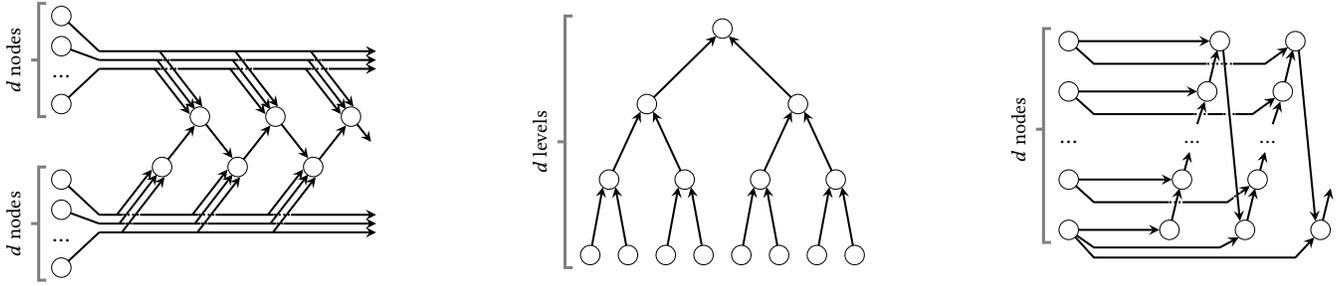
\begin{figure*}
    \centering
    \resizebox{1.0\textwidth}{!}{\begin{tikzpicture}

    \begin{scope}[thick]
    \draw (0pt,0pt) -- (15pt,14pt);
    \draw (0pt,23pt) -- (15pt,17.5pt);
    \draw (0pt,35pt) -- (15pt,21pt);

    \draw (0pt,65pt) -- (15pt,79pt);
    \draw (0pt,88pt) -- (15pt,82.5pt);
    \draw (0pt,100pt) -- (15pt,86pt);
    \end{scope}

    \begin{scope}[thick, arrows=-stealth]
    \draw (15pt,14pt) -- (125pt,14pt);
    \draw (15pt,17.5pt) -- (125pt,17.5pt);
    \draw (15pt,21pt) -- (125pt,21pt);

    \draw (15pt,79pt) -- (125pt,79pt);
    \draw (15pt,82.5pt) -- (125pt,82.5pt);
    \draw (15pt,86pt) -- (125pt,86pt);
    \end{scope}

    \draw[white, fill=white] (25.5pt,16.5pt) rectangle (28pt,18.5pt);
    \draw[white, fill=white] (55.5pt,16.5pt) rectangle (58pt,18.5pt);
    \draw[white, fill=white] (85.5pt,16.5pt) rectangle (88pt,18.5pt);

    \draw[white, fill=white] (40.5pt,81.5pt) rectangle (43pt,83.5pt);
    \draw[white, fill=white] (70.5pt,81.5pt) rectangle (73pt,83.5pt);
    \draw[white, fill=white] (100.5pt,81.5pt) rectangle (103pt,83.5pt);

    \draw[white, fill=white] (24.5pt,20pt) rectangle (31pt,22pt);
    \draw[white, fill=white] (54.5pt,20pt) rectangle (61pt,22pt);
    \draw[white, fill=white] (84.5pt,20pt) rectangle (91pt,22pt);

    \draw[white, fill=white] (39.5pt,78pt) rectangle (46pt,80pt);
    \draw[white, fill=white] (69.5pt,78pt) rectangle (76pt,80pt);
    \draw[white, fill=white] (99.5pt,78pt) rectangle (106pt,80pt);

    \begin{scope}[thick, densely dotted]
    \draw (25.5pt,17.5pt) -- (28pt,17.5pt);
    \draw (24.5pt,21pt) -- (31pt,21pt);
    \draw (55.5pt,17.5pt) -- (58pt,17.5pt);
    \draw (54.5pt,21pt) -- (61pt,21pt);
    \draw (85.5pt,17.5pt) -- (88pt,17.5pt);
    \draw (84.5pt,21pt) -- (91pt,21pt);

    \draw (40.5pt,82.5pt) -- (43pt,82.5pt);
    \draw (39.5pt,79pt) -- (46pt,79pt);
    \draw (70.5pt,82.5pt) -- (73pt,82.5pt);
    \draw (69.5pt,79pt) -- (76pt,79pt);
    \draw (100.5pt,82.5pt) -- (103pt,82.5pt);
    \draw (99.5pt,79pt) -- (106pt,79pt);
    \end{scope}

    \begin{scope}[thick, arrows=-stealth]

    \draw (24pt,14pt) -- (41pt,36pt);
    \draw (23pt,17.5pt) -- (38pt,37pt);
    \draw (22pt,21pt) -- (36pt,39pt);

    \draw (54pt,14pt) -- (71pt,36pt);
    \draw (53pt,17.5pt) -- (68pt,37pt);
    \draw (52pt,21pt) -- (66pt,39pt);

    \draw (84pt,14pt) -- (101pt,36pt);
    \draw (83pt,17.5pt) -- (98pt,37pt);
    \draw (82pt,21pt) -- (96pt,39pt);

    \draw (39pt,86pt) -- (56pt,64pt);
    \draw (38pt,82.5pt) -- (53pt,63pt);
    \draw (37pt,79pt) -- (51pt,61pt);

    \draw (69pt,86pt) -- (86pt,64pt);
    \draw (68pt,82.5pt) -- (83pt,63pt);
    \draw (67pt,79pt) -- (81pt,61pt);

    \draw (99pt,86pt) -- (116pt,64pt);
    \draw (98pt,82.5pt) -- (113pt,63pt);
    \draw (97pt,79pt) -- (111pt,61pt);

    \draw (40pt,40pt) -- (53pt,56.5pt);
    \draw (70pt,40pt) -- (83pt,56.5pt);
    \draw (100pt,40pt) -- (113pt,56.5pt);

    \draw (55pt,60pt) -- (68pt,43.5pt);
    \draw (85pt,60pt) -- (98pt,43.5pt);
    \draw (115pt,60pt) -- (123pt,50pt);
    \end{scope}

    \draw[black, fill=white] (0pt,0pt) circle (1.0ex);
    \node[anchor=center] at (0pt,11pt) {\large $...$};
    \draw[black, fill=white] (0pt,23pt) circle (1.0ex);
    \draw[black, fill=white] (0pt,35pt) circle (1.0ex);

    \draw[black, fill=white] (0pt,65pt) circle (1.0ex);
    \node[anchor=center] at (0pt,76pt) {\large $...$};
    \draw[black, fill=white] (0pt,88pt) circle (1.0ex);
    \draw[black, fill=white] (0pt,100pt) circle (1.0ex);

    \draw[black, fill=white] (40pt,40pt) circle (1.0ex);
    \draw[black, fill=white] (55pt,60pt) circle (1.0ex);
    \draw[black, fill=white] (70pt,40pt) circle (1.0ex);
    \draw[black, fill=white] (85pt,60pt) circle (1.0ex);
    \draw[black, fill=white] (100pt,40pt) circle (1.0ex);
    \draw[black, fill=white] (115pt,60pt) circle (1.0ex);

    \begin{scope}[very thick, gray]
    \draw (-6pt,-5pt) -- (-9pt,-5pt) -- (-9pt,40pt) -- (-6pt,40pt);
    \draw (-9pt,17.5pt) -- (-12pt,17.5pt);

    \draw (-6pt,60pt) -- (-9pt,60pt) -- (-9pt,105pt) -- (-6pt,105pt);
    \draw (-9pt,82.5pt) -- (-12pt,82.5pt);
    \end{scope}

    \node[anchor=center, rotate=90] at (-19pt,17.5pt) {\small $d$ nodes};
    \node[anchor=center, rotate=90] at (-19pt,82.5pt) {\small $d$ nodes};


    \begin{scope}[thick, arrows=-stealth]
    \draw (210pt,5pt) -- (215pt,32pt);
    \draw (225pt,5pt) -- (220pt,32pt);
    \draw (240pt,5pt) -- (245pt,32pt);
    \draw (255pt,5pt) -- (250pt,32pt);
    \draw (270pt,5pt) -- (275pt,32pt);
    \draw (285pt,5pt) -- (280pt,32pt);
    \draw (300pt,5pt) -- (305pt,32pt);
    \draw (315pt,5pt) -- (310pt,32pt);

    \draw (217.5pt,35pt) -- (230pt,62pt);
    \draw (247.5pt,35pt) -- (235pt,62pt);
    \draw (277.5pt,35pt) -- (290pt,62pt);
    \draw (307.5pt,35pt) -- (295pt,62pt);

    \draw (232.5pt,65pt) -- (260pt,92pt);
    \draw (292.5pt,65pt) -- (265pt,92pt);
    \end{scope}

    \draw[black, fill=white] (210pt,5pt) circle (1.0ex);
    \draw[black, fill=white] (225pt,5pt) circle (1.0ex);
    \draw[black, fill=white] (240pt,5pt) circle (1.0ex);
    \draw[black, fill=white] (255pt,5pt) circle (1.0ex);
    \draw[black, fill=white] (270pt,5pt) circle (1.0ex);
    \draw[black, fill=white] (285pt,5pt) circle (1.0ex);
    \draw[black, fill=white] (300pt,5pt) circle (1.0ex);
    \draw[black, fill=white] (315pt,5pt) circle (1.0ex);

    \draw[black, fill=white] (217.5pt,35pt) circle (1.0ex);
    \draw[black, fill=white] (247.5pt,35pt) circle (1.0ex);
    \draw[black, fill=white] (277.5pt,35pt) circle (1.0ex);
    \draw[black, fill=white] (307.5pt,35pt) circle (1.0ex);

    \draw[black, fill=white] (232.5pt,65pt) circle (1.0ex);
    \draw[black, fill=white] (292.5pt,65pt) circle (1.0ex);

    \draw[black, fill=white] (262.5pt,95pt) circle (1.0ex);

    \begin{scope}[very thick, gray]
    \draw (203pt,0pt) -- (200pt,0pt) -- (200pt,100pt) -- (203pt,100pt);
    \draw (200pt,50pt) -- (197pt,50pt);
    \end{scope}

    \node[anchor=center, rotate=90] at (190pt,50pt) {\small $d$ levels};


    \begin{scope}[thick, arrows=-stealth]
    \draw (400pt,35pt) -- (410pt,26pt) -- (463pt,26pt) -- (472pt,32.5pt);
    \draw (400pt,70pt) -- (410pt,61pt) -- (473pt,61pt) -- (482pt,67.5pt);
    \draw (400pt,90pt) -- (410pt,81pt) -- (478pt,81pt) -- (487pt,87.5pt);
    \end{scope}

    \draw[white, fill=white] (440pt,25pt) rectangle (445pt,27pt);
    \draw[white, fill=white] (450pt,60pt) rectangle (455pt,62pt);
    \draw[white, fill=white] (455pt,80pt) rectangle (465pt,82pt);
    
    \draw[white, fill=white] (461.5pt,60pt) rectangle (466pt,62pt);
    \draw[white, fill=white] (465.7pt,26pt) rectangle (468pt,30.5pt);

    \begin{scope}[thick, densely dotted]
    \draw (440pt,26pt) -- (445pt,26pt);
    \draw (450pt,61pt) -- (455pt,61pt);
    \draw (455pt,81pt) -- (465pt,81pt);
    \draw (461.5pt,61pt) -- (466pt,61pt);
    \draw (465.8pt,28.2pt) -- (468pt,29.7pt);
    \end{scope}

    \begin{scope}[thick, arrows=-stealth]
    \draw (440pt,15pt) -- (444pt,31.5pt);
    \draw (455pt,70pt) -- (459pt,86.5pt);
    \draw (461pt,90pt) -- (468pt,18.5pt);

    \draw (470pt,15pt) -- (474pt,31.5pt);
    \draw (485pt,70pt) -- (489pt,86.5pt);
    \draw (491pt,90pt) -- (498pt,18.5pt);

    \draw (500pt,15pt) -- (504pt,31.5pt);

    \draw (445pt,35pt) -- (448pt,46pt);
    \draw (451pt,57pt) -- (454pt,66pt);

    \draw (475pt,35pt) -- (478pt,46pt);
    \draw (481pt,57pt) -- (484pt,66pt);

    \draw (400pt,15pt) -- (436pt,15pt);
    \draw (400pt,35pt) -- (441pt,35pt);
    \draw (400pt,70pt) -- (451pt,70pt);
    \draw (400pt,90pt) -- (456pt,90pt);

    \draw (400pt,15pt) -- (410pt,8pt) -- (460pt,8pt) -- (467.5pt,12pt);
    \draw (400pt,15pt) -- (410pt,4pt) -- (490pt,4pt) -- (497.5pt,12pt);

    \end{scope}

    \node[anchor=center] at (449pt,50pt) {$...$};
    \node[anchor=center] at (479pt,50pt) {$...$};

    \draw[black, fill=white] (400pt,15pt) circle (1.0ex);
    \draw[black, fill=white] (400pt,35pt) circle (1.0ex);
    \node[anchor=center] at (400pt,50pt) {\large $...$};
    \draw[black, fill=white] (400pt,70pt) circle (1.0ex);
    \draw[black, fill=white] (400pt,90pt) circle (1.0ex);

    \draw[black, fill=white] (440pt,15pt) circle (1.0ex);
    \draw[black, fill=white] (445pt,35pt) circle (1.0ex);
    \draw[black, fill=white] (455pt,70pt) circle (1.0ex);
    \draw[black, fill=white] (460pt,90pt) circle (1.0ex);

    \draw[black, fill=white] (470pt,15pt) circle (1.0ex);
    \draw[black, fill=white] (475pt,35pt) circle (1.0ex);
    \draw[black, fill=white] (485pt,70pt) circle (1.0ex);
    \draw[black, fill=white] (490pt,90pt) circle (1.0ex);

    \draw[black, fill=white] (500pt,15pt) circle (1.0ex);

    \begin{scope}[very thick, gray]
    \draw (393pt,10pt) -- (390pt,10pt) -- (390pt,95pt) -- (393pt,95pt);
    \draw (390pt,55pt) -- (387pt,55pt);
    \end{scope}

    \node[anchor=center, rotate=90] at (380pt,55pt) {\small $d$ nodes};

\end{tikzpicture}}
    \caption{Illustration of some frequently used substructures and gadgets: the zipper gadget of~\cite{RBpebbling3, mpp} with $2 \cdot d$ source nodes and an alternating chain (left), a binary tree on $2^d$ leaves with all edges pointing towards the root (middle), and a pebble collection gadget from~\cite{RBpebbling3} with $d$ source nodes and a chain that periodically uses the different sources as an input (right).}
    \label{fig:gadgets}
\end{figure*}

\subsubsection{Binary trees}

Another structure that often occurs in the context of RBP is a rooted binary (or $k$-ary) tree, i.e.\ a tree of depth $d$ with every non-leaf node having exactly $2$ (or exactly $k$) distinct in-neighbors. These trees often show up as smaller parts of concrete computational tasks~\cite{ml1}, or as examples to illustrate the properties of different model variants~\cite{mpp}.

In terms of pebbling, the most interesting behavior for binary trees is observed when $r=3$. Due to their very regular structure, the optimal pebbling strategy is easy to find in these DAGs. Note that we always have to load the $2^d$ leaves and save the root node, so the trivial I/O cost is $2^d +1$. Apart from this, in RBP, for each node $v$ that is not in the bottom two levels (i.e., not a leaf or an out-neighbor of a leaf), we will always need to save (and later reload) one of the two in-neighbors of $v$, in order to free up a red pebble and also compute the other subtree of $v$. This way, the number of non-trivial I/O steps in the optimal RBP pebbling is
\[ 2 \cdot \left( 1 + ... + 2^{d-2} \right) = 2^d - 2 \, . \]

In contrast, in PRBP, the bottom $3$ levels can be computed without any I/O cost due to partial computations. However, every node $v$ above these levels once again requires $2$ I/O steps, regardless of whether save the partially computed value of $v$ and load it later, or if we save one of $v$'s in-neighbors and load it later. Altogether, the number of non-trivial I/O steps here is
\[ 2 \cdot \left( 1 + ... + 2^{d-3} \right) = 2^{d-1} - 2 \, . \]

In Appendix~\ref{app:k-ary}, we discuss the pebbling strategy in more detail for a small binary tree, and also outline the generalization of these observations to $k$-ary trees.

\begin{proposition} \label{prop:bintree}
In binary trees of depth $d \geq 3$ with $r=3$, we again have $\texttt{OPT}_{PRBP}  < \texttt{OPT}_{RBP}$.
\end{proposition}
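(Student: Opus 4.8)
The plan is to bound $\texttt{OPT}_{PRBP}$ from above and $\texttt{OPT}_{RBP}$ from below, and to show the two bounds already separate. Fix a binary tree of depth $d \geq 3$, with the $2^d$ leaves as the source nodes and the root as the unique sink; loading each leaf once and saving the root once is unavoidable in either model, so the trivial cost is $2^d+1$. I will exhibit a PRBP pebbling of total cost $2^d+1+(2^{d-1}-2)$, and argue that every RBP pebbling costs at least $2^d+1+(2^d-2)$. Since $2^{d-1}-2 < 2^d-2$, chaining $\texttt{OPT}_{PRBP} \leq (2^d+1+2^{d-1}-2) < (2^d+1+2^d-2) \leq \texttt{OPT}_{RBP}$ gives the claim.

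For the PRBP upper bound I would use a recursive procedure $P(v)$ that, using never more than the $r=3$ available red pebbles (blue pebbles inside the subtree being unrestricted), marks all edges of the subtree rooted at $v$ and leaves a single dark red pebble on $v$. If $v$'s subtree has depth at most $2$, i.e.\ $v$ lies in the bottom three levels, one checks directly that $P(v)$ needs no non-trivial I/O: handle the two children $c_1,c_2$ of $v$ one after the other by loading each relevant leaf, \textsc{partial compute}-ing along its edge into the child, and deleting the leaf again, then \textsc{partial compute} $(c_1,v)$, delete $c_1$, \textsc{partial compute} $(c_2,v)$, delete $c_2$; a routine trace shows the red-pebble count never exceeds $3$, every deleted dark red pebble has all its output edges already marked, and the only I/O steps are the trivial leaf loads. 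For a node $v$ above the bottom three levels with children $c_1,c_2$: run $P(c_1)$, then \textsc{save} $c_1$ and delete its light red pebble (so $c_1$ keeps only a blue pebble), run $P(c_2)$, then \textsc{load} $c_1$, \textsc{partial compute} $(c_1,v)$, delete $c_1$'s light red, \textsc{partial compute} $(c_2,v)$, and delete $c_2$; peak red-pebble usage is again $3$, and this step adds exactly two non-trivial I/O operations, the \textsc{save} and \textsc{load} of $c_1$. There are $2^0+\dots+2^{d-3}=2^{d-2}-1$ nodes above the bottom three levels, so together with the final save of the root the total cost is $2^d+1+2(2^{d-2}-1)=2^d+1+(2^{d-1}-2)$.

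For the RBP lower bound I would use a charging scheme. Let $v$ be any node above the bottom two levels, so both children $c_1,c_2$ of $v$ are non-leaf internal nodes (hence neither a source nor the sink); in the one-shot game each of $c_1,c_2,v$ is computed exactly once, say $c_1$ before $c_2$, and both before $v$. At the step that computes $v$, all three red pebbles sit on $\{v,c_1,c_2\}$; at the step that computes $c_2$, all three red pebbles sit on $c_2$ and the two children of $c_2$, which do not include the sibling $c_1$, so $c_1$ carries no red pebble at that instant. Hence the red pebble placed on $c_1$ at its compute step was deleted before $c_2$ was computed, yet $c_1$ must carry a red pebble again when $v$ is computed; since one-shot forbids recomputing $c_1$, this red pebble must be re-created by a \textsc{load}, which forces a prior \textsc{save} of $c_1$ (as $c_1$ has no blue pebble initially). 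Both this \textsc{save} and \textsc{load} act on the internal node $c_1$, so they are non-trivial, and we charge them to $v$. Because the tree is a tree, distinct nodes $v$ have disjoint child sets, so the charged pairs are pairwise distinct and also distinct from the $2^d+1$ trivial source-load/sink-save operations. With $2^0+\dots+2^{d-2}=2^{d-1}-1$ nodes above the bottom two levels this gives at least $2^d+1+2(2^{d-1}-1)=2^d+1+(2^d-2)$ I/O operations, as needed.

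The step I expect to be the main obstacle is the bookkeeping of the RBP lower bound: one must verify that every node $v$ above the bottom two levels genuinely contributes a \emph{fresh} \textsc{save}/\textsc{load} pair and that no such pair is counted for two different $v$ — this is precisely where the unique-parent structure of the tree and the one-shot restriction (which prevents re-establishing $c_1$'s red pebble by recomputation) are indispensable, and where a careless argument could over- or undercount. The PRBP side is more mechanical, but still requires checking, for both the base case and the recursive step, that the red-pebble budget $3$ is respected at every transition and that the \textsc{delete} rule's ``all output edges marked'' precondition holds whenever a dark red pebble is removed.
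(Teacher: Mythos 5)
Your proposal is correct and follows essentially the same route as the paper: the PRBP upper bound of $2^d+1+(2^{d-1}-2)$ comes from computing the bottom three levels for free via partial computes and paying a save/load pair for each of the $2^{d-2}-1$ nodes above them, while the RBP lower bound of $2^d+1+(2^d-2)$ comes from forcing a save/reload of one child for each of the $2^{d-1}-1$ nodes above the bottom two levels. Your explicit charging argument for the RBP side is a somewhat more rigorous rendering of the paper's informal counting, but it is the same underlying idea.
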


\subsubsection{Pebble-collecting gadgets}

When proving properties of RBP in a construction, there is often need for a gadget which ensures that we need to place all red pebbles on it simultaneously, or otherwise we incur some I/O cost. One example of such gadgets is the pyramid gadget~\cite{RBpebbling2,ranjan2012upper}, which is similar to a binary tree. A more sophisticated version of the gadget is the one from~\cite{RBpebbling3}, shown in the right side of Figure~\ref{fig:gadgets}. This gadget consist of $d$ source nodes, and a long chain of $\ell$ nodes, with the $i$-th node in the chain having incoming edges from the previous chain node and the $i$-th source node modulo $d$.

This gadget ensures in RBP that if we use $d\!+\!2$ red pebbles, then we can pebble it without any I/O (apart from the trivial cost). However, if there is no point in the pebbling where we have $d\!+\!2$ red pebbles on the gadget simultaneously, then the red pebbles need to be repeatedly reloaded to compute the chain, incurring a cost of at least $\ell/d$.

As a tool for future constructions, we show that the same gadget ensures a similar behavior in PRBP. Note that if we use $d+2$ red pebbles, then the gadget can also be pebbled in PRBP with only the trivial I/O cost. Otherwise, the pebbling again becomes rather costly.

\begin{proposition} \label{prop:coll}

If a PRBP strategy never has at least $d\!+\!2$ red pebbles simultaneously on this gadget, then the cost of the strategy is at least $\frac{\ell}{2d}$.

\end{proposition}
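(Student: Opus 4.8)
The plan is to exhibit $\lfloor \ell/(2d)\rfloor$ disjoint I/O operations that every such strategy is forced to perform. Write $s_0,\dots,s_{d-1}$ for the source nodes and $c_1,\dots,c_\ell$ for the chain nodes, so that $c_i$ has incoming edges from $c_{i-1}$ and from $s_{i \bmod d}$. I would first establish two structural facts that hold in the PRBP model on this gadget. \emph{(a)} A source $s_m$ has no incoming edges, so it is never the target of a \textsc{partial compute}, and a red pebble can appear on it only through a \textsc{load}; moreover, marking \emph{any} outgoing edge $(s_m,c_i)$ requires a red pebble on $s_m$ at that instant. \emph{(b)} A chain node $c_i$ ($i<\ell$) has the single outgoing edge $(c_i,c_{i+1})$, which by the first condition of the \textsc{partial compute} rule cannot be marked before $c_i$ is finalized; hence from the moment the first incoming edge of $c_i$ is marked until the moment its outgoing edge is marked, a dark red pebble on $c_i$ can never be \textsc{delete}d, so the only way to vacate the fast-memory slot of $c_i$ during this interval is to \textsc{save} it, after which a \textsc{load} is needed before its remaining incoming edge (and then its outgoing edge) can be processed. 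Thus in this interval either $c_i$ continuously holds a red pebble, or at least one I/O operation is performed on $c_i$.

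The second ingredient is a budget observation: whenever a chain edge $(c_{i-1},c_i)$ is marked, two red pebbles sit simultaneously on the distinct non-source nodes $c_{i-1}$ and $c_i$, so with at most $d+1$ red pebbles on the gadget at that moment, at most $d-1$ of the $d$ sources carry a red pebble then. Now partition $\{1,\dots,\ell\}$ into $p=\lfloor \ell/(2d)\rfloor$ consecutive windows of $2d$ indices each; the chain nodes of a window run through $2d$ consecutive indices and therefore use all $d$ residue classes mod $d$, i.e.\ all $d$ sources. Associate with each window the time interval bounded by the finalizations of its first and last chain nodes; these intervals are pairwise disjoint because finalizing $c_i$ requires the edge $(c_{i-1},c_i)$ to be marked, which in turn requires $c_{i-1}$ to be finalized, so the finalization times are strictly increasing. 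I would then show that each window must incur at least one I/O operation during its own interval: for every source $s_m$, the source edges leading into that window's nodes of residue $m$ are either marked (with $s_m$ loaded) inside the interval — so, by \emph{(a)}, either a \textsc{load} of $s_m$ lies in the interval, or $s_m$ already carries a red pebble at the start of it — or they were marked ``early'', before the interval began, in which case, by \emph{(b)}, the corresponding chain node either still carries a red pebble at the start of the interval or has already been charged an I/O inside it. Feeding these residual pebbles, together with the two backbone pebbles $c_{i-1},c_i$ forced by the budget observation at the window's first chain-edge marking, into a cache that holds at most $d+1$ pebbles is designed to produce a moment with at least $d+2$ red pebbles on the gadget — contradicting the hypothesis — unless the window is charged an I/O. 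Since the windows' intervals are disjoint, the charged operations are distinct, and the total cost is at least $p=\lfloor \ell/(2d)\rfloor$, which equals $\frac{\ell}{2d}$ when $2d\mid\ell$ and otherwise matches the claimed bound up to the rounding (and is subsumed by the $d$ source \textsc{load}s that every strategy performs when $\ell$ is small).

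The delicate point, and the reason the bound degrades by a factor of two from the $\ell/d$ of the RBP version, is that \textsc{partial compute} decouples ``aggregating $s_m$ into $c_i$'' from ``the chain traversal reaching $c_i$'': a strategy may mark a source edge long before the chain arrives there, so the associated cost cannot be pinned to a single $d$-node segment as in RBP. Fact \emph{(b)} is precisely what keeps this decoupling from being free — an early-aggregated chain node must keep its fast-memory slot occupied until it is consumed, or else pay a \textsc{save} and a \textsc{load} — but absorbing the resulting slack into a clean counting argument forces windows of length $2d$ rather than $d$ (equivalently, an amortized charging in which each I/O is credited to at most two length-$d$ windows), which is where the factor of two is lost. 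I expect the most technical part of turning this sketch into a complete proof to be the boundary bookkeeping: verifying that the red pebbles counted at the start of a window's interval really do sit on distinct nodes and are really present at one common instant, so that the count reaches $d+2$, and that no single \textsc{save} or \textsc{load} gets charged to two windows, so that both the pebble-count contradiction and the disjointness of the charged operations go through cleanly.
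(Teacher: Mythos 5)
Your overall strategy --- cut the chain into $\lfloor \ell/(2d)\rfloor$ windows of $2d$ consecutive nodes, charge one I/O to each window, and obtain the charge by pigeonholing sources and chain nodes against the budget of at most $d+1$ red pebbles --- is exactly the paper's, and your structural facts \emph{(a)} and \emph{(b)} are both correct and are the same observations the paper relies on. The gap is in the final counting step, and it is not mere bookkeeping: as you have set it up, the count tops out at $d+1$, which does not contradict the hypothesis. Concretely, your residual pebbles are certified at the start of the window's interval (the finalization of its first chain node $c_{j+1}$), while your two backbone pebbles are certified at the later instant when the window's first chain edge is marked, so they are not known to coexist at any one moment; and even if you move everything to a single instant, your witness for residue $m$ is allowed to be \emph{either} of the window's two residue-$m$ chain nodes, so the witness for residue $(j+1)\bmod d$ can be $c_{j+1}$ itself (or, if $c_{j+1}$ is finalized by its source edge, the second guaranteed pebble is $s_{(j+1)\bmod d}$, which can coincide with that residue's witness). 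Either collision leaves only $d+1$ distinct simultaneously pebbled nodes, which the hypothesis permits.

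The paper closes exactly this hole with two small changes. First, the reference instant is the marking of the chain edge $(v_1,v_2)$, which guarantees two red pebbles on two \emph{chain} nodes in the first half of the segment, never on a source. Second, each source $u_i$ is paired only with the corresponding chain node $v_{d+i}$ in the \emph{second} half of the segment, so the $d$ pairs are pairwise disjoint and disjoint from $\{v_1,v_2\}$; with at most $d-1$ red pebbles left for these $2d$ nodes, some pair $(u_i,v_{d+i})$ is entirely unpebbled at that instant, and a single case split on whether $(u_i,v_{d+i})$ is already marked --- your fact \emph{(a)} in one branch and your fact \emph{(b)} in the other --- forces a later load on $u_i$ or on $v_{d+i}$, occurring before the next segment's reference instant. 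If you restrict your witnesses to the second-half chain nodes and certify everything at the marking of the window's first chain edge, your argument becomes the paper's proof.
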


\begin{proof}
Assume that we never have $d+2$ red pebbles on the gadget simultaneously. Let us split the chain into segments of length $2d$; we show that each segment incurs at least one I/O step. Let $v_1$, ..., $v_{2d}$ denote the current chain segment. Consider the time when we mark the edge $(v_1, v_2)$, and thus we have red pebbles on both $v_1$ and $v_2$. By assumption, we have at most $d_{\!}-_{\!}1$ other red pebbles in the gadget. As such, if we consider the $d$ source nodes and the $d$ chain nodes $v_{d+1}$, ..., $v_{2d}$, there is at least one index $i \!\in\! [d]$ such that neither $u_i$ nor $v_{d+i}$ has a red pebble. Then the computation of $v_{d+i}$ will require a load operation, on $u_i$ if the edge $(u_i, v_{d+i})$ is unmarked, and on $v_{d+i}$ if it is marked already. This adds up to a cost of $\frac{\ell}{2d}$ over the whole chain.
\end{proof}

\subsection{On the ratio between $\texttt{OPT}_{RBP}$ and $\texttt{OPT}_{PRBP}$}

The previous examples also raise a natural question: how much can the partial compute steps reduce the optimum cost? We show below that this cost decrease can be very large in general, up to a linear factor in $n$. Furthermore, this can also happen in DAGs which, unlike the zipper gadget before, have very small degrees: we show an example for a linear cost decrease in a construction with $\Delta_{in} = 2$ and $\Delta_{out} = 3$.

\begin{proposition} \label{prop:reduced}
There exists a DAG construction with $\Delta_{in} = 2$ and $\Delta_{out} = 3$ such that we have $\texttt{OPT}_{RBP} = \Theta(n)$ but $\texttt{OPT}_{PRBP} = 2$.
\end{proposition}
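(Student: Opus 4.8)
The plan is to build the DAG by chaining many copies of the gadget $G_0$ obtained from Figure~\ref{fig:decreased_opt} by deleting $u_0$, $v_0$ and the four dashed edges; thus $G_0$ has sources $u_1,u_2$, sinks $v_1,v_2$, and internal nodes $w_1,w_2,w_3,w_4$. Take $N$ copies $G_0^{(1)},\dots,G_0^{(N)}$, identify the sinks $v_1^{(k)},v_2^{(k)}$ of copy $k$ with the sources $u_1^{(k+1)},u_2^{(k+1)}$ of copy $k+1$ for $k=1,\dots,N-1$, and finally attach one global source $s$ with edges to $u_1^{(1)},u_2^{(1)}$ and one global sink $t$ with edges from $v_1^{(N)},v_2^{(N)}$. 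Inspecting $G_0$ and the merged nodes gives $\Delta_{in}=2$ and $\Delta_{out}=3$ (the out-degree $3$ comes from the $u_1$-nodes, whose out-neighbours in their copy are $w_1,w_2,w_4$); the DAG has $n=\Theta(N)$ nodes, and its only source and sink are $s$ and $t$, so the trivial cost is $m=2$. Throughout we fix $r=4$.

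For the PRBP upper bound I would simply concatenate the optimal PRBP strategy of Proposition~\ref{prop:example} copy by copy. After loading $s$, marking its two out-edges (placing dark red pebbles on $u_1^{(1)}$ and $u_2^{(1)}$) and deleting the light red on $s$, we process the copies one at a time: entering copy $k$ we hold dark red pebbles exactly on $u_1^{(k)}=v_1^{(k-1)}$ and $u_2^{(k)}=v_2^{(k-1)}$, and the exact sequence of partial-compute and delete steps used for the $w$-nodes in Proposition~\ref{prop:example} lets us finish copy $k$ holding dark red pebbles on $v_1^{(k)}$ and $v_2^{(k)}$, never exceeding $4$ red pebbles and doing no I/O. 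After the last copy we mark the two in-edges of $t$, save $t$, and stop. This costs $2$ I/O operations, and since $\texttt{OPT}_{PRBP}\ge m=2$ we get $\texttt{OPT}_{PRBP}=2$.

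The core of the proof is the RBP lower bound $\texttt{OPT}_{RBP}\ge N+2$; the matching upper bound $\texttt{OPT}_{RBP}=O(n)$ is routine (process the DAG in topological order, deleting all red pebbles before each compute, loading its $\le 2$ in-neighbours, saving the node afterwards, which is $O(n)$ since $\Delta_{out}=3$). Fix a valid one-shot RBP pebbling and a copy $k$, and let $\tau_k$ be the step at which $w_3^{(k)}$ is computed; then $w_1^{(k)},w_2^{(k)},w_3^{(k)}$ all carry red pebbles at $\tau_k$, so with $r=4$ at most one further node carries a red pebble, hence at least one of $u_1^{(k)},u_2^{(k)}$ has none. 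If $u_1^{(k)}$ has none: it already carried a red pebble when $w_1^{(k)}$ was computed (before $\tau_k$) and needs one again when $w_4^{(k)}$ is computed (after $\tau_k$), and it cannot be recomputed, so it must be reloaded --- a non-trivial I/O on the internal node $u_1^{(k)}$. Otherwise $u_1^{(k)}$ carries the unique extra red pebble and $u_2^{(k)}$ has none; split on whether $u_2^{(k)}$ was computed before or after $\tau_k$. If before, its red pebble must reappear before $v_1^{(k)}$ is computed, forcing a reload of $u_2^{(k)}$. If after, note that $u_1^{(k)}$ and $u_2^{(k)}$ share the same two in-neighbours from the previous copy, namely $w_4^{(k-1)}$ and $u_2^{(k-1)}$ (or $s$ when $k=1$); so $w_4^{(k-1)}$ (resp.\ $s$) carried a red pebble both before $\tau_k$ (when $u_1^{(k)}$ was computed) and after $\tau_k$ (when $u_2^{(k)}$ is computed) but not at $\tau_k$ (it is not the unique extra red pebble), again forcing a reload. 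In every case copy $k$ forces a non-trivial I/O on a node from $\{u_1^{(k)},u_2^{(k)},w_4^{(k-1)}\}$ (with $w_4^{(0)}$ read as $s$); these node sets are pairwise disjoint across $k$ and disjoint from $\{t\}$, and each such reload is distinct from the two trivial I/Os (the first load of $s$ and the save of $t$). Summing over the $N$ copies yields $\texttt{OPT}_{RBP}\ge N+2=\Theta(n)$.

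The main obstacle, which deserves care in the write-up, is exactly this last charging step: the sub-case ``$u_2^{(k)}$ computed after $\tau_k$'' charges an I/O to a node of the \emph{adjacent} copy $k-1$, so one must check that no single I/O operation is attributed to two different copies. Choosing the sink-to-source identification above (rather than routing consecutive copies through one shared node, as the $u_0,v_0$ version of the gadget would) keeps the sets $\{u_1^{(k)},u_2^{(k)},w_4^{(k-1)}\}$ genuinely disjoint and makes this bookkeeping clean; the only remaining subtlety is the boundary copy $k=1$, where the role of $w_4^{(0)}$ is played by the source $s$ and ``reload'' means a second load of $s$, which is still non-trivial.
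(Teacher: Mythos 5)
Your proposal is correct and follows essentially the same route as the paper: the identical chained gadget (the paper names the global source and sink $u_0$ and $v_0$), the same copy-by-copy PRBP strategy achieving the trivial cost $2$, and the same key observation that computing $w_3^{(k)}$ pins three of the four red pebbles and forces a reload in each copy. The only difference is bookkeeping: the paper charges each forced I/O to the disjoint time window between the first red pebble on $u_1^{(k)}$ and the first red pebble on $v_1^{(k)}$, whereas you charge it to a node in the pairwise-disjoint sets $\{u_1^{(k)}, u_2^{(k)}, w_4^{(k-1)}\}$; both accountings are valid.
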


\begin{proof}
The proof uses many copies of the gadget shown before in Figure~\ref{fig:decreased_opt}, concatenated in a serial fashion. That is, let us remove the nodes $u_0$, $v_0$ and the dashed edges from the DAG in the figure. Then consider many copies of the resulting gadget, always merging node $v_1$ of the $i$-th gadget with node $u_1$ of the $(i\!+\!1)$-th gadget, and node $v_2$ of the $i$-th gadget with node $u_2$ of the $(i\!+\!1)$-th gadget. In the end, we reinsert the source $u_0$ and add two edges towards $u_1$ and $u_2$ in the first gadget, and reinsert $v_0$, and add two edges from $v_1$ and $v_2$ in the last gadget. Note that the gadget consists of only $8$ nodes, so the resulting DAG contains $\Theta(n)$ copies of the gadget. Assume again that $r=4$.

As before, the trivial cost of $2$ is unavoidable in both RBP and PRBP. However, in PRBP, no further cost is required. If we have a dark red pebble on both $u_1$ and $u_2$, we can compute the gadget without further I/O steps as discussed before, placing dark red pebbles on $v_1$ and $v_2$ in the end; then we can continue with the next gadget similarly. Thus altogether, $\texttt{OPT}_{PRBP} = 2$.

On the other hand, in RBP, consider the subsequence of the pebbling between first placing a red pebble on $u_1$ and first placing a red pebble on $v_1$; we show that at least one I/O step happens in this subsequence. Once again, when $w_3$ is computed, we need to have red pebbles on all of $\{w_1, w_2, w_3\}$ simultaneously, leaving only one red pebble. If this last red pebble is not on $u_1$, then $u_1$ needs to be loaded later to compute $w_4$. If the last red pebble is on $u_1$, then there are two cases like before: if $u_2$ has a blue pebble, then we need to load $u_2$ later for $v_1$, and if $u_2$ has no pebble, then we need to load the input(s) of $u_2$ to compute it. Thus every gadget incurs a cost of at least $1$; this adds up to a total of $\Theta(n)$.
\end{proof}

\subsection{Complexity of $\texttt{OPT}_{RBP} = \texttt{OPT}_{PRBP}$}

It also turns out that in general, it is hard to decide whether partial compute steps could decrease the optimal cost in a concrete DAG.

\begin{theorem} \label{th:nphard}
Given a DAG $G$ and an integer $r$, it is NP-hard to decide whether $\texttt{OPT}_{PRBP} < \texttt{OPT}_{RBP}$.
\end{theorem}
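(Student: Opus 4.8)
The plan is to reduce from $3$-SAT. Given a formula $\phi$ with variables $x_1,\dots,x_n$ and clauses $C_1,\dots,C_m$, we want to build, in polynomial time, a DAG $G_\phi$ and a cache size $r$ such that $\texttt{OPT}_{RBP}(G_\phi)=K$ for some value $K$ depending only on $n$ and $m$ (in particular, \emph{not} on the satisfiability of $\phi$), whereas $\texttt{OPT}_{PRBP}(G_\phi)=K-1$ when $\phi$ is satisfiable and $\texttt{OPT}_{PRBP}(G_\phi)=K$ otherwise. Since $\texttt{OPT}_{PRBP}\le\texttt{OPT}_{RBP}$ holds for every DAG with $r\ge\Delta_{in}+1$, such a construction immediately gives that $\texttt{OPT}_{PRBP}(G_\phi)<\texttt{OPT}_{RBP}(G_\phi)$ if and only if $\phi$ is satisfiable, proving NP-hardness. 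The essential idea is that the single I/O step separating the two optima must be a \emph{partial-computation-only} saving: one that RBP can never realize, so that $\texttt{OPT}_{RBP}$ is rigid, and that a PRBP strategy can realize only when it is at the same time forced to commit to a satisfying assignment of $\phi$. A bolted-on, unconditional partial-computation gadget (as in Proposition~\ref{prop:example} or Proposition~\ref{prop:reduced}) would make the strict inequality hold regardless of $\phi$, so the partial-computation advantage genuinely has to be \emph{gated} by satisfiability.

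Concretely, $G_\phi$ is assembled from three interacting parts. First, a \emph{shortcut gadget} $M$, built along the lines of the gadget in Figure~\ref{fig:decreased_opt} (or a short chain of copies of it): in isolation $M$ costs one more I/O operation in RBP than in PRBP, but realizing the PRBP saving requires that, at one designated moment of the pebbling, a prescribed set of $n$ ``literal'' nodes all carry red pebbles and that enough additional red pebbles are free to carry out the two-stage partial aggregation inside $M$. Second, for each variable $x_i$ a \emph{choice gadget} $V_i$ that can be pebbled with only trivial I/O cost in exactly two ways, representing $x_i=1$ and $x_i=0$, and that propagates the chosen value to the literal nodes and to the relevant clause gadgets. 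Third, for each clause $C_j$ a \emph{clause gadget} $D_j$ assembled from pebble-collecting structures in the spirit of Proposition~\ref{prop:coll}, which can be traversed within the red-pebble budget without extra I/O precisely when at least one of its literals is set to true, and which otherwise ties up (or forces the reload of) red pebbles that $M$ would need for its shortcut. The cache size $r$ is tuned so that the red pebbles freed up by a globally consistent, clause-satisfying assignment are exactly what $M$ needs; an unsatisfiable $\phi$ leaves at least one clause gadget ``blocking'', which destroys the shortcut.

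Completing the proof then amounts to four verifications. (a) $\texttt{OPT}_{RBP}(G_\phi)\le K$: exhibit the naive RBP strategy pebbling every part at its nominal cost, for any $\phi$. (b) $\texttt{OPT}_{RBP}(G_\phi)\ge K$, the rigidity statement: argue — using the pebble-budgeting style of Proposition~\ref{prop:coll} and of the existing RBP hardness constructions~\cite{RBpebbling3, RBpebbling2, mpp} — that the choice and clause gadgets cannot be pebbled below their nominal cost and that, lacking partial computation, RBP can never save the extra step that $M$ forces. (c) $\texttt{OPT}_{PRBP}(G_\phi)\le K-1$ for satisfiable $\phi$: take a satisfying assignment, pebble each $V_i$ in the matching mode, route the resulting red pebbles so that every clause gadget ``passes'', and use the spare capacity to execute $M$'s partial aggregation, saving one step. (d) $\texttt{OPT}_{PRBP}(G_\phi)\ge K$ for unsatisfiable $\phi$: from an arbitrary PRBP strategy read off a truth assignment from the modes in which the $V_i$ are pebbled, observe that some clause gadget is then blocking, and conclude that the $M$-shortcut cannot be realized, so the strategy pays the full cost $K$.

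I expect the main obstacle to be the design of the clause gadgets together with steps (b) and (d): making ``the shortcut of $M$ is realizable'' \emph{exactly equivalent} to ``$\phi$ is satisfiable'', while keeping all three parts rigid both in RBP and in ``PRBP restricted to not using the shortcut''. This calls for a careful global accounting of red pebbles across the whole DAG — the kind of analysis that already makes RBP hardness proofs technical — now additionally complicated by having to prevent PRBP's partial-aggregation freedom from leaking any saving away from the single intended location. A secondary point is keeping $\Delta_{in}$ bounded so that the comparison is well defined (we need $r\ge\Delta_{in}+1$), which may force building the gadgets from bounded-degree pieces at the cost of blowing up their size polynomially.
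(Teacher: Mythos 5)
There is a genuine gap: what you have written is a proof plan whose critical components are precisely the parts you defer. The clause gadget ``which can be traversed within the red-pebble budget without extra I/O precisely when at least one of its literals is set to true,'' the choice gadget with exactly two zero-cost modes, and the rigidity claims (b) and (d) are all stated as desiderata, not constructed or proved. You acknowledge this yourself (``I expect the main obstacle to be the design of the clause gadgets together with steps (b) and (d)''), but that obstacle is the entire content of the theorem. Worse, your framing imposes an unusually delicate constraint that standard SAT-to-pebbling reductions do not have to meet: you need $\texttt{OPT}_{RBP}$ to be \emph{exactly} $K$ independently of satisfiability while $\texttt{OPT}_{PRBP}$ drops by exactly one only when $\phi$ is satisfiable. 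Pinning a pebbling optimum to an exact value across all strategies is already the hard part of existing RBP hardness proofs; doing it simultaneously for two models, while ensuring that PRBP's partial-aggregation freedom cannot leak a one-step saving anywhere except through your gated shortcut, is not something one can assume will work out. Without the gadgets in hand, there is no way to check that a clever PRBP strategy does not, say, partially compute across a clause gadget boundary and recover the saved step even for unsatisfiable $\phi$.

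For comparison, the paper avoids building SAT machinery from scratch. It reduces from a vertex-constrained maximum-independent-set problem (\textsc{maxinset-vertex}, proved NP-hard separately in Lemma~\ref{lem_inset}) and reuses the independent-set pebbling construction of~\cite{RBpebbling3, mpp}, built from the pebble-collection gadgets of Proposition~\ref{prop:coll}, whose optimal-cost structure is already understood: the optimal pebbling order of the gadgets corresponds to a maximum independent set. The only new element is a single extra sink $w$ with six in-edges from designated node sets $Z_1, Z_2$ in the two gadgets of the distinguished vertex $v_0$; computing $w$ costs the same in RBP and PRBP when those gadgets are visited consecutively (i.e.\ $v_0$ lies in some maximum independent set), and costs $3$ in RBP versus $2$ in PRBP otherwise. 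This localizes the entire RBP-versus-PRBP gap to one small, analyzable gadget sitting on top of a known-rigid construction. If you want to pursue your 3-SAT route, you would essentially have to rebuild an analogue of that rigidity analysis, which is why the proposal as it stands does not constitute a proof.
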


This is a more technical proof that requires a modified version of a DAG construction from previous works~\cite{RBpebbling3, mpp}. This allows for a reduction from a version of the maximum independent set problem. Recall that given an undirected graph $G_0=(V_0, E_0)$, an independent set is a subset of nodes $V' \subseteq V_0$ that induces no edges, and a maximum independent set is one that maximizes $|V'|$ over all possible independent sets in $G_0$.

\begin{definition}
Let \textsc{maxinset-vertex} denote the following problem: given an undirected graph $G_0$ and node $v_0$ in $G_0$, is there a maximum independent set in $G_0$ that contains $v_0$?
\end{definition}

Despite the extensive literature on clique and independent-set problems, we are not aware that this specific variant has been proven NP-hard, and hence we include a proof in Appendix~\ref{app:clique}.

\begin{lemma} \label{lem_inset}
The \textsc{maxinset-vertex} problem is NP-hard.
\end{lemma}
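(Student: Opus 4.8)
The plan is to reduce from the standard \textsc{max independent set} (or equivalently its decision version, or \textsc{clique}), which is NP-hard. Given an undirected graph $G_0 = (V_0, E_0)$ as an instance of \textsc{max independent set}, I would construct an augmented graph $G_0'$ together with a distinguished vertex $v_0'$ such that $G_0'$ has a maximum independent set containing $v_0'$ if and only if some property of $G_0$ that encodes the size of its maximum independent set holds. The cleanest route is a Turing-style or "padding" reduction: attach to $G_0$ a gadget whose interaction with $v_0'$ forces $v_0'$ into a maximum independent set exactly when the maximum independent set of $G_0$ has a prescribed size $k$.

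Concretely, here is the construction I would carry out. Take a fresh vertex $v_0'$ and a fresh clique (or more simply, a set of $k$ mutually non-adjacent fresh vertices $z_1, \dots, z_k$, i.e.\ an independent set of size $k$) that is completely disconnected from $G_0$; also make $v_0'$ adjacent to \emph{every} vertex of $V_0$ but to \emph{none} of the $z_i$. Call the result $G_0'$. Then any independent set of $G_0'$ either (a) avoids $v_0'$, in which case it is an independent set of $G_0$ together with any subset of $\{z_1,\dots,z_k\}$, giving maximum size $\alpha(G_0) + k$ where $\alpha$ denotes the independence number; or (b) contains $v_0'$, in which case it contains no vertex of $V_0$ (since $v_0'$ is adjacent to all of them) but may contain all of $\{z_1,\dots,z_k\}$, giving maximum size $1 + k$. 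Hence a maximum independent set of $G_0'$ contains $v_0'$ if and only if $1 + k \ge \alpha(G_0) + k$, i.e.\ if and only if $\alpha(G_0) \le 1$. This reduces the (NP-hard) question "is $\alpha(G_0) \le 1$?" — wait, that question is in P — so I would instead make the gadget's contribution \emph{depend on a target value}: run the reduction for each candidate threshold $k \in \{1, \dots, |V_0|\}$, attaching an independent set of size exactly $|V_0| - k$ off of $v_0'$ and making $v_0'$ adjacent to all of $V_0$; then the max independent set of $G_0'$ contains $v_0'$ iff $1 + (|V_0|-k) \ge \alpha(G_0) + (|V_0|-k)$, i.e.\ iff $\alpha(G_0) \le 1$, which is still trivial. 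The fix is to also pad the "avoid $v_0'$" branch: make $v_0'$ adjacent to all of $V_0$ \emph{and} add a separate pendant structure so that the two branches become comparable only at the true value of $\alpha(G_0)$; equivalently, binary-search over $k$ using a polynomial number of calls to the \textsc{maxinset-vertex} oracle, which establishes NP-hardness under Turing reductions, and that suffices for the paper's use.

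The main obstacle, and the part I expect to require the most care, is turning this into a clean \emph{many-one} reduction rather than a Turing reduction, since an oracle that answers "is there a max independent set through $v_0'$?" does not directly reveal $\alpha(G_0)$. The right gadget is: introduce a path or matching of fresh vertices whose only role is to give $v_0'$ a guaranteed "reward" of exactly $t$ when it is selected (e.g.\ $v_0'$ is adjacent to all of $V_0$, and additionally there is an independent set $Z$ of size $t$ adjacent to every vertex of $V_0$ but not to $v_0'$, so $Z$ is "blocked" whenever any original vertex is chosen). Then selecting $v_0'$ yields $1 + t$; not selecting it yields $\alpha(G_0)$ (the $Z$-vertices are blocked by any nonempty independent set in $V_0$, and if the independent set in $V_0$ is empty we may take all of $Z$, size $t$, but $1+t > t$ so $v_0'$ still wins in that degenerate case — handled separately when $\alpha(G_0)=0$, i.e.\ $E_0 = \binom{V_0}{2}$, trivial). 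So a max independent set contains $v_0'$ iff $1 + t \ge \alpha(G_0)$, i.e.\ iff $\alpha(G_0) \le t+1$; choosing $t$ appropriately and combining with the NP-hardness of deciding $\alpha(G_0) \ge k$ gives the result. I would double-check the corner cases ($G_0$ complete, $G_0$ edgeless, $\alpha(G_0)$ exactly at the threshold, ties in the max independent set size) and confirm that the gadget blows up the instance size only polynomially, which is immediate since $|Z| = t \le |V_0|$.
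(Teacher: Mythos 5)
Your final gadget is correct, but it takes a genuinely different route from the paper. The paper works in the complement (clique) formulation and gives a search-to-decision Turing reduction: assuming a polynomial-time algorithm for \textsc{maxclique-vertex}, it repeatedly queries the oracle on every vertex and argues that one can always safely delete some vertex (either a vertex in no maximum clique, or a non-universal vertex, using the fact that some non-neighbor of it also lies in a maximum clique), thereby constructing a maximum clique in polynomially many calls. Your approach is instead a single-query padding reduction from the threshold question ``$\alpha(G_0)\ge k$'': you add a vertex $v_0'$ adjacent to all of $V_0$ and an independent set $Z$ of size $t$ adjacent to all of $V_0$ but not to $v_0'$, so that the maximum independent set of the new graph is $\max(1+t,\alpha(G_0))$ and contains $v_0'$ exactly when $\alpha(G_0)\le t+1$; taking $t=k-2$ makes the oracle answer the negation of ``$\alpha(G_0)\ge k$''. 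This is tighter in one sense (a one-call, essentially many-one reduction to the complement, showing the problem is coNP-hard under Karp reductions), whereas the paper's argument shows something slightly stronger in another direction (an oracle for the problem lets you \emph{find} a maximum clique); both establish NP-hardness under Turing reductions, which is all the paper needs. Two presentational caveats: your write-up contains a false start (the disconnected-$Z$ gadget, which you correctly discard) that should be cut, and the aside that $\alpha(G_0)=0$ corresponds to $G_0$ being complete is wrong ($\alpha=0$ only for the empty vertex set; a complete graph has $\alpha=1$) --- but this corner case is absorbed by the uniform identity $\max(1+t,\alpha(G_0),t)=\max(1+t,\alpha(G_0))$ and does not affect the reduction.
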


\renewcommand*{\proofname}{Proof sketch for Theorem~\ref{th:nphard}}
\begin{proof}
The main ingredient of the reduction is the pebble collection gadget of Proposition~\ref{prop:coll}, with a choice of $d=r-2$; we have shown that this works identically in PRBP. Given a \textsc{maxinset-vertex} problem, for each node $u_0$ in $G_0$, we create two such gadgets, i.e.\ two groups of $(r-2)$ source nodes, denoted $H_1(u_0)$ and $H_2(u_0)$, and their corresponding chains. Besides this, for some parameter $b$, we merge the $i$-th node of $H_1(u_0)$ and the $i$-th node of $H_2(u_0)$ into a single node for all $i \in [b]$. Finally, for each edge $(u_1, u_2)$ in $G_0$, we replace a node in $H_2(u_2)$ by a node in the middle of the chain of $H_1(u_1)$, and vice versa, we replace a node in $H_2(u_1)$ by a node in the middle of the chain of $H_1(u_2)$.

With $\ell$ large enough, the gadgets ensure that we need to have all the $r$ red pebbles on them simultaneously in any reasonable solution. In particular, if for every gadget we use all the red pebbles together to compute it, then the total cost in the whole DAG is less than $\frac{\ell}{2(r-2)}$ even in the worst case, whereas if we do not do this for any single gadgets, the cost becomes at least $\frac{\ell}{2(r-2)}$ according to Proposition~\ref{prop:coll}. Due to this, any reasonable pebbling strategy comes down to the order of `visiting' the gadgets, i.e.\ having all red pebbles on them. If $H_1(u_0)$ and $H_2(u_0)$ are visited consecutively, then this saves a cost of $b$ by not having to reload their merged source nodes again. However, the set of groups visited consecutively can only be an independent set in $G_0$, due to the dependencies introduced between groups. As such, in both RBP and PRBP, the optimal pebbling visits the gadget-pairs of a largest-possible independent set in $G_0$ consecutively. For more details on the main idea of the construction, see ~\cite{RBpebbling3, mpp}.

To extend this reduction idea to our current theorem, consider the input node $v_0$ from \textsc{maxinset-vertex}. We further take two sets $Z_1$, $Z_2$ of $3$ nodes from $H_1(v_0)$ and $H_2(v_0)$, respectively. We add an extra sink node $w$ with edges from all $6$ nodes in $Z_1$ and $Z_2$. Saving the final value of $w$ always incurs an extra trivial cost of $1$; we disregard this step in the analysis below.

If there is an optimal pebbling where $H_1(v_0)$ and $H_2(v_0)$ are visited consecutively (i.e.\ $v_0$ is contained in a maximum independent set), we can easily create a point in the pebbling where all nodes of $Z_1$ and $Z_2$ have a red pebble, by ensuring that the pebbles of $Z_1$ are the last ones from $H_1(v_0) \setminus H_2(v_0)$ to be deleted, and that the pebbles of $Z_2$ are the first ones from $H_2(v_0) \setminus H_1(v_0)$ to be placed. Assuming $r > b + 7$, we can easily compute $w$ at this point at no extra cost in both RBP and PRBP. On the other hand, if $H_1(v_0)$ and $H_2(v_0)$ are not consecutive in any reasonable pebbling, then no optimal visitation has pebbles on $Z_1$ and $Z_2$ simultaneously. In this case, in PRBP, we can partially compute $w$ from the inputs in $Z_1$, save it to slow memory, load it back when $Z_2$ has red pebbles, and finish computing it, at a further cost of only $2$. However, in RBP, we need to reload all $3$ nodes of e.g.\ $Z_1$ once again when we only have red pebbles on $Z_2$, and compute $w$ at this point, incurring further I/O cost of $3$. As such, we have $\texttt{OPT}_{PRBP} < \texttt{OPT}_{RBP}$ if and only if $H_1(v_0)$ and $H_2(v_0)$ cannot be consecutive in an optimal pebbling, i.e.\ when \textsc{maxinset-vertex} evaluates to false.

For more details on the construction, see Appendix~\ref{app:np}.
\end{proof}
\renewcommand*{\proofname}{Proof}

\section{S-partitions and lower bounds}

One of the main applications of red-blue pebbling is to develop lower bounds on $\texttt{OPT}_{RBP}$, i.e.\ the number of required I/O steps for specific computational DAGs from concrete applications. In both the original paper of Hong and Kung~\cite{RBpebbling1} and many follow-up works~\cite{elango2015characterizing, ranjan2012upper}, this is done via a special kind of DAG partitioning concept that allows to derive such lower bounds. In this section, we show that the same partitioning concept does not allow us to directly develop bounds that carry over to PRBP.

\subsection{S-partitions in RBP}

As one of their main tools, Hong and Kung introduce the notion of an $S$-partition of a DAG. To present this, we first define two auxiliary concepts for a given subset of nodes $V_0 \subseteq V$ in our DAG.

\begin{definition}
A subset of nodes $D \subseteq V$ is a \emph{dominator for $V_0$} if for every directed path $\pi$ that starts at a source node and ends in a node of $V_0$, it holds that $\pi$ contains a node of $D$.
\end{definition}

\begin{definition} \label{def:term}
The \emph{terminal set of $V_0$} is the set of nodes $v \in V_0$ that satisfy the following property: none of the out-neighbors of $v$ are contained in $V_0$.
\end{definition}

We note the sets in Definition~\ref{def:term} are instead called `minimum sets' in the work of~\cite{RBpebbling1}; we changed it to terminal sets since we found this original name somewhat misleading.

Given the concepts of dominators and terminal sets, an $S$-partition is defined as follows.

\begin{definition} \label{def:spart}
For a  parameter $S \! \in \! \mathbb{Z}^+$, an \emph{$S$-partition of $G$} is a disjoint partitioning $V_1$, ..., $V_k$ of the nodes of $V$ of $G$ such that
\begin{enumerate}[label=(\roman*), topsep=4pt, itemsep=1.5pt]
 \item there is no cyclic dependency among the $V_i$, i.e.\ if $u \in V_i$ and $v \in V_j$ for $i>j$, then $(u,v) \notin E$;
 \item for each $V_i$, there is a dominator set of size at most $S$ in $G$;
 \item for each $V_i$, the terminal set of $V_i$ has size at most $S$.
\end{enumerate}
\end{definition}

Hong and Kung show that if we have a valid RBP pebbling in $G$ with $r$ red pebbles, then we can use this to generate an $S$-partition of $G$ with $S=2r$ into $k$ classes such that the I/O-cost of the pebbling strategy is between $r \cdot k$ and $r \cdot (k-1)$. On a high level, this is obtained by cutting up the pebbling strategy into smaller subsequences by splitting after every $r$-th I/O operation; we then create a class $V_i$ for each subsequence, and sort each node $v \! \in \! V$ into the class of the first subsequences which places a red pebble on $v$. Intuitively, the $i$-th subsequence starts with at most $r$ red pebbles on $G$, and can load red pebbles on at most $r$ more nodes, so these $2r$ nodes are a dominator for $V_i$: all nodes computed in the subsequence indirectly come from these input values. Similarly, the $i$-th subsequence ends with at most $r$ red pebbles on $G$, and saves at most $r$ nodes to slow memory; the terminal set of $V_i$ is a subset of these $2r$ nodes. We refer the reader to~\cite{RBpebbling1} for more details.

Due to this, if $\texttt{MIN}_{part}(S)$ denotes the minimum number of classes that any $S$-partition of $G$ has, then the I/O cost in $G$ satisfies
\[ \texttt{OPT}_{RBP} \, \geq \, r \cdot (\texttt{MIN}_{part}(2r) - 1) \, . \]
As such, analyzing the $S$-partitions of concrete DAGs allows us to derive lower bounds on their I/O cost.

\subsection{S-partitions in PRBP}

Since $S$-partitions are a very efficient tool to develop lower bounds in RBP, it is natural to wonder whether the same lower bound applies to PRBP. We answer this negatively: with partial computations, the optimal cost can be much lower than the bound in the expression above.

\begin{lemma} \label{th:s-part}
There exists a construction where $\texttt{OPT}_{PRBP} = O(1)$, but we have $\texttt{MIN}_{part}(2r) = \Theta(n)$, i.e., the smallest $S$-partition according to~\cite{RBpebbling1} consists of $\Theta(n)$ classes.
\end{lemma}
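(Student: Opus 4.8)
The plan is to exhibit an explicit DAG family $G_n$ and verify the two claims separately. The guiding intuition is the very mismatch the section is about: the Hong--Kung bookkeeping argument charges a dominator of size $\le 2r$ to a class because, in RBP, at most $r$ values can be ``active'' in fast memory and at most $r$ more can be loaded inside a window; in PRBP the inputs of an operation never need to be present simultaneously, so a pebbling can feed a wide aggregation incrementally and stay within $r$ red pebbles, while the $S$-partition definition is still forced to slice the wide aggregation finely. Concretely, since a single class spanning the whole DAG is always a valid $S$-partition whenever the DAG has at most $2r$ sources and at most $2r$ sinks (take $D=$ all sources, terminal set $=$ all sinks), the construction must have more than $2r$ sources (hence $r=O(1)$, as $\texttt{OPT}_{PRBP}=O(1)$ already forces $O(1)$ sources and sinks). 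I would therefore build $G_n$ around $q>2r$ sources feeding a long ``aggregation backbone'' of $\Theta(n)$ nodes that is arranged so that (i) a PRBP strategy can sweep it while loading each source only $O(1)$ times and keeping $\le r$ red pebbles, but (ii) every sufficiently large node set receives $q>2r$ internally vertex‑disjoint paths from the sources.

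For the upper bound $\texttt{OPT}_{PRBP}=O(1)$ I would spell out the sweep explicitly and count I/O operations: load the sources one group at a time (or one by one), carry the current partial aggregate as a single dark-red pebble, compute each backbone node incrementally via \textsc{partial compute} using at most two auxiliary red pebbles, and \textsc{delete} each node as soon as its outgoing edges are marked; the only I/O steps are the $O(1)$ source loads and the $O(1)$ sink saves, so the cost is $O(1)$. The key point to check here is that the conditions of the \textsc{partial compute} rule are met throughout (in particular that we never need a value back after it has been deleted without a blue pebble), which is exactly where the incremental structure of the backbone is used.

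For the lower bound $\texttt{MIN}_{part}(2r)=\Theta(n)$ I would show that \emph{every} class of any valid $2r$-partition has $O(1)$ nodes; since the classes partition $V$, this immediately yields $\ge n/O(1)=\Theta(n)$ classes. Fix a class $V_i$; by the partition's acyclicity condition it is convex (closed under taking intermediate vertices of internal paths). I would then argue by a Menger/counting argument that if $|V_i|$ exceeds a suitable constant, one can route $q>2r$ vertex-disjoint paths from the $q$ sources into $V_i$ (landing at distinct vertices of $V_i$), so its minimum dominator has size $\ge q>2r=S$, violating condition (ii); where this argument would stall (for sets that touch ``few sources'' in the routing sense) I would instead show the terminal set of $V_i$ has size $>2r$, violating condition (iii). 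Either way no large class is admissible. Finally I would note that valid partitions do exist (the all-singletons partition is one), so $\texttt{MIN}_{part}(2r)$ is finite and equals $\Theta(n)$.

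The main obstacle I expect is precisely the tension between (i) and (ii): an aggregation backbone that PRBP can traverse in $O(1)$ I/O tends to ``funnel'' the source influence through a few vertices, which would give the $S$-partition small dominators and collapse $\texttt{MIN}_{part}$ to $O(1)$; conversely, making the backbone non-funnelable (so that large sets genuinely see $q$ disjoint source paths) risks forcing the PRBP sweep to repeatedly reload sources, destroying the $O(1)$ bound. Getting both to hold simultaneously --- and pinning down the exact constant below which classes stay admissible --- is the delicate part of the construction, and it is where the strict extra power of partial computes (inputs not needed simultaneously) must be leveraged, since for RBP the analogous DAG would obey $\texttt{OPT}_{RBP}\ge r(\texttt{MIN}_{part}(2r)-1)=\Theta(rn)$.
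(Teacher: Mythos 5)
Your plan correctly isolates the necessary conditions (the DAG must have more than $2r$ sources, yet $O(1)$ sources and sinks overall so that the trivial cost stays constant), and it correctly identifies the central tension. But it stops exactly where the proof has to happen: no concrete DAG is exhibited, and the primary lower-bound mechanism you propose --- showing via a Menger-type argument that \emph{every} class of a $2r$-partition has $O(1)$ nodes because every large convex set receives more than $2r$ vertex-disjoint source paths --- is both stronger than needed and essentially incompatible with the $\texttt{OPT}_{PRBP}=O(1)$ requirement. If every large set genuinely sees $>2r$ disjoint source paths, a PRBP sweep cannot keep feeding the aggregation from at most $r$ red pebbles without reloading sources $\Omega(n)$ times; conversely, in any DAG that does admit an $O(1)$-cost sweep, the class containing the sink can typically absorb $\Theta(n)$ ancestors funneled through at most $2r$ sources (small dominator, terminal set $=\{\text{sink}\}$), so the ``every class is $O(1)$'' goal is simply false there. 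You flag this as ``the delicate part,'' but resolving it is the lemma.

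The paper's construction resolves the tension differently and much more cheaply: $7$ sources $u_1,\dots,u_7$, each feeding its own group $H_i$ of $\Theta(n)$ in-degree-$1$ nodes, all of which feed a single sink $v$, with $r=3$ (so $S=2r=6<7$). Each intermediate node depends on a \emph{single} source, so PRBP sweeps one fan at a time with three pebbles, holding $v$ as a dark red accumulator throughout --- total cost $8$. For the partition, a single class is ruled out by the $7$ disjoint length-one source paths; the class $V'$ containing $v$ must therefore exclude some entire group $H_i$ (else a dominator for $V'$ needs $\ge 7$ nodes); and then the work is done not by dominators of every class but by the \emph{terminal-set} condition: every node of that excluded $H_i$ has $v$ as its only out-neighbor, hence is terminal in whatever class it lands in, so the $\Theta(n)$ nodes of $H_i$ must spread over at least $|H_i|/6$ classes. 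Your parenthetical fallback (``where this argument would stall \dots\ show the terminal set of $V_i$ has size $>2r$'') is precisely where the actual argument lives, but in your write-up it is an undeveloped escape hatch rather than the proof.
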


\begin{proof}
Consider the DAG in Figure~\ref{fig:s-partition} with $r=3$. PRBP allows us to pebble this DAG with a trivial I/O cost of $8$: we load one of the source nodes $u_i$, then sequentially go through each node $w \! \in \! H_i$, computing $w$, marking the outgoing edge of $w$ (updating a dark red pebble on $v$), and deleting $w$. After finishing all nodes of $H_i$, we can remove the red pebble from $u_i$, and hence $3$ red pebbles suffice. Apart from loading the $u_i$ and saving $v$, this requires no I/O steps.

Now consider an $S$-partition for this DAG with $S=6$. The partition cannot consist of a single class, since with $7$ source nodes, there exists no dominator set of size $6$. Let $V'$ denote the class that contains the sink node $v$. Note that there is at least one group $H_i$ that has none of its nodes in $V'$; otherwise, there exists no dominator set of size $6$ for $V'$. The nodes in this set $H_i$ have no out-neighbors in their corresponding class (since $v \in V'$), so all these nodes in $H_i$ are contained in the terminal set of their own class. As the terminal sets can have size $6$ at most, this implies that these nodes in $H_i$ must span $|H_i| / 6$ classes at least. With $|H_i|=\Omega(n)$, the $S$-partition then consists of $\Theta(n)$ classes at least. This means that the lower bound obtained from the $S$-partition is also $\Omega(n)$.
\end{proof}
\begin{figure}
    \centering
    \begin{tikzpicture}

    \node[anchor=center] at (-10pt,57pt) {\large $u_1$};
    \node[anchor=center] at (-10pt,-63pt) {\large $u_7$};
    \node[anchor=center] at (108pt,-3pt) {\large $v$};
    \node[anchor=center] at (50pt,116pt) {group $H_1$ of};
    \node[anchor=center] at (50pt,106pt) {$\Theta(n)$ nodes};
    \node[anchor=center] at (50pt,-106pt) {group $H_7$ of};
    \node[anchor=center] at (50pt,-116pt) {$\Theta(n)$ nodes};

    \draw[very thick, gray, densely dashed] (41pt,98pt) rectangle (59pt,22pt);
    \draw[very thick, gray, densely dashed] (41pt,-98pt) rectangle (59pt,-22pt);

    \begin{scope}[thick, arrows=-stealth]
    \draw (0pt,60pt) -- (46pt,60pt);
    \draw (0pt,60pt) -- (46pt,75pt);
    \draw (0pt,60pt) -- (46.5pt,90pt);
    \draw (0pt,60pt) -- (46.5pt,30pt);
    
    \draw (0pt,-60pt) -- (46pt,-60pt);
    \draw (0pt,-60pt) -- (46pt,-45pt);
    \draw (0pt,-60pt) -- (46.5pt,-30pt);
    \draw (0pt,-60pt) -- (46.5pt,-90pt);

    \draw (50pt,60pt) -- (98.5pt,3.5pt);
    \draw (50pt,-60pt) -- (98.5pt,-3.5pt);

    \draw (75pt,0pt) -- (96pt,0pt);
    \end{scope}

    \begin{scope}[thick]
    \draw (0pt,60pt) -- (22.5pt,52.5pt);
    \draw (0pt,-60pt) -- (22.5pt,-67.5pt);

    \draw (50pt,90pt) -- (96.25pt,6.25pt);
    \draw (50pt,75pt) -- (96.25pt,6.25pt);
    \draw (75pt,22.5pt) -- (96.25pt,6.25pt);
    \draw (50pt,30pt) -- (96.25pt,6.25pt);

    \draw (50pt,-30pt) -- (96.25pt,-6.25pt);
    \draw (50pt,-45pt) -- (96.25pt,-6.25pt);
    \draw (75pt,-37.5pt) -- (96.25pt,-6.25pt);
    \draw (50pt,-90pt) -- (96.25pt,-6.25pt);
    \draw (75pt,3pt) -- (92.5pt,0pt);
    \draw (75pt,6pt) -- (92.5pt,0pt);
    \draw (75pt,-3pt) -- (92.5pt,0pt);
    \draw (75pt,-6pt) -- (92.5pt,0pt);
    \end{scope}

    \draw[black, fill=white] (100pt,0pt) circle (1.0ex);
    
    \draw[black, fill=white] (0pt,60pt) circle (1.0ex);
    
    \draw[black, fill=white] (50pt,90pt) circle (1.0ex);
    \draw[black, fill=white] (50pt,75pt) circle (1.0ex);
    \draw[black, fill=white] (50pt,60pt) circle (1.0ex);
    \node[anchor=center] at (50pt,42pt) {\large $...$};
    \draw[black, fill=white] (50pt,30pt) circle (1.0ex);

    \node[anchor=center] at (0pt,-1pt) {\Large $\mathbf{...}$};

    \draw[black, fill=white] (0pt,-60pt) circle (1.0ex);

    \draw[black, fill=white] (50pt,-30pt) circle (1.0ex);
    \draw[black, fill=white] (50pt,-45pt) circle (1.0ex);
    \draw[black, fill=white] (50pt,-60pt) circle (1.0ex);
    \node[anchor=center] at (50pt,-77pt) {\large $...$};
    \draw[black, fill=white] (50pt,-90pt) circle (1.0ex);

\end{tikzpicture}
    \caption{Construction for Lemma \ref{th:s-part}, consisting of $7$ source nodes $u_1, ..., u_7$, $7$ distinct groups $H_1, ..., H_7$ of $\Theta(n)$ nodes each, and a single sink $v$. The node $u_i$ always has edges to all the nodes in $H_i$, and all the nodes in $H_i$ have an edge towards $v$.}
    \label{fig:s-partition}
\end{figure}
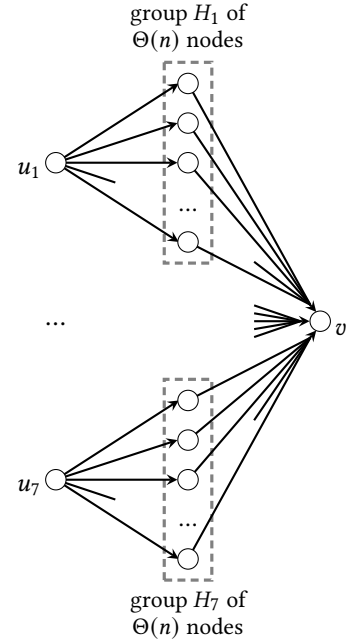

We note that the DAG in the proof has $\Delta_{in} > r$; it remains an interesting question to show a similar example with $(\Delta_{in}+1) \leq r$.

\section{Lower bound tools for PRBP} \label{sec:bounds}

The lemma above suggests that the concept of $S$-partitioning needs adjustments to provide a similarly useful tool for deriving I/O lower bounds in PRBP. In this section, we show two different ways to modify the definition of $S$-partitioning to achieve this. The first method places the focus on the edges of the DAG instead of the nodes, similarly to PRBP; this will give a general-purpose tool to derive lower bounds in PRBP, much like the original $S$-partition concept in RBP. The second one is a simpler approach that is closer to the original $S$-partition, but with one of the conditions omitted; this provides weaker lower bounds in general, but will offer a notably simpler proof for one of the concrete application graphs.

Afterwards, we revisit some relevant application graphs where I/O bounds were derived for RBP, and use our new tools to establish the same bounds with partial computations.

\subsection{$S$-edge partitions}

We first adapt our auxiliary definitions to an edge-based partitioning. For this, let $E_0 \subseteq E$ be a subset of edges in $G$.

\begin{definition}
A subset of nodes $D \subseteq V$ is an \emph{edge-dominator for $E_0$} if for every directed path $\pi$ that starts at a source node and contains an edge in $E_0$, it holds that $\pi$ contains a node of $D$.
\end{definition}

We note that if $Start(E_0)$ denotes the starting points in $E_0$, i.e.\ $Start(E_0)=\{ u \in V \, | \, \exists \, v \in V \text{ with } (u,v) \in E_0\}$, then $D \subseteq V$ is an edge-dominator for $E_0$ if and only if it is a dominator for $Start(E_0)$.

\begin{definition}
The \emph{edge-terminal set of $E_0$} is the set of nodes $v \in V$ that satisfy the following property: at least one incoming edge of $v$ is contained in $E_0$, but no outgoing edge of $v$ is in $E_0$.
\end{definition}

Note that this is a different concept from the terminal sets in Definition~\ref{def:term}: it could be that $(v_1, v_2) \! \in \! E_0$, $(v_2, v_3) \! \notin \! E_0$ and $(v_4, v_3) \! \in \! E_0$, thus both $v_2$ and its out-neighbor $v_3$ are in the edge-terminal set, which is not possible in case of terminal sets.

With these, we can already define a similar concept to $S$-partitions on the edges of our DAG.

\begin{definition} \label{def:separt}
An \emph{$S$-edge partition of $G$} is a disjoint partitioning $E_1$, ..., $E_k$ of the edges $E$ of $G$ such that
\begin{enumerate}[label=(\roman*), topsep=4pt, itemsep=1.5pt]
 \item the $E_i$ are well-ordered in the sense that for $(u,v), (v,w) \in E$ and $i<j$, we never have $(v,w) \in E_i$ and $(u,v) \in E_j$;
 \item for each $E_i$, there is an edge-dominator set of size at most $S$ in $G$;
 \item for each $E_i$, the edge-terminal set of $E_i$ has size at most $S$.
\end{enumerate}
\end{definition}

\begin{lemma} \label{lem:edge_par}
Given a DAG $G$ and an integer $r$, if there is a PRBP strategy of cost $C$ in $G$ with memory capacity $r$, then there exists a corresponding $(2r)$-edge partition of $E$ into $k$ classes such that $r \cdot k \geq C \geq r \cdot (k-1)$.
\end{lemma}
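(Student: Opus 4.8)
The proof mirrors the RBP construction of Hong and Kung, but carried out on edges rather than nodes. Given a PRBP strategy of cost $C$ with $r$ red pebbles, I would first cut the sequence of transition rules into consecutive subsequences $P_1, \dots, P_k$ by splitting immediately after every $r$-th I/O operation (save or load). Since there are $C$ I/O operations in total, we get $k = \lceil C/r \rceil$ subsequences, so $r\cdot k \geq C \geq r\cdot(k-1)$ as required. The essential bookkeeping difference from RBP is that the "unit of computation" in PRBP is a \emph{partial compute} step along an edge $(u,v)$, not a compute step on a node; accordingly, I assign each edge $(u,v) \in E$ to the class $E_i$ where $i$ is the index of the subsequence $P_i$ in which the partial compute step on $(u,v)$ is executed. (Every edge is marked exactly once, by the one-shot assumption, so this is a well-defined disjoint partition of $E$.)

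**Verifying the three conditions.** For the well-ordering condition (i): if $(u,v)$ and $(v,w)$ are both edges, then the partial compute step on $(v,w)$ requires all incoming edges of $v$ to already be marked — in particular $(u,v)$ — so $(u,v)$ is marked no later than $(v,w)$, giving $i \leq j$ whenever $(u,v) \in E_j$ and $(v,w) \in E_i$ with the marking of $(v,w)$ in subsequence $i$; this is exactly condition (i). For the edge-dominator condition (ii): at the start of subsequence $P_i$ there are at most $r$ red pebbles on the DAG, and during $P_i$ at most $r$ further load operations occur, so at most $2r$ nodes ever receive a red pebble "fresh" at or during $P_i$. I claim this set $D_i$ of at most $2r$ nodes is an edge-dominator for $E_i$: any edge $(u,v) \in E_i$ is marked during $P_i$, which (by the partial compute rule) requires a red pebble on $u$ that was either already present at the start of $P_i$ or loaded during $P_i$ — wait, more carefully, it requires $u$ to be fully computed with a red pebble, and tracing back along any source-to-$u$ path, the value on $u$ ultimately derives from values that were either red at the start of $P_i$ or loaded during it, since any value used earlier and deleted had to be saved (by the delete rule) and hence reloaded. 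So every source-to-$\mathit{Start}(E_i)$ path passes through $D_i$, which by the remark after the edge-dominator definition means $D_i$ edge-dominates $E_i$. For the edge-terminal condition (iii): symmetrically, at the end of $P_i$ there are at most $r$ red pebbles, and at most $r$ save operations happened during $P_i$; I would argue the edge-terminal set of $E_i$ — nodes with an incoming $E_i$-edge but no outgoing $E_i$-edge — is contained in the set of at most $2r$ nodes that are red at the end of $P_i$ or saved during $P_i$. The point is that such a node $v$ has received (at least) a partial value during $P_i$, but its value is not used as an input along any edge leaving $v$ within $P_i$ (its outgoing edges are marked in later subsequences, or it is a sink), so either $v$ still carries a red pebble at the end of $P_i$, or $v$'s red pebble was removed during $P_i$, which by the delete rule forced a save of $v$ first.

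**Main obstacle.** The delicate point — and the one I would spend the most care on — is the edge-terminal argument (iii), specifically handling a node $v$ whose outgoing edges are all marked in \emph{later} subsequences but whose dark-red pebble gets deleted during $P_i$. In RBP the analogous node simply has a red pebble that must be saved before deletion; here I must additionally rule out the scenario where $v$ is only \emph{partially} computed during $P_i$, its dark-red pebble is deleted (which, crucially, the delete rule forbids unless $v$'s value is up-to-date in slow memory or all $v$'s out-edges are marked), and then $v$ is "revived" by a later partial compute on another incoming edge. The delete rule's condition on dark-red pebbles — only deletable when all output edges are marked — is exactly what closes this gap: if not all out-edges of $v$ are marked yet, then to delete $v$'s red pebble it must first be saved, so $v$ lands in the "saved during $P_i$" count. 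I would also double-check the subtle accounting of nodes that are red at the start/end of $P_i$ versus loaded/saved during it, to be sure no node is double-counted and the bound genuinely stays at $2r = S$. Once these cases are dispatched, assembling the $(2r)$-edge partition and reading off $r\cdot k \geq C \geq r\cdot(k-1)$ is immediate from the splitting.
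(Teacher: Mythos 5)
Your proposal is correct and follows essentially the same route as the paper's proof: the same splitting after every $r$-th I/O step, the same assignment of each edge to the subsequence in which it is marked, the same edge-dominator built from nodes that are red at the start of the subsequence plus nodes loaded during it, and the same terminal-set argument resting on the delete rule's restriction that a dark red pebble can only be removed once all out-edges are marked. The only details the paper spells out that you leave implicit are the explicit induction over $Start(E_i)$ in topological order for condition (ii) and a separate one-line validity argument for sink nodes in condition (iii) (for which the "all out-edges marked" condition is vacuous); neither changes the substance.
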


\begin{proof}
Similarly to the analysis of $S$-partitions, we split the PRBP strategy into subsequences of steps, with the $i$-th subsequence ending at the $(r\cdot i)$-th I/O operation in PRBP strategy, and the $(i+1)$-th subsequence starting immediately afterwards. If there are any more steps after the last I/O operation, we can simply add these to the last subsequence, too; these can only be delete operations that can be omitted anyway. If the strategy contains $C$ distinct I/O steps, then this results in $k=\lceil \frac{C}{r} \rceil$ subsequences. Note that every edge $(u,v)$ is marked in exactly one partial compute step in our strategy; if this compute step happens in the $i$-th subsequence (for $i \! \in \! [k]$), then we add $(u,v)$ to the class $E_i$.

The edge classes formed this way naturally fulfill property (i) above: for $(u,v), (v,w) \in E$, we can only mark $(v,w)$ in PRBP after $(u,v)$ has been marked.

For property (ii), let $V_R$ denote the set of nodes in the DAG that have a (light or dark) red pebble when the $i$-th subsequence begins, and let $V_B$ be the set of nodes on which we execute a load operation in the $i$-th subsequence. Note that $|V_R| \leq r$ and $|V_B| \leq r$; we show that $V_R \cup V_B$ is an edge-dominator (of size at most $2r$) for $E_i$. Recall that this is equivalent to $V_R \cup V_B$ being a dominator for the node set $Start(E_i)$. For an induction, consider the nodes in $Start(E_i)$ in some topological ordering. Let $u \! \in \! Start(E_i)$ be the current node in this ordering; note that there is at least one edge $(u,v) \! \in \! E_i$. Consider the pebbles on node $u$ at the beginning of the $i$-th subsequence. If $u$ has a red pebble at this point, then $u \! \in \! V_R$. If $u$ has no red pebble but it has a blue pebble, then its value needs to be loaded from slow memory to mark $(u,v)$, so $u \! \in \! V_B$. In these cases, all paths from a source to $u$ are covered by $u \! \in \! V_R \cup V_B$ itself. Finally, if $u$ has no pebble at all at the beginning of the $i$-th subsequence, then all of $u$'s in-edges are also marked within this subsequence. As such, all in-neighbors of $u$ are also in $Start(E_i)$, and hence by induction, all paths to any of these in-neighbors (and thus also all paths to $u$) are also covered by $V_R \cup V_B$.

Similarly, for property (iii), let $V'_R$ now be the set of nodes in the DAG with a red pebble at the end of the $i$-th subsequence, and $V'_B$ be the nodes on which we execute a save operation in the $i$-th subsequence. Every node $v$ in the edge-terminal set of $E_i$ has a dark red pebble at some point in the $i$-th subsequence, since we mark an incoming edge of $v$. If a dark pebble remains on $v$ until the end of the subsequence, then $v \! \in \! V'_R$. If it is removed by a save step, then $v \! \in \! V'_B$. Note that if $v$ is a sink node, we always have $v \! \in \! V'_R$ or $v \in V'_B$: if we remove the dark red pebble from it without a save operation, then we cannot have a valid pebbling. Finally, if $v$ is not a sink, then its dark red pebble could also be removed by a delete step in the same subsequence, but only after all of its out-edges are marked, too. In this case, at least one outgoing edge of $v$ is also in $E_i$, and hence $v$ is not in the edge-terminal set. Thus the edge-terminal set has size at most $2r$.
\end{proof}

We can again define $\texttt{MIN}_{edge}(S)$ as the minimal number of classes in any $S$-edge partition of $G$, which then gives us the following lower bound on I/O cost.

\begin{theorem} \label{th:dom_edge}
For the optimal cost in PRBP, we have
\[ \texttt{OPT}_{PRBP} \, \geq \, r \cdot (\texttt{MIN}_{edge}(2r) - 1) \, . \]
\end{theorem}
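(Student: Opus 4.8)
The plan is to derive Theorem~\ref{th:dom_edge} directly from Lemma~\ref{lem:edge_par} in essentially one line, mirroring the way the RBP bound $\texttt{OPT}_{RBP} \geq r \cdot (\texttt{MIN}_{part}(2r)-1)$ follows from the corresponding $S$-partition construction. First I would take an \emph{optimal} PRBP strategy, i.e.\ one with cost $C = \texttt{OPT}_{PRBP}$, and memory capacity $r$. By Lemma~\ref{lem:edge_par}, this strategy yields a $(2r)$-edge partition of $E$ into some number $k$ of classes with $r \cdot k \geq C \geq r \cdot (k-1)$; in particular $C \geq r \cdot (k-1)$.

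The second step is to relate $k$ to the combinatorial quantity $\texttt{MIN}_{edge}(2r)$. Since the partition produced by the lemma is \emph{a} valid $(2r)$-edge partition of $G$, and $\texttt{MIN}_{edge}(2r)$ is by definition the minimum number of classes over \emph{all} $(2r)$-edge partitions of $G$, we have $k \geq \texttt{MIN}_{edge}(2r)$. Combining with the inequality from the previous step, $\texttt{OPT}_{PRBP} = C \geq r \cdot (k-1) \geq r \cdot (\texttt{MIN}_{edge}(2r) - 1)$, which is exactly the claimed bound. One small point worth a sentence in the write-up: if the PRBP strategy contains no I/O steps at all (possible only in degenerate DAGs with no sources and no sinks, which are excluded since $G$ has no isolated nodes), one should note $k \geq 1$ so the right-hand side is non-negative and the statement is vacuous but still true; otherwise $k = \lceil C/r \rceil \geq 1$ as in the proof of the lemma.

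There is essentially no obstacle here — all the real work was done in Lemma~\ref{lem:edge_par}, whose proof carefully checks that the three defining properties of an $S$-edge partition (well-ordering via the topological constraint on marking edges, the edge-dominator bound from the at-most-$r$ red pebbles at a subsequence boundary plus at-most-$r$ loads, and the edge-terminal bound from at-most-$r$ red pebbles at the end plus at-most-$r$ saves) are satisfied by the subsequence-induced partition. The only thing to be careful about in this final proof is not to misstate the direction of the inequality: we want a lower bound on $\texttt{OPT}_{PRBP}$, so we use $C \geq r\cdot(k-1)$ (not the upper bound $C \leq r \cdot k$), and we use $k \geq \texttt{MIN}_{edge}(2r)$ (not $\leq$), since a strategy-derived partition is just one feasible point in the set over which $\texttt{MIN}_{edge}$ minimizes. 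I would therefore keep the proof to two or three sentences and not belabor it.
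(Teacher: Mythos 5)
Your proposal is correct and matches the paper's own (implicit) derivation: the paper states Theorem~\ref{th:dom_edge} as an immediate consequence of Lemma~\ref{lem:edge_par}, obtained by applying the lemma to an optimal strategy and using $C \geq r\cdot(k-1)$ together with $k \geq \texttt{MIN}_{edge}(2r)$. Nothing further is needed.
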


\subsection{$S$-dominator partitions}

For the second approach, we revisit a partitioning concept from~\cite{RBpebbling1}.
\begin{definition}
An \emph{$S$-dominator partition} of $G$ is a disjoint partitioning $V_1$, ..., $V_k$ of $V$ that fulfills properties (i) and (ii) of Definition~\ref{def:spart}, but not necessarily property (iii).
\end{definition}
Note that this is a weaker concept with less requirements than an $S$-partition; as such, if $\texttt{MIN}_{dom}(S)$ denotes the minimum number of classes that an $S$-dominator partition of $G$ must have, then $\texttt{MIN}_{dom}(S) \leq \texttt{MIN}_{part}(S)$, and as such, these kind of partitions yield looser I/O lower bounds.

Nonetheless, $S$-dominator partitions still have relevance in our case. In particular, we have seen in Lemma~\ref{th:s-part} that unlike RBP, a PRBP strategy does not necessarily generate an $S$-partition. However, we show that it still always generates an $S$-dominator partition, and thus similarly to before, we can establish a lower bound.

\begin{theorem} \label{th:dom_lower}
For the optimal cost in PRBP, we have
\[ \texttt{OPT}_{PRBP} \, \geq \, r \cdot (\texttt{MIN}_{dom}(2r) - 1) \, . \]
\end{theorem}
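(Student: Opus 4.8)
The plan is to mirror the argument of Lemma~\ref{lem:edge_par}, but to build a partition of the \emph{nodes} $V$ rather than the edges $E$, and to verify only properties (i) and (ii) of Definition~\ref{def:spart}, since property (iii) is not required for an $S$-dominator partition. First I would take an arbitrary PRBP strategy of cost $C$ with $r$ red pebbles, and cut it into $k = \lceil C/r \rceil$ consecutive subsequences, each (except possibly the last) containing exactly $r$ I/O steps, exactly as in the proof of Lemma~\ref{lem:edge_par}. The key difference is the assignment rule: each node $v \in V$ is placed into the class $V_i$ where $i$ is the index of the subsequence in which $v$ \emph{first receives a (light or dark) red pebble} — that is, the first subsequence containing a partial-compute step that marks an incoming edge of $v$ (or, for a source node, the first subsequence that loads $v$; source nodes only ever carry blue/light-red pebbles, so we sort each source into the first subsequence that places a red pebble on it). Since every node acquires a red pebble at some point in any valid pebbling (sources get loaded, non-sources get a dark red pebble when their first in-edge is marked), this assigns every node to exactly one class, giving a genuine partition.

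For property (i) — no backward dependency $(u,v) \in E$ with $u \in V_i$, $v \in V_j$, $i > j$ — I would argue that a red pebble can appear on $v$ only after a red pebble has appeared on $u$: marking the first incoming edge of $v$, if that edge happens to be $(u,v)$, requires a red pebble on $u$ at that moment by condition (ii) of the partial-compute rule; and even if $v$'s first marked in-edge is some other $(u',v)$, the node $v$ still only becomes relevant later, but the point we need is just that $u$ must carry a red pebble no later than the moment $v$ first does, because to eventually mark $(u,v)$ we need $u$ fully computed and red-pebbled, and in the one-shot model $(u,v)$ is marked exactly once. Hence the subsequence index at which $u$ first gets a red pebble is $\le$ that of $v$, so $i \le j$, contradicting $i > j$. (This is essentially the same monotonicity used for property (i) in Lemma~\ref{lem:edge_par}, transferred from edges to their endpoints.)

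For property (ii) I would reuse the dominator argument almost verbatim from Lemma~\ref{lem:edge_par}. Fix a class $V_i$; let $V_R$ be the set of nodes carrying a red pebble at the start of the $i$-th subsequence and $V_B$ the set of nodes loaded during the $i$-th subsequence, so $|V_R|, |V_B| \le r$ and $|V_R \cup V_B| \le 2r$. I claim $V_R \cup V_B$ is a dominator for $V_i$. Take any $v \in V_i$ and any source-to-$v$ path $\pi$; walk backward along $\pi$ from $v$. If a node $w$ on $\pi$ already has a red pebble at the start of the $i$-th subsequence then $w \in V_R$ and we are done; if $w$ has only a blue pebble and some in-edge of $w$ used on $\pi$ gets marked in the $i$-th subsequence, then $w$ had to be loaded, so $w \in V_B$. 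Otherwise $w$ has no pebble at the start of the subsequence, which forces $w$ itself to first get a red pebble inside the $i$-th subsequence — so $w \in V_i$ as well — and moreover the predecessor of $w$ on $\pi$ must already be fully computed and red-pebbled by the time $w$'s in-edge is marked, so we can continue the backward walk. Since $\pi$ is finite and starts at a source (which is in $V_B$ if it was loaded in this subsequence, or in $V_R$ otherwise, or its successor handling applies), the walk must hit $V_R \cup V_B$. This establishes (ii). Combining: the PRBP strategy yields a $(2r)$-dominator partition with $k = \lceil C/r\rceil$ classes, so $C \ge r(k-1) \ge r(\texttt{MIN}_{dom}(2r) - 1)$, and taking $C = \texttt{OPT}_{PRBP}$ gives the claim.

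The main obstacle I anticipate is being careful in property~(ii) about the case where a node $w$ on the path has \emph{no} pebble at the start of the $i$-th subsequence: I must argue that \emph{all} of $w$'s incoming edges (in particular the one on $\pi$) are marked within this subsequence, and that the relevant in-neighbor is therefore itself a node that first gets a red pebble in subsequence $i$ (hence in $V_i$, hence the induction hypothesis applies) — this uses the one-shot property together with the fact that an unpebbled, not-yet-computed node must have been built up entirely from scratch after the subsequence began. The source base case needs a small separate check, exactly as in Lemma~\ref{lem:edge_par}. Everything else is a direct transcription of the RBP $S$-partition argument and the edge-partition argument already given.
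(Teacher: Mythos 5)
Your overall architecture (cut the pebbling into subsequences of $r$ I/O steps each, build a partition of $V$, verify only properties (i) and (ii) via the set $V_R \cup V_B$) matches the paper, but your assignment rule is wrong, and this is a genuine gap rather than a cosmetic difference. You sort each node $v$ into the subsequence where it \emph{first} receives a red pebble, i.e.\ where its \emph{first} in-edge is marked. The paper instead sorts $v$ into the subsequence containing the \emph{last} partial compute step that marks an in-edge of $v$. With your rule, property (i) fails outright: take a node $v$ with two source in-neighbors $a$ and $b$, where the pebbling loads $a$ and marks $(a,v)$ in subsequence $1$, saves the partial value of $v$, and only loads $b$ and marks $(b,v)$ in some much later subsequence $j$. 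Then $v \in V_1$ but $b \in V_j$ with $j>1$, and $(b,v)\in E$ is a backward edge across the partition. Your justification --- ``$u$ must carry a red pebble no later than the moment $v$ first does'' --- is exactly the false step: $v$ can acquire its first dark red pebble through a different in-neighbor long before $u$ is ever touched.

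The same defect undermines your property (ii) argument. You correctly identify the crux (``I must argue that \emph{all} of $w$'s incoming edges are marked within this subsequence''), but under the ``first red pebble'' rule this claim is simply not true: a node with no pebble at the start of subsequence $i$ can have its first in-edge marked in subsequence $i$, be saved, and have its remaining in-edges marked many subsequences later; the in-neighbors on those later edges need not lie in $V_R \cup V_B$ nor in $V_i$, so your backward walk along $\pi$ gets stuck. The ``last in-edge'' rule repairs both issues at once: since a partially computed node always retains a (dark red or blue) pebble until it is finished, a node $v\in V_i$ with no pebble at the start of subsequence $i$ must have \emph{all} of its in-edges marked within subsequence $i$, and because every in-neighbor must be fully computed before its out-edge can be marked, the classes are automatically topologically ordered. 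Everything else in your write-up (the handling of sources, the bound $|V_R\cup V_B|\le 2r$, the final arithmetic $C \ge r(k-1)$) is fine once the assignment rule is corrected.
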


The claim follows directly from the lemma below.

\begin{lemma}
Given a DAG $G$ and an integer $r$, if there is a PRBP strategy of cost $C$ in $G$ with memory capacity $r$, then there exists a corresponding $(2r)$-dominator partition of $V$ into $k$ classes such that $r \cdot k \geq C \geq r \cdot (k-1)$.
\end{lemma}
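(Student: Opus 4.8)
The plan is to follow exactly the skeleton of the proof of Lemma~\ref{lem:edge_par}, but now partitioning the \emph{nodes} of $G$ instead of its edges, and verifying only properties (i) and (ii) of Definition~\ref{def:spart} --- property (iii) being precisely the one that a PRBP strategy need not satisfy, as shown in Lemma~\ref{th:s-part}. First I would cut the given PRBP strategy into $k=\lceil C/r\rceil$ subsequences, the $i$-th one ending right after the $(r\cdot i)$-th I/O operation, and append all trailing (necessarily non-I/O) steps to the last subsequence; this already yields $r\cdot k\geq C\geq r\cdot(k-1)$. The crucial design choice is how to assign a node to a class. Unlike in RBP, a non-source can receive a dark red pebble already after a single incoming edge of it is marked, long before its remaining in-neighbours are pebbled; so labelling $v$ by ``the first subsequence that places a red pebble on $v$'' would break property (i). Instead I would place a non-source $v$ into the class $V_i$ where the \emph{last} incoming edge of $v$ is marked (i.e.\ where $v$ becomes fully computed), and a source $s$ into the class where $s$ is first loaded. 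Because a valid PRBP pebbling marks every edge and loads every source at least once, this assignment is well defined and partitions $V$.

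Property (i) is then straightforward. For an edge $(u,v)\in E$: if $u$ is a non-source, marking $(u,v)$ requires, by condition (i) of the partial-compute rule, that all incoming edges of $u$ are already marked, so $u$ is fully computed no later than the subsequence in which $(u,v)$ is marked, which in turn is no later than the subsequence in which $v$ becomes fully computed; if $u$ is a source, marking $(u,v)$ requires a red pebble on $u$, so $u$ must have been loaded no later than that subsequence. In both cases the class index of $u$ is at most that of $v$, which is exactly property (i).

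The heart of the argument is property (ii). Fix a class $V_i$, let $V_R$ be the set of nodes holding a red pebble at the start of subsequence $i$ and $V_B$ the set of nodes on which a load is executed during subsequence $i$; then $|V_R|\leq r$, $|V_B|\leq r$, and I claim $V_R\cup V_B$ is a dominator for $V_i$. Take $v\in V_i$ and an arbitrary path $\pi=(x_0,\dots,x_\ell=v)$ from a source to $v$. If $v$ is a source it is loaded in subsequence $i$, hence $v\in V_B$; otherwise, if $v$ carries a red pebble at the start of subsequence $i$ then $v\in V_R$, and if it carries only a blue pebble then --- since at least one further incoming edge of $v$ is marked during subsequence $i$, and a node bearing only a blue pebble cannot be partial-computed --- $v$ must be loaded during subsequence $i$, so $v\in V_B$. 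In the remaining case $v$ carries no pebble at the start of subsequence $i$; since a dark red pebble can only be deleted once its node's value has been fully consumed, a node that still needs computation and has no pebble at the start of subsequence $i$ cannot have been partially computed before it, so \emph{all} incoming edges of $v$ --- in particular $(x_{\ell-1},v)$ --- are marked during subsequence $i$. I would then walk backwards along $\pi$: whenever an edge $(x_{j-1},x_j)$ is known to be marked during subsequence $i$, the node $x_{j-1}$ holds a red pebble at that moment, and the same three-way case distinction shows that either $x_{j-1}\in V_R\cup V_B$, or $x_{j-1}$ is a non-source that held no pebble at the start of subsequence $i$ and obtained its red pebble via a partial compute, whence --- as above --- all of its incoming edges, in particular $(x_{j-2},x_{j-1})$, are marked during subsequence $i$. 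The recursion strictly decreases $j$ and cannot get past $x_0$: a source with an outgoing edge marked in subsequence $i$ and no red pebble at the start of that subsequence must have been loaded in it, hence lies in $V_B$. So $\pi$ meets $V_R\cup V_B$, $|V_R\cup V_B|\leq 2r$, and $V_1,\dots,V_k$ is a $(2r)$-dominator partition; since any such partition has at least $\texttt{MIN}_{dom}(2r)$ classes, Theorem~\ref{th:dom_lower} follows.

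I expect the main obstacle to be exactly this backward walk in property (ii). In RBP the only ways a node can acquire a red pebble are a load or a compute that demands red pebbles on all in-neighbours simultaneously, which makes the corresponding induction almost immediate; in PRBP the partial-compute ``shortcut'' forces a node-by-node argument along the path and relies on three structural facts --- that marking $(u,v)$ needs $u$ to be fully computed and red, that a blue-only node must be reloaded before it can be touched again, and that an undeleted dark red pebble certifies that its node has already been (at least partially) computed.
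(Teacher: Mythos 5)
Your proposal is correct and follows essentially the same route as the paper's proof: the same splitting into $\lceil C/r\rceil$ subsequences after every $r$-th I/O step, the same class assignment (non-sources by the subsequence marking their last in-edge, sources by their first load), and the same dominator set $V_R\cup V_B$ with the identical three-way case analysis on the pebble state of a node at the start of the subsequence. The only cosmetic difference is that you phrase the domination argument as a backward walk along an arbitrary source-to-$v$ path, whereas the paper runs an induction over the nodes of $V_i$ in topological order; these are the same argument.
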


\begin{proof}
Let us again divide our pebbling sequence into subsequences by splitting after each $r$-th I/O operation. For every node $v \! \in \! V$, consider the last partial compute step that marks an in-edge of $v$, and sort $v$ into the subclass corresponding to the subsequence that contains this last compute step. These last partial compute steps clearly happen to the nodes in a topological order, so there will be no cyclic dependence among the classes. As a special case, if $v$ is a source node of the DAG, then let us place $v$ into the class of the subsequence that first executes a load operation on $v$.

It only remains to show condition (ii), i.e.\ a dominator set of size $2r$ for each $V_i$. This is very similar to the proof of property (ii) in Lemma~\ref{lem:edge_par}. For a specific $V_i$, let $V_R$ denote the set of nodes with a red pebble before the $i$-th subsequence begins, and $V_B$ be the set of nodes on which we execute a load operation within the $i$-th subsequence. We again claim that $V_R \cup V_B$ is a dominator for $V_i$; this already establishes property (ii), since $|V_R \cup V_B| \leq 2r$. For an induction, consider the nodes of $V_i$ in some topological order, and let $v$ be the current node. Since $v \! \in \! V_i$, at least one of the incoming edges of $v$ was marked during the $i$-th subsequence; this means that at one point during the subsequence, $v$ must have a dark red pebble. As before, if $v$ has a red pebble at the beginning of the subsequence, then $v \! \in \! V_R$, and if $v$ only has a blue pebble (but no red), then we have to load it from slow memory, so $v \! \in \! V_B$. In both cases, $v \! \in \! V_R \cup V_B$, so all paths to $v$ are covered by $v$ itself. On the other hand, if $v$ has no pebble at the beginning, then all in-edges of $v$ are marked during this subsequence. As such, all in-neighbors of $v$ must have a dark red pebble at some point during the subsequence, so it follows by induction that all paths to the in-neighbors (and thus also to $v$) are covered.

Finally, for the special case of source nodes, our assignment ensures that they are always contained in the corresponding $V_B$.
\end{proof}

\subsection{Lower bounds for concrete computations}

We now apply the tools above to some concrete computations.

\subsubsection{Fast Fourier Transform}

Among the structured DAGs that describe concrete computations, one of the most frequently studied is the Fast Fourier Transform (FFT). Consider some integer $m$ that is a power of $2$. The $m$-point FFT graph, also known as the butterfly graph, can be obtained recursively by taking two copies of the $\frac{m}{2}$-point FFT graph, labeling its sink nodes $u_1$, ..., $u_m$, then adding a new layer $v_1$, ..., $v_m$, and adding edges from all $u_i$ to the nodes $v_j$ with $i \equiv j  \mod \frac{m}{2}$. See Figure~\ref{fig:fft} for an illustration for $m=8$.

\begin{figure}
    \centering
    \begin{tikzpicture}

    \begin{scope}[arrows=-stealth]
    \draw (0pt,0pt) -- (47pt,0pt);
    \draw (0pt,15pt) -- (47pt,15pt);
    \draw (0pt,30pt) -- (47pt,30pt);
    \draw (0pt,45pt) -- (47pt,45pt);
    \draw (0pt,60pt) -- (47pt,60pt);
    \draw (0pt,75pt) -- (47pt,75pt);
    \draw (0pt,90pt) -- (47pt,90pt);
    \draw (0pt,105pt) -- (47pt,105pt);

    \draw (0pt,0pt) -- (47.5pt,13pt);
    \draw (0pt,15pt) -- (47.5pt,2pt);
    \draw (0pt,30pt) -- (47.5pt,43pt);
    \draw (0pt,45pt) -- (47.5pt,32pt);
    \draw (0pt,60pt) -- (47.5pt,73pt);
    \draw (0pt,75pt) -- (47.5pt,62pt);
    \draw (0pt,90pt) -- (47.5pt,103pt);
    \draw (0pt,105pt) -- (47.5pt,92pt);

    \draw (50pt,0pt) -- (97pt,0pt);
    \draw (50pt,15pt) -- (97pt,15pt);
    \draw (50pt,30pt) -- (97pt,30pt);
    \draw (50pt,45pt) -- (97pt,45pt);
    \draw (50pt,60pt) -- (97pt,60pt);
    \draw (50pt,75pt) -- (97pt,75pt);
    \draw (50pt,90pt) -- (97pt,90pt);
    \draw (50pt,105pt) -- (97pt,105pt);

    \draw (50pt,0pt) -- (97.5pt,28pt);
    \draw (50pt,15pt) -- (97.5pt,43pt);
    \draw (50pt,30pt) -- (97.5pt,2pt);
    \draw (50pt,45pt) -- (97.5pt,17pt);
    \draw (50pt,60pt) -- (97.5pt,88pt);
    \draw (50pt,75pt) -- (97.5pt,103pt);
    \draw (50pt,90pt) -- (97.5pt,62pt);
    \draw (50pt,105pt) -- (97.5pt,77pt);

    \draw (100pt,0pt) -- (147pt,0pt);
    \draw (100pt,15pt) -- (147pt,15pt);
    \draw (100pt,30pt) -- (147pt,30pt);
    \draw (100pt,45pt) -- (147pt,45pt);
    \draw (100pt,60pt) -- (147pt,60pt);
    \draw (100pt,75pt) -- (147pt,75pt);
    \draw (100pt,90pt) -- (147pt,90pt);
    \draw (100pt,105pt) -- (147pt,105pt);

    \draw (100pt,0pt) -- (147.5pt,58pt);
    \draw (100pt,15pt) -- (147.5pt,73pt);
    \draw (100pt,30pt) -- (147.5pt,88pt);
    \draw (100pt,45pt) -- (147.5pt,103pt);
    \draw (100pt,60pt) -- (147.5pt,2pt);
    \draw (100pt,75pt) -- (147.5pt,17pt);
    \draw (100pt,90pt) -- (147.5pt,32pt);
    \draw (100pt,105pt) -- (147.5pt,47pt);

    \end{scope}

    \draw[black, fill=white] (0pt,0pt) circle (0.8ex);
    \draw[black, fill=white] (0pt,15pt) circle (0.8ex);
    \draw[black, fill=white] (0pt,30pt) circle (0.8ex);
    \draw[black, fill=white] (0pt,45pt) circle (0.8ex);
    \draw[black, fill=white] (0pt,60pt) circle (0.8ex);
    \draw[black, fill=white] (0pt,75pt) circle (0.8ex);
    \draw[black, fill=white] (0pt,90pt) circle (0.8ex);
    \draw[black, fill=white] (0pt,105pt) circle (0.8ex);

    \draw[black, fill=white] (50pt,0pt) circle (0.8ex);
    \draw[black, fill=white] (50pt,15pt) circle (0.8ex);
    \draw[black, fill=white] (50pt,30pt) circle (0.8ex);
    \draw[black, fill=white] (50pt,45pt) circle (0.8ex);
    \draw[black, fill=white] (50pt,60pt) circle (0.8ex);
    \draw[black, fill=white] (50pt,75pt) circle (0.8ex);
    \draw[black, fill=white] (50pt,90pt) circle (0.8ex);
    \draw[black, fill=white] (50pt,105pt) circle (0.8ex);

    \draw[black, fill=white] (100pt,0pt) circle (0.8ex);
    \draw[black, fill=white] (100pt,15pt) circle (0.8ex);
    \draw[black, fill=white] (100pt,30pt) circle (0.8ex);
    \draw[black, fill=white] (100pt,45pt) circle (0.8ex);
    \draw[black, fill=white] (100pt,60pt) circle (0.8ex);
    \draw[black, fill=white] (100pt,75pt) circle (0.8ex);
    \draw[black, fill=white] (100pt,90pt) circle (0.8ex);
    \draw[black, fill=white] (100pt,105pt) circle (0.8ex);

    \draw[black, fill=white] (150pt,0pt) circle (0.8ex);
    \draw[black, fill=white] (150pt,15pt) circle (0.8ex);
    \draw[black, fill=white] (150pt,30pt) circle (0.8ex);
    \draw[black, fill=white] (150pt,45pt) circle (0.8ex);
    \draw[black, fill=white] (150pt,60pt) circle (0.8ex);
    \draw[black, fill=white] (150pt,75pt) circle (0.8ex);
    \draw[black, fill=white] (150pt,90pt) circle (0.8ex);
    \draw[black, fill=white] (150pt,105pt) circle (0.8ex);

\end{tikzpicture}
    \caption{Illustration of the $m$-point FFT DAG for $m=8$.}
    \label{fig:fft}
\end{figure}
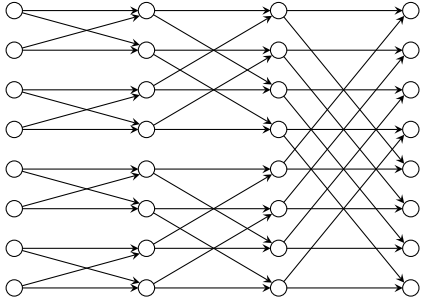

The work of~\cite{RBpebbling1} already establishes a tight lower bound of $\Omega\left( \frac{m \log m}{\log r}\right)$ on these graphs via $S$-partitions, and looser bounds are also obtained with spectral methods~\cite{jain2020spectral}. With our tools above, it is straightforward to show that these bounds also carry over to PRBP.

\begin{theorem}
For the $m$-point FFT DAG, we have
\[ \texttt{OPT}_{PRBP} \, \geq \, \Omega\left( \frac{m \log m}{\log r}\right) \, . \]
\end{theorem}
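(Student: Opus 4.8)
The plan is to route this through the $S$-dominator partition tool, i.e.\ Theorem~\ref{th:dom_lower} --- this is exactly the ``simpler approach'' alluded to above. The reason it suffices here is that the classical Hong--Kung lower-bound argument for the FFT~\cite{RBpebbling1} never uses the terminal-set condition (property (iii) of Definition~\ref{def:spart}), only the dominator condition, so it applies verbatim to $S$-dominator partitions and hence to PRBP. Concretely, by Theorem~\ref{th:dom_lower} it is enough to show that every $(2r)$-dominator partition of the $m$-point FFT DAG has $\Omega\!\left(\frac{m\log m}{r\log r}\right)$ classes; feeding this into $\texttt{OPT}_{PRBP}\ge r\cdot(\texttt{MIN}_{dom}(2r)-1)$ then gives the claim.

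First I would recall the key combinatorial property of the butterfly from~\cite{RBpebbling1}: there is a constant $c$ so that any node subset $V_0$ of the $m$-point FFT admitting a dominator of size at most $S$ satisfies $|V_0|\le c\,S\log S$. (Intuitively, grouping the $\log m$ butterfly levels into blocks of $\log S$ consecutive levels makes each block split into disjoint $S$-point sub-butterflies, and a dominator of total size $S$ can only ``reach into'' these sub-butterflies with a budget that is conserved across blocks, keeping the number of dominated nodes at $O(S\log S)$.) Since the $m$-point FFT DAG has $\Theta(m\log m)$ nodes in total, and the classes of a $(2r)$-dominator partition are disjoint, cover $V$, and each have size at most $c\,(2r)\log(2r)=O(r\log r)$, such a partition must consist of at least $\Theta(m\log m)\,/\,O(r\log r)=\Omega\!\left(\frac{m\log m}{r\log r}\right)$ classes; that is, $\texttt{MIN}_{dom}(2r)=\Omega\!\left(\frac{m\log m}{r\log r}\right)$.

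Combining this with Theorem~\ref{th:dom_lower} yields $\texttt{OPT}_{PRBP}\ge r\cdot(\texttt{MIN}_{dom}(2r)-1)=\Omega(m\log m/\log r)$, where the $-1$ and the simplification $\log(2r)=\Theta(\log r)$ are absorbed into the asymptotic notation (and if $r$ is so large that this quantity drops below the trivial cost, the stated bound holds trivially). I expect most of this to be routine bookkeeping; the one point that genuinely needs care is verifying the fact I am leaning on --- that the Hong--Kung FFT counting really invokes only the dominator condition and not the terminal-set condition --- so that discarding property (iii) in the passage from $S$-partitions to $S$-dominator partitions costs nothing. As a fallback one could instead use the $S$-edge partition tool (Theorem~\ref{th:dom_edge}), since the FFT also has $\Theta(m\log m)$ edges and an analogous edge-domination count applies; but the dominator-partition route is the cleaner of the two and is the one I would present.
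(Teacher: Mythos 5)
Your proposal is correct and follows essentially the same route as the paper: both invoke Theorem~\ref{th:dom_lower} together with the observation that the Hong--Kung FFT argument only uses the dominator condition, yielding $\texttt{MIN}_{dom}(2r) = \Omega\left(\frac{m\log m}{r\log r}\right)$ and hence the claim. The extra detail you supply on the $O(S\log S)$ counting is a correct unpacking of the bound the paper simply cites from~\cite{RBpebbling1}.
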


\begin{proof}
The proof of the same lower bound for RBP in~\cite{RBpebbling1} actually does not rely on an $S$-partition of the DAG, only on an $S$-dominating partition. In particular, the authors show that
\[ \texttt{MIN}_{dom}(S) \geq \Omega\left( \frac{m \log m}{S \log S} \right) \, . \]
With $S=2r$ and our Theorem \ref{th:dom_lower}, this gives the claim above.
\end{proof}

\subsubsection{Matrix Multiplication}

Matrix multiplication is a fundamental operation in computer science, and also an essential building block in various applications. Given two matrices of sizes $m_1 \times m_2$ and $m_2 \times m_3$, respectively, we consider the standard matrix multiplication algorithm which executes $m_1 \! \cdot \! m_2 \! \cdot \! m_3$ multiplications. The I/O-efficient execution of this algorithm has been extensively studied before~\cite{RBpebbling1, RBpebbling6}. Hong and Kung established a lower bound of $\Omega \left( \frac{m_1 \cdot m_2 \cdot m_3}{\sqrt{r}} \right)$ on $\texttt{OPT}_{RBP}$ using $S$-partitions. We show that the same lower bound also carries over to partial computations.

\begin{theorem} \label{th:gemm}
For standard matrix multiplication, we have
\[ \texttt{OPT}_{PRBP} \, \geq \, \Omega\left( \frac{m_1 \cdot m_2 \cdot m_3}{\sqrt{r}}\right) \, . \]
\end{theorem}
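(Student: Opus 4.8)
The plan is to adapt Hong and Kung's $S$-partition analysis of matrix multiplication~\cite{RBpebbling1} to the partial-computing setting via a direct phase-based counting argument. Going through Theorem~\ref{th:dom_edge} or Theorem~\ref{th:dom_lower} is not enough here: those tools only give a weaker bound (e.g.\ $\Omega\!\left(\frac{m_1 m_2 m_3}{r}\right)$ via Theorem~\ref{th:dom_lower}), since a single dominator class may already contain $\Theta(r^2)$ of the $m_1 m_2 m_3$ multiplications -- indexed by an $r$-sized set of $A$-entries and an $r$-sized set of $B$-entries -- while the terminal-set condition, which is what brings this down to $\Theta(r^{3/2})$ per class for full $S$-partitions in RBP, is not available in PRBP. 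Instead I would argue directly. Write the DAG with source nodes $A_{ij}$ and $B_{jk}$, a multiplication node $p_{ijk}$ for each triple $i\in[m_1]$, $j\in[m_2]$, $k\in[m_3]$ with incoming edges from $A_{ij}$ and $B_{jk}$, and a sink $C_{ik}$ for each output entry with incoming edges $(p_{ijk},C_{ik})$; in one-shot PRBP each of these $m_1 m_2 m_3$ output edges is marked exactly once.

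Following the proof of Lemma~\ref{lem:edge_par}, split a PRBP strategy of cost $C$ into $k=\lceil C/r\rceil$ phases, the $t$-th phase being the $((t-1)r+1)$-th through $(tr)$-th I/O operations, so $C\ge r(k-1)$. Assign the multiplication $p_{ijk}$ to the phase in which its output edge $(p_{ijk},C_{ik})$ is marked; this distributes the $m_1 m_2 m_3$ multiplications among the $k$ phases, so it suffices to show each phase receives $O(r^{3/2})$ of them. Fix a phase $t$ and let $W_t$ be its multiplications. Those $p_{ijk}\in W_t$ for which $p_{ijk}$ carried a red pebble at the start of phase $t$, or was loaded during phase $t$, number at most $2r$. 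For each other $p_{ijk}\in W_t$, the red pebble $p_{ijk}$ must carry when $(p_{ijk},C_{ik})$ is marked can only have been placed by a \textsc{partial compute} within phase $t$, and then -- using one-shot pebbling and the fact that a not-yet-fully-computed node keeps its dark red pebble (its output edge being still unmarked) -- I would show that \emph{both} input edges $(A_{ij},p_{ijk})$ and $(B_{jk},p_{ijk})$ are marked within phase $t$. Hence $A_{ij}$ and $B_{jk}$, being sources, are each red at the start of phase $t$ or loaded during phase $t$, so the projections of this ``internal'' set of multiplications onto the $(i,j)$- and $(j,k)$-coordinates each have size at most $2r$. For the $(i,k)$-projection, each output entry $C_{ik}$ touched in phase $t$ carries a dark red pebble right after the last \textsc{partial compute} into it during the phase, and that pebble either survives to the end of the phase or is removed by a \textsc{save} in the phase (a sink needed in the terminal state cannot be discarded without a save, and one-shot pebbling forbids recomputing it), so at most $2r$ output entries are touched. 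The discrete Loomis--Whitney inequality applied to the internal set, regarded as a subset of $[m_1]\times[m_2]\times[m_3]$, then bounds its size by $\sqrt{(2r)^3}=(2r)^{3/2}$, giving $|W_t|\le 2r+(2r)^{3/2}=O(r^{3/2})$.

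Summing, $m_1 m_2 m_3=\sum_{t=1}^{k}|W_t|\le k\cdot O(r^{3/2})$, so $k=\Omega\!\left(\frac{m_1 m_2 m_3}{r^{3/2}}\right)$ and hence $\texttt{OPT}_{PRBP}=C\ge r(k-1)=\Omega\!\left(\frac{m_1 m_2 m_3}{\sqrt r}\right)$; in the degenerate case $k\le 1$ one has $C\le r$, which forces the pairwise products $m_1m_2, m_2m_3, m_1m_3$ to be at most $r$, and then the trivial cost bound $\texttt{OPT}_{PRBP}\ge m_1m_2+m_2m_3+m_1m_3$ already exceeds $\frac{m_1 m_2 m_3}{\sqrt r}$.

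The main obstacle is the middle step, and in particular the bound on the number of output entries touched per phase. In RBP a multiplication is executed by one \textsc{compute} step at which both inputs and the result are in fast memory simultaneously, which is precisely why all three Loomis--Whitney projections are $O(r)$; in PRBP the two input-edge markings and the output-edge marking of $p_{ijk}$ may fall in three different phases, so one must use the interplay of one-shot pebbling with the \textsc{save} and \textsc{delete} rules to argue that, apart from the $O(r)$ multiplications carried over in fast memory, all three markings of $p_{ijk}$ -- together with the red pebbles on $A_{ij}$, $B_{jk}$, and $C_{ik}$ that they require -- lie inside a single phase. The phase splitting, the source-node argument, and the Loomis--Whitney step are then routine.
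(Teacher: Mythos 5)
Your final argument is correct and, in substance, it \emph{is} the paper's proof: the decomposition into phases of $r$ I/O operations, the assignment of each multiplication $p_{ijk}$ to the phase in which its output edge is marked, and the three $2r$-bounds on the $(i,j)$-, $(j,k)$- and $(i,k)$-projections are exactly the construction and properties (ii)--(iii) of Lemma~\ref{lem:edge_par}, specialized to the matrix-multiplication DAG, followed by Hong--Kung's Loomis--Whitney counting. However, your premise for bypassing the general machinery is mistaken. You claim that Theorem~\ref{th:dom_edge} cannot give better than $\Omega(m_1 m_2 m_3/r)$ because the terminal-set condition ``is not available in PRBP''; but the $S$-edge partition of Definition~\ref{def:separt} does carry an edge-terminal condition (property (iii)), and since every sink $C_{ik}$ that is the endpoint of an internal edge of $E_i$ lies in the edge-terminal set of $E_i$ (a sink has no outgoing edges at all), this is precisely what bounds the $(i,k)$-projection by $2r$ per class. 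You appear to have conflated the $S$-edge partition with the weaker $S$-dominator partition of Theorem~\ref{th:dom_lower}, which indeed drops the terminal condition and would only yield the $\Theta(r^2)$-per-class bound you describe; the paper's proof of Theorem~\ref{th:gemm} goes exactly through Theorem~\ref{th:dom_edge}. Your only genuinely delicate step --- that a sink's dark red pebble placed in a phase must either survive the phase or be removed by a \textsc{save} --- matches the paper's own reading of the delete rule in the proof of property (iii) of Lemma~\ref{lem:edge_par}, so your treatment is consistent with the paper's; the handling of the degenerate case $k\leq 1$ via the trivial cost is fine. In short: the proof stands, but the claimed necessity of avoiding the paper's general tool does not.
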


\begin{proof}
The claim can be shown by combining the original lower bound proof of~\cite{RBpebbling1} with our $S$-edge partition concept. In the corresponding DAG, we have $m_1 \cdot m_2$ sources, $m_2 \cdot m_3$ sinks, and $m_1 \cdot m_2 \cdot m_3$ internal nodes. Each internal node has exactly one outgoing edge; let us call these internal edges. We consider where the $m_1 \cdot m_2 \cdot m_3$ internal edges can be sorted in an $S$-edge partition. The edge-terminal set of each $E_i$ contains at most $S$ sink nodes, and the edge-dominator set of $E_i$ contains at most $S$ source nodes. The proof in~\cite{RBpebbling1} shows that in the matrix multiplication DAG, for any subsets of sources $V_1$ and sinks $V_2$ with $|V_1| \leq S$, $|V_2| \leq S$, there are at most $O(S^{3/2})$ internal nodes on the paths from nodes in $V_1$ to nodes in $V_2$. Thus any set of $S$ sources can be a dominator for at most $O(S^{3/2})$ internal nodes, and hence also an edge-dominator for at most $O(S^{3/2})$ internal edges in $E_i$. As for internal nodes in the edge-dominator of $E_i$, we can have at most $S$ of them, each covering at most one internal edge. As such, $E_i$ contains at most $O(S^{3/2}) + S$ internal edges. Therefore, any $S$-edge partitioning has at least $\Omega \left( \frac{m_1 \cdot m_2 \cdot m_3}{S^{3/2} + S} \right)$ classes; this is a lower bound on $\texttt{MIN}_{edge}(S)$. Selecting $S=2r$ and applying Theorem~\ref{th:dom_edge}, the claim follows.
\end{proof}

Note that the above theorem only considers the lower bound up to a constant factor, showing that its magnitude is identical in RBP and PRBP. However, recall that for the special case of $m_1=m_2$ and $m_3=1$ (matrix-vector multiplication), we have shown that we can still have $\texttt{OPT}_{PRBP} < \texttt{OPT}_{RBP}$.

\subsubsection{Flash Attention}

Due to the importance of the transformer architecture in machine learning, the I/O-efficient implementation of its crucial component, the self-attention mechanism, has been exhaustively studied. Recent work in this area also used RBP to devise an I/O lower bound that matches the prominent Flash Attention algorithm~\cite{ml1}.

While attention consists of several steps, its main bottleneck is essentially a matrix multiplication step $Q \cdot K^T$, where $Q$ and $K$ are both matrices of size $m \times d$. However, in contrast to Theorem~\ref{th:gemm}, it is a vital difference here that the entries of $Q \cdot K^T$ are not sink nodes, so we may not need to save them to slow memory. Due to this, the I/O complexity of the problem is more intricate. When we have $r \leq d^2$, a lower bound of $\Omega\left( \frac{m^2 \cdot d}{\sqrt{r}} \right)$ follows easily from matrix multiplication: intuitively, with $\frac{m^2 \cdot d}{\sqrt{r}} \geq m^2$, saving the matrix $Q \cdot K^T$ of size $m _{\!} \times _{\!} m$ to slow memory has little impact on the total cost. However, in the so-called \emph{large cache regime} when $r \geq d^2$, which is often the relevant case in practice, the Flash Attention algorithm~\cite{flashattention} can obtain a lower I/O complexity of $O\left( \frac{m^2 \cdot d^2}{r} \right)$. The work of~\cite{ml1} shows a matching lower bound; we again extend this lower bound to PRBP.

\begin{theorem}
For attention with standard matrix multiplication, we have
\[ \texttt{OPT}_{PRBP} \, \geq \, \Omega \left( \min \left( \frac{m^2 \cdot d}{\sqrt{r}} , \, \frac{m^2 \cdot d^2}{r} \right) \right) \, . \]
\end{theorem}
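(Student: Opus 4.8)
The plan is to establish the two arguments of the $\min$ independently and then observe that in each cache regime one of them is the binding constraint.

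\emph{The first term} is essentially Theorem~\ref{th:gemm}. The bottleneck of attention is the product $Q\cdot K^{T}$, i.e.\ a multiplication of an $m\times d$ matrix by a $d\times m$ matrix, whose DAG sits inside the attention DAG on a set of internal nodes that each have exactly one outgoing edge. Hence the $S$-edge-partition counting used to prove Theorem~\ref{th:gemm} applies to these internal edges verbatim with $(m_1,m_2,m_3)=(m,d,m)$, giving $\texttt{OPT}_{PRBP}\ge\Omega\!\left(\tfrac{m^{2}d}{\sqrt r}\right)$ for every value of $r$.

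\emph{The second term} is obtained by replaying the large-cache lower bound of~\cite{ml1} inside an $S$-edge partition. For RBP, that proof bounds how much of the $Q\cdot K^{T}$ computation a single class of an $S$-partition of the attention DAG can cover, given that its source-dominator has size $\le S$ and its sink-terminal set has size $\le S$; once $S=\Omega(d^{2})$, this forces $\texttt{MIN}_{part}(S)\ge\Omega\!\left(\tfrac{m^{2}d^{2}}{S^{2}}\right)$. I would re-run precisely this counting for an $S$-edge partition via Theorem~\ref{th:dom_edge}: the internal nodes of the bottleneck product are replaced by their unique outgoing edges, an edge-dominator of size $\le 2r$ constrains the available sources exactly as a dominator did, an edge-terminal set of size $\le 2r$ constrains the sinks, and the few non-source (resp.\ non-sink) members of these sets contribute only $O(r)$ further edges in total. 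This gives $\texttt{MIN}_{edge}(2r)\ge\Omega\!\left(\tfrac{m^{2}d^{2}}{r^{2}}\right)$, whence Theorem~\ref{th:dom_edge} yields $\texttt{OPT}_{PRBP}\ge r\cdot(\texttt{MIN}_{edge}(2r)-1)\ge\Omega\!\left(\tfrac{m^{2}d^{2}}{r}\right)$, valid whenever $r\ge d^{2}$.

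\emph{Combining and the main obstacle.} If $r\ge d^{2}$ then $\tfrac{m^{2}d^{2}}{r}\le\tfrac{m^{2}d}{\sqrt r}$, so the second bound already realises the claimed $\min$; if $r<d^{2}$ then $\tfrac{m^{2}d}{\sqrt r}<\tfrac{m^{2}d^{2}}{r}$ and the first bound does, so the stated inequality holds in all cases. The first term is routine; the real work, and the main obstacle, is verifying that the combinatorial covering estimate of~\cite{ml1} is insensitive to the node-to-edge substitution. The one structural difference is that an edge-terminal set may contain a node together with one of its out-neighbours, unlike an ordinary terminal set; but since we count only product-layer edges and these all feed into the accumulation layer, this cannot weaken the per-class bound. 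Making this bookkeeping precise — and identifying exactly which sub-DAG of the attention DAG carries the~\cite{ml1} argument — is what the complete proof must settle.
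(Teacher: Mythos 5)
Your proposal follows essentially the same route as the paper: the first term is obtained by invoking Theorem~\ref{th:gemm} on the $Q\cdot K^T$ bottleneck, and the second by re-running the large-cache counting of~\cite{ml1} on an $S$-edge partition in the regime $r\ge d^2$, bounding separately the $O(r)$ internal edges covered by internal dominator nodes and the edges reachable from at most $2r$ source-dominator nodes, to get $\texttt{MIN}_{edge}(2r)\ge\Omega\bigl(\tfrac{m^2 d^2}{r^2}\bigr)$. The "bookkeeping" you defer is exactly what the paper's proof supplies (the $c\le c_i\cdot c_j$ argument showing $c\le 4r^2/d^2$ covered sinks and hence at most $4r^2/d$ product-layer edges per class), and your outline is consistent with it.
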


\begin{proof}
When $r \leq d^2$, the first term in the minimum is smaller, and the problem can be reduced to matrix multiplication as in~\cite{ml1}. The bound then carries over to PRBP due to Theorem~\ref{th:gemm}.

The second term is only smaller in the large cache regime, so assume $r \geq d^2$. In this case, similarly to Theorem~\ref{th:gemm}, we can adapt the proof of~\cite{ml1} by shifting the analysis to internal edges from internal nodes. The computational DAG of attention is described in detail in~\cite{ml1}. The relevant part of the DAG is the matrix multiplication, which is represented by $2 _{\!} \cdot _{\!} m _{\!} \cdot _{\!} d$ source nodes (entries in $Q$ and $K^T$), $m^2 _{\!} \cdot d$ internal nodes that have two of these sources as inputs, and $m \cdot m$ nodes that represent an entry in $Q \cdot K^T$ (we call these root nodes) and have incoming edges from $d$ internal nodes. Each root node also has an outgoing edge to the remaining part of the DAG (specifically, to a next node that exponentiates the value). We again refer to the edges from internal nodes to roots as internal edges, and we call a root node and its in-neighbors together an internal tree. Note that internal trees are disjoint.

On a high level, consider an $S$-edge partition of this DAG with $S=2r$, and focus on the internal edges in each class $E_i$. Both the edge-dominator and the edge-terminal set of $E_i$ has size at most $2r$, so there are at most $4r$ internal trees that contain a node form either of these sets. These trees contain at most $4r \!\cdot \! d$ internal edges altogether.

If an internal tree is not among these $4r$ trees, but it still has an internal edge that is contained in $E_i$, then we will call this an extra tree. Since an extra tree contains no nodes from the edge-terminal set of $E_i$, the outgoing edge from its root node must also be in $E_i$. The paths starting from a source and containing this edge can only be covered by the source itself, since the extra tree contains no nodes of the edge-dominator set. As such, all the $2d$ source nodes which have a path to this root must be in the edge-dominator of $E_i$.

Assume that we have $c$ extra trees, and that the $c$ entries in $Q \cdot K^T$ corresponding to their root nodes altogether span $c_i$ rows in $Q$ and $c_j$ columns in $K^T$. Then the edge-dominator of $E_i$ must contain all the $(c_{i\!}+_{\!}c_j) _{\!} \cdot _{\!} d$ source nodes in these rows and columns. Note that $c \! \leq \!  c_i \! \cdot \! c_j$ implies $\sqrt{c} \leq c_i \! + \! c_j$. Due to this, $(c_i\!+\!c_j) \! \cdot \! d \leq 2r$ implies $c \leq \frac{4 \cdot r^2}{d^2}$. The root nodes have in-degree $d$, so the $c$ extra trees contain at most $\frac{4 \cdot r^2}{d}$ internal edges in $E_i$. Together with the $4r \cdot d$ internal edges mentioned earlier, and using $r \geq d^2$, we get that $E_i$ has at most $O\left(\frac{r^2}{d} \right)$ internal edges. With altogether $m^2 \! \cdot _{\!} d$ internal edges in the DAG, we get that
\[ \texttt{MIN}_{edge}(2r) \geq \Omega \left( \frac{m^2 \cdot d^2}{r^2} \right) \, ,\]
and then the bound follows from Theorem~\ref{th:dom_edge}.
\end{proof}

\section{Finding the optimal pebbling} \label{sec:opt}

Finally, we discuss the complexity of finding the best pebbling strategy in PRBP. In particular, we show that it is NP-hard to even approximate $\texttt{OPT}_{PRBP}$ to any non-trivial factor. The same hardness result has been shown for the original RBP in recent work~\cite{mpp}; however, in order to extend this proof technique to PRBP, we need to modify the gadgets used in their construction.

\begin{theorem} \label{th:inapprox}
It is NP-hard to approximate $\texttt{OPT}_{PRBP}$ to any $n^{1-\varepsilon}$ additive term or multiplicative factor (for any $\varepsilon>0$).
\end{theorem}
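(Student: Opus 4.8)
The plan is to adapt the inapproximability construction of~\cite{mpp} for RBP, modifying exactly those gadgets whose I/O behavior changes once partial computations are allowed. Recall that the reduction in~\cite{mpp} turns an instance of an NP-hard base problem of size $q$ into a DAG $G$ on $n=\mathrm{poly}(q)$ nodes, together with a memory bound $r$, such that a ``yes'' instance admits a pebbling of the trivial cost $m$ (the number of source plus sink nodes), whereas a ``no'' instance forces cost $\Omega(n)$. By choosing the parameters so that $m = O(n^{\varepsilon})$ while the ``no'' cost is $\Omega(n^{1-\varepsilon})$ --- which is done by blowing up one gadget to a size that dominates $n$ while keeping the number of sources and sinks at $\mathrm{poly}(q)$ --- the two cases are separated by more than an $n^{1-\varepsilon}$ additive term and more than an $n^{1-\varepsilon}$ multiplicative factor; hence a polynomial-time approximation within either would decide the base problem. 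I would keep the global skeleton of~\cite{mpp}, namely the selection structure encoding the combinatorial choice and the dependency edges encoding the constraints, and concentrate on the two ingredients affected by partial computations.

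First, the penalty amplifier, i.e.\ the gadget responsible for the $\Omega(n)$ cost in the ``no'' case. In RBP this is a gadget that is free to pebble precisely when it can hold all $r$ red pebbles simultaneously; Proposition~\ref{prop:reduced} warns that if this amplifier were built from the gadget of Figure~\ref{fig:decreased_opt}, partial computations would collapse its cost to the trivial $2$, ruining the reduction. The fix is to realize the amplifier with the pebble-collection gadget of Figure~\ref{fig:gadgets} (right), taking $d=r-2$ and a chain of length $\ell=\Theta(n)$: Proposition~\ref{prop:coll} guarantees that, \emph{even in PRBP}, any strategy that does not keep $d+2$ red pebbles on the gadget at once pays at least $\ell/(2d)$. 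Thus the ``no'' lower bound survives partial computations verbatim, provided the rest of the construction still forces this gadget to be ``missed'' exactly when the base instance is a ``no''.

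Second, the selection and constraint gadgets. In RBP a wrong choice costs several I/O steps because some node $w$ must be recomputed or one of its inputs reloaded; in PRBP this penalty can be reduced, since $w$ can be computed partially from one group of its inputs, saved, and reloaded later, exactly as in the proof sketch of Theorem~\ref{th:nphard} --- so the per-occurrence penalty shrinks but stays at least $1$. To avoid any dependence on the exact size of this penalty, I would route every ``wrong choice'' consequence through a dedicated copy of the pebble-collection gadget instead of through a single recomputable node: a wrong choice must prevent that copy from receiving its $d+2$ red pebbles at the appropriate moment, which by Proposition~\ref{prop:coll} again forces $\Omega(\ell/d)$ cost regardless of partial computes. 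For the ``yes'' direction, the intended pebbling is unchanged and still attains the trivial cost in PRBP, since any RBP pebbling translates to a PRBP pebbling of the same cost (Section~\ref{sec:io_relate}); so the genuinely new content is entirely on the ``no'' side.

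The hard part will be the ``no''-case argument: one must rule out that partial computations are used to circumvent the forced blocking --- for instance, by partially computing and saving some far-downstream value so as to free pebble budget for an amplifier at the critical moment. What makes this tractable is that a partial-compute step along an edge $(u,v)$ still requires red pebbles on both $u$ and $v$, so the pebble-budget accounting underlying Proposition~\ref{prop:coll} and the dependency gadgets is identical to the RBP case; the real work is to thread this observation through every interface between gadgets in the~\cite{mpp} construction, so that ``the base instance is a no'' still implies, in the strong sense allowing arbitrary partial computes, that at least one amplifier copy is missed. Once this structural invariant is re-established for PRBP, the claimed gap --- and hence the inapproximability --- follows exactly as in~\cite{mpp}.
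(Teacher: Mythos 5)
Your high-level strategy (reuse the~\cite{mpp} reduction, note that the ``yes'' direction is free because any RBP pebbling is a PRBP pebbling of the same cost, and repair the ``no'' direction) matches the paper's, but your concrete plan misidentifies where the reduction actually breaks under partial computations, and the step you wave at as ``the real work'' is precisely the part that fails as you have set it up. The~\cite{mpp} construction is not a ``penalty amplifier plus selection gadgets'' design that can be patched by swapping in copies of the pebble-collection gadget of Proposition~\ref{prop:coll}; it is built from \emph{level gadgets} arranged into towers, where the argument hinges on (a) cross-tower edges from a level $L$ to a level $L'$ forcing $L$ to be fully computed before any red pebble reaches $L'$, and (b) the invariant that while the last computed level of a tower is $L=(u_1,\dots,u_\ell)$, at least $\ell$ red pebbles are locked in that tower. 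Both properties fail in PRBP: for (a), one can start partially computing $L'$ before $L$ is finished; for (b), when the level after $L$ has size $\ell'<\ell$, the surplus nodes $u_{\ell'+1},\dots,u_\ell$ all feed a \emph{single} node of the next level, so a PRBP strategy can mark those $\ell-\ell'$ edges, delete the corresponding red pebbles, and free up budget at the critical moment. This is exactly the ``circumvention by partial computes'' you acknowledge must be ruled out, and your assertion that ``the pebble-budget accounting underlying Proposition~\ref{prop:coll} and the dependency gadgets is identical to the RBP case'' is false for the level gadgets --- it is the specific point the adaptation has to repair.

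The paper's fix is local and preserves the tower structure: it inserts \emph{auxiliary levels} between consecutive levels. Precedence is restored by rerouting cross-tower edges into an auxiliary level $L'_a$ below $L'$, so that red pebbles can reach $L'$ only after every in-edge of the node below in $L'_a$ is marked, i.e.\ after $L$ is complete. The pebble-count invariant is restored by adding $\ell-\ell'+2$ auxiliary levels after a shrinking level and drawing edges from each of $u_{\ell'+1},\dots,u_\ell$ to the last node of every such level, so that the dependent nodes collectively lock down more than $\ell-\ell'$ pebbles and the freeing trick gains nothing. Your alternative --- rebuilding the reduction around pebble-collection gadgets, as in Theorem~\ref{th:nphard} --- is not a drop-in replacement: that gadget certifies only that $d+2$ pebbles must be co-located or a cost of $\ell/(2d)$ is paid, whereas the inapproximability gap of ``cost $2$ versus cost $n^{1-\varepsilon}$'' requires the fine-grained, per-tower pebble counting that the level sizes encode. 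As written, your proposal defers the entire ``no''-case analysis to an unproven structural invariant while replacing the machinery that was designed to establish it.
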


\renewcommand*{\proofname}{Proof sketch}
\begin{proof}
The construction of~\cite{mpp} shows that it is already NP-hard to distinguish between an I/O cost of $2$ and an I/O cost of $n^{1-\varepsilon}$. The main ingredients in this construction are the so-called level gadgets; we need modify these so that they still ensure the same desired properties in PRBP when we have partial computations. With the level gadgets adjusted, the rest of the construction can remain unchanged, proving the same hardness result for PRBP. For more details on the original proof, we refer the reader to~\cite{mpp}.

Intuitively, each level gadget is a chain $(u_1, ..., u_{\ell})$. Between two consecutive levels $(u_1, ..., u_{\ell})$ and $(v_1, ..., v_{\ell'})$, we also have the edges $(u_i, v_i)$ for $i \leq \min(\ell, \ell')$, and if $\ell \! > \! \ell'$, then also $(u_i, v_{\ell'})$ for all $\ell' \! < \! i \! \leq \! \ell$. The DAG itself then consists of sequences of consecutive levels, called towers, with the size of each level chosen carefully. In order to adjust the construction, we will simply add \emph{auxiliary levels} between the original levels of the construction: for each level $(u_1, ..., u_{\ell})$, we add a specific number of new levels of size $\ell$ between this level and the level before. One such auxiliary level is illustrated in Figure~\ref{fig:levels}. These changes do not affect the optimum cost in RBP.

Even by adding only $1$ auxiliary level, we can already ensure that some aspects carry over to PRBP, e.g.\ establishing precedence constraints between levels. More specifically, in RBP, it is natural that if we have edges from (all nodes of) a level $L$ in one tower to (all nodes of) a level $L'$ in another tower, then we need to compute all nodes in $L$ before placing a red pebble on any node of $L'$. However, in PRBP, we could already start partially computing $L'$ before all nodes in $L$ are fully computed, which makes the analysis more difficult. In our adjusted construction, we now draw the edges from $L$ not to $L'$, but to the newly inserted auxiliary level $L'_a$ before $L'$. With partial computations, we can still place red pebbles on $L'_a$ before fully computing $L$; however, we can only start placing red pebbles on $L'$ once all incoming edges of the node below in $L'_a$ are marked. This again ensures that we can only start pebbling $L'$ after $L$ in the other tower is fully computed.

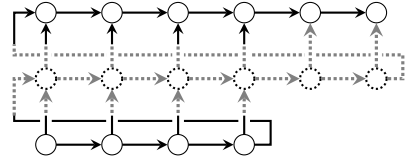
\begin{figure}
	\centering
	\begin{tikzpicture}

    \draw[thick] (75pt,0pt) -- (85pt,0pt) -- (85pt,9pt) -- (-12pt,9pt) -- (-12pt,12pt);
    \draw[thick, arrows=-stealth] (-12pt,38pt) -- (-12pt,50pt) -- (-4pt,50pt);

    \draw[very thick, densely dotted, gray] (125pt,25pt) -- (135pt,25pt) -- (135pt,34pt) -- (-12pt,34pt) -- (-12pt,38pt);

    \draw[white, fill=white] (-2pt,8pt) rectangle (2pt,10pt);
    \draw[white, fill=white] (23pt,8pt) rectangle (27pt,10pt);
    \draw[white, fill=white] (48pt,8pt) rectangle (52pt,10pt);
    \draw[white, fill=white] (73pt,8pt) rectangle (77pt,10pt);
    
    \draw[white, fill=white] (-2pt,33pt) rectangle (2pt,35pt);
    \draw[white, fill=white] (23pt,33pt) rectangle (27pt,35pt);
    \draw[white, fill=white] (48pt,33pt) rectangle (52pt,35pt);
    \draw[white, fill=white] (73pt,33pt) rectangle (77pt,35pt);
    \draw[white, fill=white] (98pt,33pt) rectangle (102pt,35pt);
    \draw[white, fill=white] (123pt,33pt) rectangle (127pt,35pt);

    \begin{scope}[thick]
    \draw (0pt,0pt) -- (0pt,11pt);
    \draw (25pt,0pt) -- (25pt,11pt);
    \draw (50pt,0pt) -- (50pt,11pt);
    \draw (75pt,0pt) -- (75pt,11pt);
    \end{scope}

    \begin{scope}[thick, arrows=-stealth]
    \draw (0pt,0pt) -- (21pt,0pt);
    \draw (25pt,0pt) -- (46pt,0pt);
    \draw (50pt,0pt) -- (71pt,0pt);

    \draw (0pt,50pt) -- (21pt,50pt);
    \draw (25pt,50pt) -- (46pt,50pt);
    \draw (50pt,50pt) -- (71pt,50pt);
    \draw (75pt,50pt) -- (96pt,50pt);
    \draw (100pt,50pt) -- (121pt,50pt);

    \draw (0pt,38pt) -- (0pt,46pt);
    \draw (25pt,38pt) -- (25pt,46pt);
    \draw (50pt,38pt) -- (50pt,46pt);
    \draw (75pt,38pt) -- (75pt,46pt);

    \end{scope}

     \begin{scope}[very thick, arrows=-stealth, densely dotted, gray]
    \draw (0pt,25pt) -- (21pt,25pt);
    \draw (25pt,25pt) -- (46pt,25pt);
    \draw (50pt,25pt) -- (71pt,25pt);
    \draw (75pt,25pt) -- (96pt,25pt);
    \draw (100pt,25pt) -- (121pt,25pt);

    \draw (0pt,12pt) -- (0pt,21pt);
    \draw (25pt,12pt) -- (25pt,21pt);
    \draw (50pt,12pt) -- (50pt,21pt);
    \draw (75pt,12pt) -- (75pt,21pt);

    \draw (100pt,25pt) -- (100pt,46pt);
    \draw (125pt,25pt) -- (125pt,46pt);

    \draw (-12pt,11pt) -- (-12pt,25pt) -- (-4pt,25pt);
     \end{scope}

    \begin{scope}[very thick, densely dotted, gray]
    \draw (0pt,25pt) -- (0pt,38pt);
    \draw (25pt,25pt) -- (25pt,38pt);
    \draw (50pt,25pt) -- (50pt,38pt);
    \draw (75pt,25pt) -- (75pt,38pt);
    \end{scope}

    \draw[black, fill=white] (0pt,0pt) circle (1.0ex);
    \draw[black, fill=white] (25pt,0pt) circle (1.0ex);
    \draw[black, fill=white] (50pt,0pt) circle (1.0ex);
    \draw[black, fill=white] (75pt,0pt) circle (1.0ex);

    \begin{scope}[thick, densely dotted, gray]
    \draw[black, fill=white] (0pt,25pt) circle (1.0ex);
    \draw[black, fill=white] (25pt,25pt) circle (1.0ex);
    \draw[black, fill=white] (50pt,25pt) circle (1.0ex);
    \draw[black, fill=white] (75pt,25pt) circle (1.0ex);
    \draw[black, fill=white] (100pt,25pt) circle (1.0ex);
    \draw[black, fill=white] (125pt,25pt) circle (1.0ex);
    \end{scope}

    \draw[black, fill=white] (0pt,50pt) circle (1.0ex);
    \draw[black, fill=white] (25pt,50pt) circle (1.0ex);
    \draw[black, fill=white] (50pt,50pt) circle (1.0ex);
    \draw[black, fill=white] (75pt,50pt) circle (1.0ex);
    \draw[black, fill=white] (100pt,50pt) circle (1.0ex);
    \draw[black, fill=white] (125pt,50pt) circle (1.0ex);

\end{tikzpicture}
	\caption{Adding an auxiliary level (dotted grey part) to the level gadgets of the construction in~\cite{mpp}. Auxiliary levels always have the same size as the original level above them.}
	\label{fig:levels}
\end{figure}

The most critical role of levels, intuitively, is to ensure that when $L=(u_1, ..., u_{\ell})$ is the last computed level in a tower, then we need to keep at least $\ell$ red pebbles in this tower (or pay an I/O cost). Let $\ell'$ be the size of the level after $L$. When we have $\ell' \! \geq \! \ell$, this property carries over easily to PRBP. The  tricky part is when $\ell' \! < \! \ell$, where PRBP would allow us to free up the red pebbles from all of $u_{\ell'+1}, ..., u_{\ell}$ by partially computing the last node of the next level, which is the only node that depends on them. To resolve this, in these cases, we add $(\ell\!-\!\ell'\!+\!2)$ auxiliary levels after $L$, and draw edges from $u_{\ell'+1}, ..., u_{\ell}$ to the last node in all of these new levels. Intuitively, this ensures that the number of nodes depending on $u_{\ell'+1}, ..., u_{\ell}$ is more than $(\ell\!-\!\ell')$, and hence we cannot free up more red pebbles with the previous approach, i.e.\ partially computing these dependent nodes and then deleting the pebbles from $u_{\ell'+1}, ..., u_{\ell}$.

Further details of the proof are discussed in Appendix~\ref{app:inapprox}.
\end{proof}
\renewcommand*{\proofname}{Proof}

\section{Conclusion and discussion}

We have seen that introducing partial computations into the red-blue pebble game can significantly change the optimal I/O cost in several computations. Even when the optimal cost remains unchanged, it necessitates new tools and theorems in order to carry over previously established I/O lower bounds to this more realistic model. We conclude with a brief discussion on some alternative variants of the model and some practical insights.

\subsection{Alternative model variants} \label{sec:models}

Except for the one-shot property, we defined and analyzed RBP according to its original definition by Hong and Kung~\cite{RBpebbling1}, which is the most well-known version of the game. However, different variants of RBP were also studied before. We briefly outline how these relate to PRBP, and discuss this in more detail in Appendix~\ref{app:models}.

If we remove the one-shot restriction, we obtain a model variant where the same node may be computed multiple times. Such re-computation steps can sometimes be used to save I/O costs, and hence they were allowed in the work of Hong and Kung and some others~\cite{RBpebbling1, RBpebbling3}. When adapting this variant to PRBP, the main modeling challenge is that re-computation could be defined in multiple different ways in a partial-computing setting. We also note that allowing re-computation can have a rather different effect on our results: for instance, Theorem~\ref{th:nphard} carries over easily to this case, Proposition~\ref{prop:reduced} can be adapted with minor changes, whereas adapting e.g.\ the tools in Section~\ref{sec:bounds} is more challenging. 

In another variant of RBP, we also allow a different compute step which  `slides' a red pebble to $v$ from one of its inputs. The sliding model aims to capture in-place computations; this is somewhat similar to PRBP, which also assumes that the inputs of $v$ are aggregated into $v$ in an in-place fashion. However, in general, partial computations allow us to save significantly more I/O cost than these sliding steps. In particular, Proposition~\ref{prop:reduced} on the large difference between $\texttt{OPT}_{RBP}$ and $\texttt{OPT}_{PRBP}$ is also easy to adapt to the case when $\texttt{OPT}_{RBP}$ is understood in the sliding model.

In another extension towards practice, we can also assign a small constant cost to the compute steps \cite{RBpebbling3, RBpebbling8}. In the one-shot model, this only adds a fixed cost to any pebbling; however, this already affects the ratio between optimal costs in e.g.\ Theorem~\ref{th:inapprox}. Note that in order to compare $\texttt{OPT}_{RBP}$ and $\texttt{OPT}_{PRBP}$ in this variant, we would need to define how the node-based compute costs in RBP are translated into edge-based partial compute costs in PRBP.

\subsection{Practical remarks and future directions}

While the usage of partial computations to reduce I/O cost has not been formalized and studied in general DAGs earlier, the general idea has of course been applied before (directly or indirectly) in several applications. For instance, we have seen how partial computations can reduce the I/O cost in a matrix-vector multiplication. In the case of dense matrix-matrix multiplication, the same approach is known as the outer-product formulation and e.g.\ used as the inner microkernel to state-of-the-art methods such as BLIS~\cite{aj3}. In a broader sense, our results also indicate that e.g.\ in generalized linear algebraic programs, the distributive property of semirings can be exploited for practical gains; the same fact has been evidenced before when exploiting re-computation to tile through successive operations (e.g.~\cite{aj2, aj1}). However, on the negative side, the inapproximability result shows that the problem is very challenging for irregular sparse computations in general.

We find that one promising direction for future work is the version of the PRBP model extended with re-computations. This model incorporates two very natural methods to reduce the I/O cost of a computation in practice, and hence it would be essential to understand its general properties, and to analyze the cost of linear algebraic or other computations in this model.

Note that another natural generalization of PRBP is to model computations that have both associative and non-associative operations, e.g.\ allowing for partial computations as long as a given order of the inputs is respected. However, such a setting can also be captured easily with a modification of the input DAG (by splitting up some of the nodes), as discussed before in~\cite{partial}.

\bibliographystyle{ACM-Reference-Format}
\bibliography{references}

\clearpage

\appendix

\section{Further proof details}

\subsection{Details for Proposition~\ref{prop:example}} \label{app:example}

For completeness, we list here the steps of the pebbling strategies that pebble the DAG with the discussed costs in our first example.

The steps of an RBP strategy with $r=4$ that only uses $3$ I/O steps is as follows: \textsc{load} $u_0$; \textsc{compute} $u_1$; \textsc{delete} $u_0$; \textsc{compute} $w_1$; \textsc{compute} $w_2$; \textsc{compute} $w_3$; \textsc{delete} $w_1$; \textsc{delete} $w_2$; \textsc{compute} $w_4$; \textsc{delete} $w_3$; \textsc{delete} $u_1$; \textsc{load} $u_0$; \textsc{compute} $u_2$; \textsc{delete} $u_0$; \textsc{compute} $v_1$; \textsc{compute} $v_2$; \textsc{delete} $w_4$; \textsc{delete} $u_2$; \textsc{compute} $v_0$; \textsc{save} $v_0$.

The steps of a PRBP strategy with $r=4$ that only uses $2$ I/O steps is as follows: \textsc{load} $u_0$; \textsc{partial compute} $(u_0, u_1)$; \textsc{partial compute} $(u_0, u_2)$; \textsc{delete} $u_0$; \textsc{partial compute} $(u_1, w_1)$; \textsc{partial compute} $(w_1, w_3)$; \textsc{delete} $w_1$; \textsc{partial compute} $(u_1, w_2)$; \textsc{partial compute} $(w_2, w_3)$; \textsc{delete} $w_2$; \textsc{partial compute} $(u_1, w_4)$; \textsc{partial compute} $(w_3, w_4)$; \textsc{delete} $u_1$; \textsc{delete} $w_3$; \textsc{partial compute} $(w_4, v_1)$; \textsc{partial compute} $(w_4, v_2)$; \textsc{partial compute} $(u_2, v_1)$; \textsc{partial compute} $(u_2, v_2)$; \textsc{delete} $w_4$; \textsc{delete} $u_2$; \textsc{partial compute} $(v_1, v_0)$; \textsc{partial compute} $(v_2, v_0)$; \textsc{save} $v_0$.

\subsection{Binary and $k$-ary trees} \label{app:k-ary}

For a more detailed discussion of the best pebbling strategies on binary trees, consider the tree on Figure~\ref{fig:gadgets} which has depth $d=3$ and $8$ leaf nodes. Let us denote the root node here $v_0$, the in-neighbors of the root $v_{1,1}$ and $v_{1,2}$ from left to right, the nodes on the third level $v_{2,1}$, ..., $v_{2,4}$ from left to right, and the leaf nodes $v_{3,1}$, ..., $v_{3,8}$ from left to right.

In RBP, computing any of the nodes $v_{2,i}$ here requires all the $r=3$ red pebbles. As such, after computing $v_{2,1}$, we need to save it to slow memory to free up the red pebbles and compute $v_{2,2}$. We can then load $v_{2,1}$ and compute $v_{1,1}$. We can delete the red pebbles from all of $v_{1,1}$, $v_{2,1}$ and $v_{2,2}$, but $v_{1,1}$ has to be saved before. We can then compute the value of $v_{1,2}$ in the same way with two more I/O operations. Finally, we have to reload $v_{1,1}$, compute $v_0$ and save $v_0$. We have used $6$ I/O operations apart from the $9$ trivial I/O steps.

In PRBP, we can compute any $v_{2,i}$ with only two red pebbles without any I/O, by loading one of the corresponding sources, partially computing the $v_{2,i}$, then deleting the source and loading the other one, and finishing the computation of $v_{2,i}$. Due to this, we can compute $v_{2,1}$ first, keep the red pebble on it and compute $v_{2,2}$ with the other two red pebbles. We can then compute $v_{1,1}$, but then we need to save it and remove its red pebble. We can then compute $v_{1,2}$ identically, load $v_{1,1}$, and compute and save $v_0$. Note that apart from the trivial $9$, we only used $2$ more I/O steps.

While $k$-ary trees are significantly less frequent in the pebbling literature, we briefly note that our observations on binary can also be generalized to this case. Once again, the most interesting behavior happens when $r=k+1$. Similarly to binary trees, the trivial I/O cost is $k^d +1$. Let us assume for convenience that $d \geq k$.

In RBP, with the exception of the bottom two levels, again for every node $v$ we have $(k-1)$ in-neighbors whose values will need to be saved and reloaded later to free up pebbles and compute the last subtree of $v$. Altogether, the number of non-trivial I/O operations in the optimal pebbling becomes
\[ 2 \cdot (k-1) \cdot \sum_{i=0}^{d-2} \, k^i = 2 \cdot (k^{d-1}-1) \, . \]

In PRBP, on the other hand, partial computations allow us to compute the bottom $r=k+1$ levels without any I/O steps. Every node $v$ above these levels will again require $2 \cdot (k-1)$ I/O steps, regardless of whether we compute $v$ partially in several iterations, or if we save $(k-1)$ of its in-neighbors and load them all later. The number of non-trivial I/O steps here is altogether
\[ 2 \cdot (k-1) \cdot \sum_{i=0}^{d-(k+1)} \, k^i = 2 \cdot (k^{d-k}-1) \, . \]

Note that the formulas above show that there is almost a factor $d^{k-1}$ difference in the non-trivial I/O cost. Altogether, adding the trivial cost of $k^d +1$, we get that $\texttt{OPT}_{RBP} = k^d + 2 k^{d-1} - 1$ and $\texttt{OPT}_{PRBP} = k^d + 2 k^{d-k}-1$.

\subsection{Proof of Lemma~\ref{lem_inset}} \label{app:clique}

We next prove that the \textsc{maxinset-vertex} problem is NP-hard. Note that any independent set of $G_0$ becomes a clique of the same size in the complement graph of $G_0$, and vice versa. As such, equivalently, we could define a \textsc{maxclique-vertex} problem that asks whether any of the cliques of maximum size in $G_0$ contain the node $v_0$. We use this clique interpretation of the problem in the proof below, since we find it more intuitive.

\begin{lemma} \label{lem_clique}
The \textsc{maxclique-vertex} problem is NP-hard.
\end{lemma}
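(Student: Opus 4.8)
The plan is to give a polynomial-time reduction from the classical \textsc{clique} problem: given a graph $G$ and an integer $k$, decide whether $\omega(G)\ge k$, where $\omega(\cdot)$ denotes the maximum clique size. Given such an instance $(G,k)$, I would construct a graph $G_0$ together with a distinguished vertex $v_0$ as follows: take a copy of $G$; add a fresh vertex $v_0$ made adjacent to every vertex of this copy (and to nothing else); and add a fresh clique $K$ on $k+1$ vertices with no edges to the rest of $G_0$. This construction is clearly computable in polynomial time, and $|V(G_0)| = |V(G)|+k+2$.

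The next step is to analyze the clique structure of $G_0$. Since there are no edges between $K$ and the remaining vertex set $V(G)\cup\{v_0\}$, every clique of $G_0$ lies entirely in $K$ or entirely in $V(G)\cup\{v_0\}$. Cliques of the second kind are precisely the cliques of $G$, optionally extended by $v_0$ (which is adjacent to all of $V(G)$); hence the largest clique of $G_0$ containing $v_0$ has size exactly $\omega(G)+1$, while the largest clique of $G_0$ overall has size $\omega(G_0)=\max(\omega(G)+1,\,k+1)$. Putting these two observations together, $G_0$ has a \emph{maximum} clique containing $v_0$ if and only if $\omega(G)+1=\omega(G_0)$, i.e.\ if and only if $\omega(G)+1\ge k+1$, i.e.\ $\omega(G)\ge k$. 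Thus \textsc{maxclique-vertex} on $(G_0,v_0)$ answers YES exactly when $(G,k)$ is a YES instance of \textsc{clique}, which establishes Lemma~\ref{lem_clique}. Finally, since complementation is a polynomial-time bijection between cliques of $\overline{G_0}$ and independent sets of $G_0$ of the same size, applying the same reduction to $\overline{G_0}$ immediately yields Lemma~\ref{lem_inset} for \textsc{maxinset-vertex}.

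I do not expect a substantial obstacle here; the one point needing care is the sizing of the blocking clique $K$: it must be large enough that when $\omega(G)<k$ no maximum clique of $G_0$ can include $v_0$, yet not so large that it also blocks $v_0$ when $\omega(G)\ge k$ — which is exactly why its size is chosen to be $k+1$ rather than $k$. I would also verify the boundary cases (small $k$, and the degenerate case $V(G)=\emptyset$) to confirm that the stated equivalence holds on every instance rather than only on generic ones.
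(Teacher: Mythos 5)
Your reduction is correct: the disjoint blocking clique $K$ on $k+1$ vertices together with the universal vertex $v_0$ gives exactly the equivalence ``$v_0$ lies in a maximum clique of $G_0$ iff $\omega(G)+1 \geq k+1$ iff $\omega(G) \geq k$,'' and the edge cases you flag ($k=0$, empty $G$) all check out. However, your route is genuinely different from the paper's. The paper does not give a many-one reduction at all; it gives a Turing reduction in the other direction of use: assuming a polynomial-time algorithm $\mathcal{A}$ for \textsc{maxclique-vertex}, it queries $\mathcal{A}$ on every vertex and iteratively deletes vertices --- either a vertex reported to be in no maximum clique, or (if every vertex is in some maximum clique) an arbitrary non-universal vertex $v$, whose deletion is safe because some non-neighbor $u$ of $v$ witnesses a maximum clique avoiding $v$ --- until the residual graph is a clique, thereby solving \textsc{maxclique} in polynomial time. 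Your Karp reduction from the decision version of \textsc{clique} is arguably preferable: it establishes hardness under the stronger (many-one) notion, is non-adaptive, and is shorter to verify, since the paper's pruning argument requires the slightly delicate observation that removing a non-universal vertex preserves \emph{at least one} maximum clique. What the paper's approach buys in exchange is that it needs no threshold parameter $k$ and no padding gadget; it works directly with the optimization problem via self-reducibility. Either argument suffices for the role the lemma plays in Theorem~\ref{th:nphard}, and your complementation remark correctly transfers the result to \textsc{maxinset-vertex} exactly as the paper does.
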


\begin{proof}
Assume that there is a polynomial-time algorithm $\mathcal{A}$ for \textsc{maxclique-vertex}; we show how to use it to find a maximum clique in polynomial time, which is known to be NP-hard.

Given a graph $G_0$ on $n$ nodes, if all nodes have degree $(n-1)$, then the entire graph is a clique, and we are finished. Otherwise, we can invoke $\mathcal{A}$ for each node of $G_0$ separately. If there are any nodes $v$ in $G_0$ that are not included in any maximum clique, we can remove these nodes (and their incident edges). This leaves all maximum cliques intact, so we can then recursively apply our algorithm on the remaining graph. On the other hand, if all nodes are included in a maximum clique, then consider an arbitrary node $v$ with degree strictly less than $(n-1)$. This implies that there is another node $u$ in $G_0$ that is not adjacent to $v$. Note that $u$ is also part of a maximum clique, and this does not include $v$; as such, even after removing $v$, at least one of the maximum cliques in the graph remains intact. Hence we can again remove $v$ (and its incident edges), and invoke our algorithm on the remaining graph.
\end{proof}

As discussed, the NP-hardness of \textsc{maxinset-vertex} and that of \textsc{maxclique-vertex} are equivalent, so Lemma~\ref{lem_inset} follows naturally.

\subsection{Proof details for Theorem \ref{th:nphard}} \label{app:np}

Below we discuss some further details for the proof of Theorem \ref{th:nphard}. Note that the main idea of the original construction is analyzed in more detail in~\cite{RBpebbling3, mpp}.

Firstly, we note that the original construction of~\cite{RBpebbling3} does not initially use the gadget of Proposition~\ref{prop:coll} to form the groups $H_1(u_0)$ and $H_2(u_0)$, but instead just takes the same source nodes, and draws an edge from all of them to a single node. In RBP this already ensures that we need to have a red pebble on all the nodes of the gadget simultaneously. The pebble collection gadget is then only used to adapt the construction to a setting where the in-degree of the DAG is bounded by a small constant. In contrast to this, in PRBP, the use of the gadget is essential; with partial computations, adding just a singe target node would not guarantee that we need to have a red pebble on all the inputs simultaneously.

Furthermore, we note that in the construction, we also make sure that there are at least $3n_0$ further anchor nodes in each $H_1(u_0)$ and $H_2(u_0)$ that only belong to this specific gadget. These nodes are used ensure that interrupting the computation of the gadget is a suboptimal strategy. Recall that we say that a gadget $H_1(u_0)$ or $H_2(u_0)$ is visited when all the red pebbles are on the nodes of the corresponding gadget. It is then always suboptimal to visit a specific gadget twice with another gadget visited in-between, since this requires an I/O cost of at least $3n_0$ to reload the anchor nodes. If we instead go over the chain, save all the nodes in it that are included in another gadget $H_2$ (and then reload them later when needed), and then save the sink node of the chain, then this results in an I/O cost of less than $3n_0$ altogether.

The main idea of selecting the chains large enough in the gadgets has already been outlined. The actual proof is a bit more technical with two length parameters $\ell$ and $\ell_0$. Each chain starts with an initial part of length $(r-2)$ to ensure that all I/O steps in the remaining parts are non-trivial; then a long part of length $\ell_0$, then a part of length $n_0$, and then another long part of length $\ell_0$. We ensure that any of the long parts is already sufficient in itself to guarantee suboptimality if we do not place all pebbles in the gadget, i.e.\ $\frac{\ell_0}{2(r-1)}$ is already higher than the cost of any reasonable solution that pebbles each gadget in one go. Altogether, $\ell=2 \ell_0 + n_0 + (r-2)$.

We then use the $n_0$ nodes in the middle of the chain of $H_1(u_1)$ to place a distinct node into the set $H_2(u_2)$ of any node $u_2$ that is incident to $u_1$ in $G_0$. Since the degree of $u_1$ in $G_0$ is at most $n_0$, we can indeed include a separate node of the chain for each such $u_2$. Note that due to the length of the chain, we need to have all red pebbles on the gadget both before and after the computation of these $n_0$ middle nodes, otherwise the cost becomes too high. In fact, we can assume that in a reasonable strategy, when we compute such a middle node, we save it immediately to slow memory, and only load it back later when visiting the corresponding gadget $H_2(u_2)$. This only adds an I/O cost of $2$ for the given middle node. On the other hand, if we use the middle node to partially compute any other nodes in $H_2(u_2)$, then we also need to save these values and load them later when visiting $H_2(u_2)$, which again costs at least $2$.

As such, any reasonable pebbling visits each gadget at most once, computes the entire chain, and saves (and later loads) the appropriate middle nodes. In this case, it is easy to see that the gadgets indeed establish the desired dependence relations: for any edge $(u_1, u_2)$ in $G_0$, if a node in the chain of $H_1(u_1)$ is included in $H_2(u_2)$, and a node in the chain of $H_1(u_2)$ is included in $H_2(u_1)$, then we must visit $H_1(u_1)$ before $H_2(u_2)$ and we must visit $H_1(u_2)$ before $H_2(u_1)$ so at most of the the pairs $H_1(u_1)$, $H_2(u_1)$ and $H_1(u_2)$, $H_2(u_2)$ can be visited consecutively. For a simpler analysis, we can assume that we also add a similar dependence from $H_1(u_1)$ to $H_2(u_1)$, to ensure that $H_1(u_1)$ is visited before $H_2(u_1)$.

Note that each group $H_1(u_0)$ and $H_2(u_0)$ needs to have at least a given size in order to ensure that we can pick distinct nodes for the distinct roles in each group. In particular, we need $3n_0$ anchor nodes, as discussed before. We need $b$ nodes that will be merged between $H_1(u_0)$ and $H_2(u_0)$. In $H_2(u_0)$, we need a specific node from the chain of $H_2(u_1)$ for each node $u_1$ in $G_0$ that is incident to $u_0$, which can be up to $n_0$. We also need $3$ further nodes for the nodes in $Z_1$ or $Z_2$. Altogether, this leads to a choice of $r-2=|H_1(u_0)|=|H_2(u_0)|=b+4n_0+3$.

As for the rest of our parameters, we can select $b$ as a large constant; it has to be larger than $|Z_1|=|Z_2|=3$ in order to ensure that the I/O gain from the consecutive visitation of a gadget-pair is still much larger than the worst-case cost of computing $w$, and hence regardless of $w$'s role, the optimal pebbling will still always follow a maximum independent set. This defines $r=b+4n_0+5$ as discussed above. Finally, if we pebble each gadget simultaneously, but cannot visit any pair consecutively, we get a non-trivial I/O cost of at most $n_0 \cdot b$ for reloading the $b$ merged nodes in the gadget-pairs, $2 \cdot |E_0|$ for having to reload the middle nodes in the chains (where $|E_0|$ is the number of edges in $G_0$), plus $6$ for reloading $Z_1$ and $Z_2$. We need to ensure that $\frac{\ell_0}{2(r-2)}$ is larger than the sum of this. Furthermore, we add another $r-1$ here, to ensure that even the non-trivial I/O cost incurred by the gadget is larger than this sum. Thus altogether, we want to guarantee
\[ \frac{\ell_0}{2(r-2)} - (r-1) > n_0 \cdot b + 2 \cdot |E_0| + 6 \, , \]
which can be satisfied with a choice of $\ell_0 = \Theta(r \cdot (n_0 + |E_0| + r))=\Theta(n_0^2 + n_0 \cdot |E_0|)$. Recall that then $\ell=2 \ell_0 + n_0 + (r-2)$, which is in the same magnitude.

Note that with this choice of parameters, the size of the whole construction is still polynomial in the size of the original graph $G_0$.

\subsection{Proof of Theorem \ref{th:inapprox}} \label{app:inapprox}

We then discuss further details for the proof of Theorem \ref{th:inapprox}. Recall that in order to adapt this proof, we only add a given number of auxiliary levels before each original level in the construction of~\cite{mpp}. The remaining part of the proof (the towers formed from the gadgets, and the edges connecting them) is unchanged.

Recall that a level gadget is a chain $(u_1, ..., u_{\ell})$ with $(u_i, u_{i+1}) \! \in \! E$ for $i \! \in \![ \ell\!-\!1 ]$. If the level $(u_1, ..., u_{\ell})$ is followed by the level $(v_1, ..., v_{\ell'})$, then $(u_i, v_i)\! \in \! E$ for $i \! \in \! [\min(\ell, \ell')]$. If $\ell \! > \! \ell'$, then also $(u_i, v_{\ell'}) \! \in \! E$ for $\ell' \! < \! i \! \leq \! \ell$. For an illustration of the gadgets, see Figure 3 in the full version of~\cite{mpp}. Our auxiliary levels are simply extra levels between the original levels of the construction, which always have the size of the next original level that follows. Adding these new levels does not affect the behavior in RBP at all: we can still transition between the original levels in the same way as before, just with more intermediate steps. As long as the total number of auxiliary levels is polynomial in the DAG size $n$, the whole reduction remains polynomial.

In contrast to RBP, if have have a level $(u_1, ..., u_{\ell})$ in PRBP, it is possible that the red pebble from e.g.\ the node $u_1$ is already removed before we place a red pebble on $u_{\ell}$, and as such, it is not easy to show that an optimal solution will always pebble a level in a consecutive fashion. Thus in the analysis for PRBP, we will instead say that our strategy is currently on a given original level $(u_1, ..., u_{\ell})$ of a tower if we already placed a red pebble on all of $u_1, ..., u_{\ell}$ at least once, but the same does not hold for the following level.

With this modified definition, and adding at least $1$ auxiliary level before each original level, we can already carry over the precedence relations established between different towers to PRBP. Whenever a level $L$ in a tower had edges towards some level $L'$ of another tower, we now draw those edges to the corresponding nodes of the auxiliary level $L'_a$ before $L'$. While we can now do partial computations to already place red pebbles on $L'_a$ before having reached $L$, we can only place red pebbles on the nodes of $L'$ once the incoming edges of a node below in $L'_a$ are already all marked, and hence we can only reach level $L'$ after reaching level $L$ in the other tower.

Recall that in general, if $L=(u_1, ..., u_{\ell})$ and $L'=(v_1, ..., v_{\ell'})$ are two consecutive levels in a tower, we also needs to ensure that if we are on level $L$, then we need to have at least $\ell$ pebbles in this tower also in PRBP. This holds easily when $\ell' \! \geq \! \ell$: for all indices $i \! \in \! [\ell]$, before reaching $L'$, a valid pebbling always has a pebble on the $i$-th node of either $L$, $L'$, an auxiliary level in-between, or the level after $L'$ (let us add an auxiliary level on the top of each tower to ensure that this argument also easily extends to the last levels). However, for $\ell' \! < \! \ell$, we need our extra modification here: we add $(\ell\!-\!\ell'\!+_{\!}2)$ auxiliary levels before $L'$ (with incoming edges from other towers still going to the lowermost one), and draw an edge from all of $u_{\ell'+1}, ..., u_{\ell}$ to the last node of each of these auxiliary levels. This ensures that partially computing the last nodes of each auxiliary level between $L$ and $L'$ (and hence possibly removing the red pebbles from $u_{\ell'+1}, ..., u_{\ell}$ before satisfying the external dependencies of $L'$) offers no advantage, since these last nodes together lock down more than $(\ell\!-\!\ell')$ pebbles. For the sake of precision, we can change our definition for these levels, and say that we are already on level $L'$ when having placed a red pebble on all nodes of the second auxiliary level; then once again, any pebbling has at least $\ell$ pebbles in the given tower while on level $L$.

As another technical detail, note that the last level $L$ of the main tower has incoming edges from the last level $L'$ in any other tower. To ensure that we cannot save pebbles by partially computing the values of $L$, we add a separate auxiliary level of $L$ for the incoming edges of each such $L'$. With $L$ being larger than any of the $L'$, this means that before reaching the penultimate level in the main tower, partially computing the auxiliary levels of $L$ (to free up pebbles from $L'$) only locks down more pebbles, and thus offers no advantage.

\section{Other model variants} \label{app:models}

We discuss a few more details on introducing partial computations to some other variants of RBP that appear in the literature.

\subsection{Re-computations}

An RBP variant with re-computation can be obtained by simply removing the one-shot restriction, i.e.\ allowing a node $v$ to be computed any number of times. This can be a beneficial strategy in many DAGs; if e.g.\ the inputs of a node $v$ are anyway kept in fast memory for some reason, and $v$ is needed later in our computation, then instead of saving $v$ to slow memory and loading it back later (I/O cost of $2$), we can simply delete the red pebble from $v$ and later compute $v$ again from its inputs at no I/O cost.

This variant also has the undesired effect that the optimal pebbling sequence might grow super-polinomially long because we repeatedly delete and re-compute nodes at no cost, making the problem PSPACE-complete~\cite{RBpebbling2}. Due to this, many works instead focus on an RBP variant that is is NP, such as the one-shot model.

Note that introducing re-computation into PRBP is a delicate matter, since the nodes are computed in multiple steps. One simple solution is to only allow re-computing the value of the node from scratch, i.e.\ we again need to aggregate all the inputs of $v$. This model extension is relatively easy to define; we briefly discuss its properties below. Alternatively, a more advance model could allow to only repeat some of the partial compute steps that $v$ consists of, and this can indeed be beneficial in some cases. In particular, it can happen that after aggregating most of the inputs into a node $v$, we are forced to save the almost-final value of $v$ into slow memory and load it later. Assume that at this point, $v$ has a final unmarked in-edge $(u,v)$, and two outputs $w_1$, $w_2$. Furthermore, assume that between the computation of $w_1$ and $w_2$, we are forced to remove the red pebble from $v$ again, but the red pebble on $u$ is still kept. In the original PRBP, we can mark $(u,v)$, then mark $(v,w_1)$, but then we have to save the final value of $v$ to slow memory and load it back later, at a cost of $2$. However, if we are allowed instead to keep the unfinished value of $v$ in slow memory, and re-compute only the edge $(u,v)$, then we could instead mark $(u,v)$ and $(v,w_1)$, delete the red pebble from $v$, and later when computing $w_2$, simply load the almost-finalized value of $v$ from slow memory again, and aggregate $u$ into $v$ again, at a cost of only $1$. Executing this sequence of steps is also possible in practice, and hence a PRBP extension could also decide to follow this approach, and allow to re-compute only specific parts of $v$. We leave it to future work to analyze the properties of such a model.

We briefly discuss a PRBP extension where we are only allowed to re-compute a node from scratch. For this, we add a new operation:
\begin{enumerate}[label=\arabic*), leftmargin=1.5em]
\setcounter{enumi}{4}
 \item \textsc{clear}: given a node $v$ that is neither a source nor a sink, remove all pebbles from $v$ and unmark all the in-edges of $v$.
\end{enumerate}
We could also require here that all in-edges of $v$ are already marked, since otherwise any partial compute step on $v$ before the clearing was useless, and hence the pebbling could be simplified. Similarly, the restriction that $v$ is neither a source or a sink may be removed.

The DAG in Figure~\ref{fig:decreased_opt} provides a nice example where re-computation allows us to save I/O operations in RBP. In particular, with $r=4$ here, we can load $u_0$, compute $u_1$, $w_1$ and $w_2$, then delete $u_1$, compute $w_3$, then re-compute $u_1$, compute $u_2$, and finish the pebbling as in the original model. This results in $\texttt{OPT}_{RBP}=2$.

To adapt this example to RBP with re-computation, we can add another layer of two nodes $z_1$, $z_2$ after $u_0$, i.e.\ $u_0$ now has edges to only $z_1$, $z_2$, while $z_1$, $z_2$ both have edges to $u_1$, $u_2$. Intuitively, this ensures that we need to keep two red pebbles in fast memory in order to re-compute $u_1$, which is not possible while using $3$ red pebbles to compute $w_3$. The same approach can be used to adapt Proposition~\ref{prop:reduced}: our modified gadgets now span from $z_1$, $z_2$ until $v_1$, $v_2$. Note that PRBP was incurring only the trivial I/O cost anyway, so it remains unaffected; this settles Propositions~\ref{prop:example} and~\ref{prop:reduced}.

In contrast to this, some of our other claims carry over to this extended model without changes. In Proposition~\ref{prop:mat_vec}, the internal nodes all have outdegree $1$, so re-computation does not decrease the cost here in either RBP or PRBP. With source nodes not being re-computable, the optimal pebbling strategies in the DAGs of Figure~\ref{fig:gadgets} also remain unchanged. With the construction of Theorem~\ref{th:nphard} consisting of pebble collection gadgets, re-computation also does not affect the optimal cost in this case. We note here that the original works of ~\cite{RBpebbling3, mpp} consider yet another RBP variant where source nodes do not have to be loaded, but can be computed freely anytime. Due to this, their work adds further gadgets in front of the sources to discourage the re-computation of these values. Since we work with the original RBP variant of~\cite{RBpebbling1} where source nodes are not computed, this modification is not required in our case.

Regarding $S$-partitions, Lemma~\ref{th:s-part} of course holds for PRBP with re-computation, but the rest of our observations are more challenging to adapt. In particular, when converting a pebbling strategy to a partition, we simply assigned the nodes/edges into the subsequence where they were computed. The same assignment rule becomes more technical when values can be computed multiple times. While the base ideas of adapting these concepts to re-computation have already been outlined for standard $S$-partitions in~\cite{RBpebbling1}, we leave it to future work to present and analyze the adaptation of the tools and proofs in Section~\ref{sec:bounds} to re-computations.

Similarly, proof construction for Theorem~\ref{th:inapprox} in \cite{mpp} was only designed for one-shot RBP, so establishing inapproximabilty results for PRBP with re-computation is again a non-trivial task.

\subsection{Sliding pebbles}

In RBP with sliding pebbles, we also allow another kind of compute step: if all inputs of a node $v$ have a red pebble, then we can move the red pebble from one of the inputs to $v$. This RBP model was rarely analyzed on its own, but it is often mentioned because many results for the so-called black pebble game are developed in a variant that allows sliding pebbles, and these can often be transferred to RBP if we allow sliding.

One motivation for this RBP variant is to capture in-place computations: intuitively, we can think of sliding as the node $v$ being computed in the same memory location that was storing one of its inputs before. Due to this, combining this version with partial computations is less interesting: in PRBP, the partial compute step essentially already assumes that the next value is combined with the aggregated value in an in-place way. The only exception to this is when the very first input is aggregated into $v$, which would only require $1$ red pebble in an in-place interpretation, but needs $2$ red pebbles simultaneously in PRBP; however, this does not play a major role for most DAGs.

An interesting question, however, is how RBP compares to PRBP if we allow sliding pebbles in RBP, or on a more abstract level, how much of the cost difference between RBP and PRBP is due to the in-place nature of computations, and how much is due to the ability to compute nodes partially? We briefly revisit our claims in Section~\ref{sec:io_relate} to show that in-place computation plays a minor role only, and instead partial computations are the relevant factor.

Note that the DAG in Figure~\ref{fig:decreased_opt} is actually an example where sliding pebbles allow us to also have $\texttt{OPT}_{RBP}=2$, by sliding a red pebble from $w_1$ to $w_3$ when computing $w_3$. However, the example can be easily adjusted to this model version by adding a third node $w_0$ which has an incoming edge from $u_1$ and an outgoing edge to $w_3$. In this new graph, even the sliding model needs $3$ red pebbles simultaneously for pebbling $w_3$, so our original arguments hold again; on the other hand, PRBP can still pebble the DAG wihtout any extra I/O. The same modification also covers Proposition~\ref{prop:reduced}.

The proof of Proposition~\ref{prop:mat_vec} holds again if we restrict ourselves to $r \geq 2m-1$. In the zipper gadget and pebble collection gadget, we obtain the same example behaviors as before if we choose $r=d+1$ instead. Adapting the pebble collection gadget also covers the case of Theorem~\ref{th:nphard} if $r$ is decreased by $1$.

The only significant difference is the case of binary trees, where sliding pebbles actually also allow us to compute each $3$-node subtree with only $2$ red pebbles, and hence the optimal number of non-trivial I/O steps in RBP also becomes $2^{d-1}-2$, similarly to PRBP. As such, Proposition~\ref{prop:bintree} does not carry over to the case when RBP allows sliding pebbles. Note that this is only the case for binary trees; in e.g.\ a ternary tree, a $4$-node subtree can still be computed with $2$ red pebbles (without extra I/O) in PRBP, but already requires $3$ red pebbles in RBP with pebble sliding. As such, for $k$-ary trees with $k \geq 3$, we again have $\texttt{OPT}_{PRBP} < \texttt{OPT}_{RBP}$.

\subsection{Computation costs}

In RBP with computation costs, we assume that compute steps also incur a cost of some $\varepsilon>0$~\cite{RBpebbling3, RBpebbling8, mpp}. This parameter $\varepsilon$ is typically considered a constant, and much smaller than the cost $1$ of the I/O operations. This model guarantees that $\texttt{OPT}_{RBP} \geq \varepsilon \cdot n$, which means that our claims considering the ratio of two costs, such as the multiplicative part of Theorem~\ref{th:inapprox} or the linear factor in Proposition~\ref{prop:reduced}, do not carry over to this variant.

However, there is also a fundamental modeling question when trying to compare $\texttt{OPT}_{RBP}$ and $\texttt{OPT}_{PRBP}$ in this model, namely, how compute costs are adapted to PRBP. If we simply assign a cost of $\varepsilon$ to every partial compute step, then (in the one-shot model) this will add up to a cost of $\varepsilon \cdot |E|$ in PRBP, in contrast to $\varepsilon \cdot n$ in RBP, and hence the two costs are not directly comparable anymore. Another option is to assign a cost of $\varepsilon / \text{deg}_{in}(v)$ to each partial compute step on an edge $(u,v)$, where $\text{deg}_{in}(v)$ is the in-degree of $v$. This leaves the total costs identical, but it means that different partial compute steps have different weights, which might not be desired.

Note that in general, this variant leads to a different model interpretation where we are not only trying to capture I/O costs, but the `total cost' of executing the computation. This has limited relevance in one-shot RBP where the total computation cost is always $\varepsilon \cdot n$, but it is very natural in other settings, e.g.\ with re-computation if we want to understand the trade-off between the extra compute steps and the I/O cost saved, or in parallel versions of RBP~\cite{mpp} where it offers a natural way to motivate a balance between distributing workload and minimizing communication costs.

\subsection{No deletion}

Another RBP variant assumes that red pebbles can never be deleted, but they instead get replaced by the blue pebble when saved to slow memory~\cite{RBpebbling2, RBpebbling3}. This extension has less practical relevance, and is instead rather a tool to ensure that the optimal pebbling sequence has polynomial length (as opposed to RBP with re-computation, where this is not guaranteed). PRBP could be adapted to this setting by changing the deletion rule such that a dark red pebble may never be removed from a node by any other means than saving it. Similarly to RBP, this ensures that $\texttt{OPT}_{PRBP} \geq n-r$ in any DAG, since all nodes are saved at least once to slow memory, except for possibly the $r$ nodes that have a red pebble in the final state.

\end{document}